\newtheorem{theorem}{Theorem}
\newtheorem{corollary}[theorem]{Corollary}
\newtheorem{proposition}[theorem]{Proposition}
\newtheorem{lemma}[theorem]{Lemma}
\newcommand{\eps}{\varepsilon}
\newcommand{\area}{\operatorname{area}}
\newcommand{\conv}{\operatorname{conv}}
\newcommand{\dist}{\operatorname{dist}}
\newcommand{\ie}{{that is}}
\newcommand{\figurenames}{{\figurename}s}
\newcommand{\new}[1]{\textcolor{red}{#1}}
\title{Rainbow polygons for colored point sets in the plane\thanks{A preliminary version was presented at the 18th Spanish Meeting on Computational Geometry (2019).}
%
}
\author{David Flores-Pe\~naloza\thanks{Departamento de Matem\'aticas, Facultad de Ciencias, Universidad Nacional Aut\'onoma de M\'exico, Mexico. Email: \texttt{dflorespenaloza@ciencias.unam.mx}}
\and Mikio Kano\thanks{Ibaraki University, Hitachi, Ibaraki, Japan. Email: \texttt{mikio.kano.math@vc.ibaraki.ac.jp}}
\and Leonardo Mart\'{\i}nez-Sandoval\thanks{Institut de Math\'ematiques de Jussieu - Paris Rive Gauche (UMR 7586), Sorbonne Universit\'e, Paris, France and Departamento de Matem\'aticas, Facultad de Ciencias, Universidad Nacional Aut\'onoma de M\'exico, Mexico. Email: \texttt{leomtz@im.unam.mx}}
\and David Orden\thanks{Departamento de F\'\i sica y Matem\'aticas, Universidad de Alcal\'a, Alcal\'a de Henares, Spain. Email: \texttt{david.orden@uah.es}}
\and Javier Tejel\thanks{Departamento de M\'etodos Estad\'\i sticos, IUMA, Universidad de Zaragoza, Zaragoza, Spain. Email: \texttt{jtejel@unizar.es}}
\and Csaba D. T\'oth\thanks{Department of Mathematics, California State University Northridge, Los Angeles, CA; and Department of Computer Science, Tufts University, Medford, MA, USA. Email: \texttt{csaba.toth@csun.edu}}
\and Jorge Urrutia\thanks{Instituto de Matem\'aticas, Universidad Nacional Aut\'onoma de M\'exico, Mexico. Email: \texttt{urrutia@matem.unam.mx}}
\and Birgit Vogtenhuber\thanks{Institute of Software Technology, Graz University of Technology, Graz, Austria. Email: \texttt{bvogt@ist.tugraz.at}}
}
\date{}
\begin{document}

\maketitle


\begin{abstract}
Given a colored point set in the plane, a \emph{perfect rainbow polygon} is a simple polygon that \emph{contains} exactly one point of each color, either in its interior or on its boundary. Let $\operatorname{rb-index}(S)$ denote the smallest size of a perfect rainbow polygon for a colored point set $S$, and let $\operatorname{rb-index}(k)$ be the maximum of  $\operatorname{rb-index}(S)$ over all $k$-colored point sets in general position; that is, every $k$-colored point set $S$ has a perfect rainbow polygon with at most $\operatorname{rb-index}(k)$ vertices. In this paper, we determine the values of $\operatorname{rb-index}(k)$ up to $k=7$, which is the first case where $\operatorname{rb-index}(k)\neq k$, and we prove that for $k\ge 5$,
\[
\frac{40\lfloor (k-1)/2 \rfloor -8}{19} 
\leq\operatorname{rb-index}(k)\leq 10 \bigg\lfloor\frac{k}{7}\bigg\rfloor + 11.
\]
Furthermore, for a $k$-colored set of $n$ points in the plane in general position, a perfect rainbow polygon with at most $10 \lfloor\frac{k}{7}\rfloor + 11$	
vertices can be computed in $O(n\log n)$ time.
\end{abstract}


\section{Introduction}\label{sec:intro}

Given a colored point set in the plane, in this paper we study the problem of finding a simple polygon containing exactly one point of each color. Formally, the problem we consider is the following. Let $k\ge 2$ be an integer and let $\{1,\ldots , k\}$  be $k$ distinct colors. For every $1\le i \le k$, let $S_i$ denote a finite set of points of color~$i$ in the plane. We always assume that $S_i$ is nonempty and finite for all $i\in \{1,\ldots , k\}$, and that ${S}=\bigcup_{i=1}^k S_i$ is in general position (\ie, no three points of $S$ are collinear).

For a simple polygon $P$ with $m$ vertices (or a simple $m$-gon) and a point $x$ in the plane, we say that $P$ \emph{contains} $x$ if $x$ lies in the interior or on the boundary of $P$.
Given a $k$-colored point set ${S}=\bigcup_{i=1}^k S_i$ and a simple polygon $P$ in the plane, we call $P$ a \emph{rainbow polygon} for ${S}$ if $P$ contains at most one point of each color; and $P$ will be called a \emph{perfect rainbow polygon} if it contains exactly one point of each color. The \emph{perfect rainbow polygon problem} for a colored point set $S$ is that of finding a perfect rainbow polygon with the minimum number of vertices.

One can easily check that a perfect rainbow polygon always exists for a colored point set. A way of constructing such a polygon is described below, using the following well-known property for a plane tree: From a tree $T$ embedded in the plane with straight-line edges, a simple polygon can be built by traversing the boundary of the unbounded face of $T$, placing a copy of a vertex infinitesimally close to that vertex each time it is visited, and connecting the copies according to the traversal order. One can imagine this simple polygon as the ``thickening''
of the tree. Thus, for a colored point set $S$, to build a perfect rainbow polygon we can choose one point of each color, form a star connecting one of these points to the rest, and thicken the star; see \figurename~\ref{fig:intro1}. Note that the simple polygon obtained in this way can be as close to the star as we wish, so that it contains no other points in $S$, apart from the points in $S$ that we have chosen.

However, finding a perfect rainbow polygon of minimum size for a given colored point set (where the size of a polygon is the number of its vertices) is in general much more difficult. We believe that this problem is NP-hard. Therefore, we focus on giving combinatorial bounds for the size of minimum perfect rainbow polygons. Let $\operatorname{rb-index}(S)$ denote the \emph{rainbow index} of a colored point set $S$; that is, the smallest size of a perfect rainbow polygon for $S$. We then define the \emph{rainbow index} of $k$, denoted by $\operatorname{rb-index}(k)$, to be the largest rainbow index among all the $k$-colored point sets $S$; that is,
\begin{align}
 \operatorname{rb-index}(k) = \max \ \{ \operatorname{rb-index}(S): \mbox{$S$ is a $k$-colored point set} \}. \label{def-1}
\end{align}
In other words, $\operatorname{rb-index}(k)$ is the smallest integer such that, for every $k$-colored point set $S$, there exists a perfect rainbow polygon of size at most $\operatorname{rb-index}(k)$.

The two main results in this paper are the following. First, we determine the values of the rainbow index up to $k=7$, which is the first case where $\operatorname{rb-index}(k)\neq k$, namely
\[
\begin{array}{|c|c|c|c|c|c|}
\hline
k & 3 & 4 & 5 & 6 & 7 \\
\hline
\operatorname{rb-index}(k) & 3 & 4 & 5 & 6 & 8 \\
\hline
\end{array}
\]
Second, we prove the following lower and upper bounds for the rainbow index
\[
\frac{40\lfloor (k-1)/2 \rfloor -8}{19} \leq\operatorname{rb-index}(k)\leq 10 \bigg\lfloor\frac{k}{7}\bigg\rfloor + 11. 
\]
Furthermore, for a $k$-colored set of $n$ points in the plane, a perfect rainbow polygon
of size meeting these upper bounds can be computed in $O(n\log n)$ time.

\begin{figure}[htb]
	\centering
	\subfloat[]{
		\includegraphics[scale=0.48,page=1]{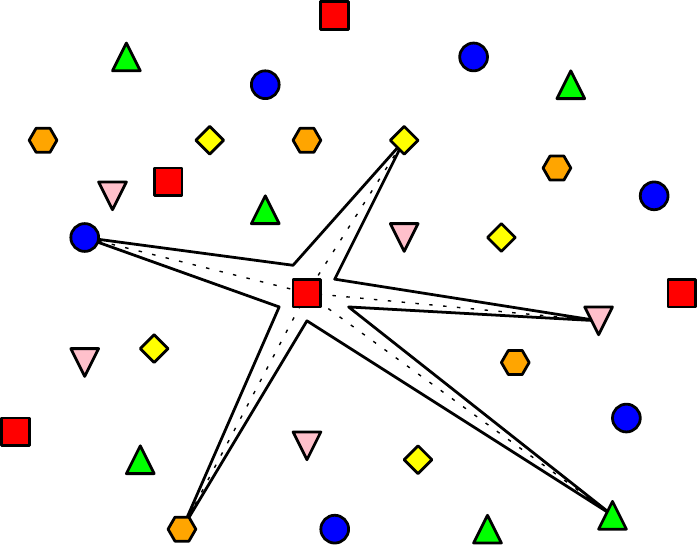}
		\label{fig:intro1}
	}~~~~~~~~~~~~~~~~~~~~
	\subfloat[]{
		\includegraphics[scale=0.52,page=2]{Figures.pdf}
		\label{fig:intro2}
	}
	\caption{(a) Thickening a tree to obtain a perfect rainbow polygon. Different colors are represented by different geometric objects. (b) A noncrossing covering tree for the eight black points that can be partitioned into five segments, $s_1=u_1u_8$, $s_2=u_2u_3$, $s_3=u_2u_4$, $s_4=u_2u_5$, and $s_5=u_6u_7$; and two forks, $u_6$ with multiplicity 1 and $u_2$ with multiplicity 2.}
	\label{fig:intro}
\end{figure}

The rainbow index for small values of $k$ is analyzed in Section~\ref{sec:small}. In Sections~\ref{sec:upper} and~\ref{sec:lower}, we provide our upper and lower bounds for the rainbow index, respectively. These bounds are based on the analysis of the complexity of noncrossing covering trees for sets of points, under a new measure defined in this paper. This measure and the relationship between perfect rainbow polygons and noncrossing covering trees is given in Section~\ref{sec:treesandpolygons}.

\subsection*{Related previous work}

Starting from the celebrated Ham-Sandwich theorem, a considerable amount of research about discrete geometry on colored point sets (or mass distributions) has been done. For instance, given $cg$ red points and $dg$ blue points in the plane, where $c$, $d$, and $g$ are positive integers, the Equitable Subdivision Theorem establishes that there exists a subdivision of the plane into $g$ convex regions such that each region contains precisely $c$ red points and $d$ blue points~\cite{BKS2000, Sakai2002}. It is also known that every $d$-colored set of points in general position in $\mathbb{R}^d$ can be partitioned into $n$ subsets with disjoint convex hulls such that the set of points and all color classes are partitioned as evenly as possible~\cite{BRSZ2019}. For a wide range of geometric partitioning results, the reader is referred to~\cite{BS2017, BHK2015, BK2012, BKS2000, BRSZ2019, HKV2017, KK2003, KK2017, Matousek, Sakai2002} and the references therein.

In addition to geometric partitions, for colored points in the plane some research focuses on geometric structures covering the points in some specific way. For instance, covering the colored points with noncrossing monochromatic matchings~\cite{DK2001}, noncrossing heterochromatic matchings~\cite{KSU2014}, noncrossing alternating cycles and paths~\cite{KPT2008}, noncrosing alternating spanning trees~\cite{BBMS2018, KSU2014} or noncrossing $K_{1,3}$ stars~\cite{Abrego2019}. In other papers, the main goal is selecting $k$ points with $k$ distinct colors (a rainbow subset) from a $k$-colored point set such that some geometric properties of the rainbow subset are maximized or minimized. Rainbow subsets with maximum diameter are investigated in~\cite{Fan2014, Ju2013}, with minimum diameter in~\cite{FX2014, Pruente2019}, and rainbow subsets optimizing matchings under several criteria are studied in~\cite{Bereg2019}. In addition, several traditional geometric problems for uncolored point sets become NP-hard for colored point sets. For instance, the following problems are NP-complete~\cite{Ju2013}: Computing a rainbow subset minimizing (maximizing) the length of its minimum spanning tree, computing a rainbow subset minimizing its convex hull, or computing a rainbow subset maximizing the distance between its closest pair.

Given a 3-colored point set $R\cup B\cup G$ consisting of red, blue, and green points in the plane, a well-known result is that there exists an empty heterochromatic triangle, where the three vertices have distinct colors~\cite{DHKS2003}. In particular, a heterochromatic triangle of minimum area cannot contain any other point from $R\cup B \cup G$ in its interior, hence its interior is empty, and its boundary contains exactly one point of each color. This implies that $\operatorname{rb-index}(3)=3$. Related work~\cite{BHK2015} deals with colored lines instead of colored points, showing that in an arrangement of 3-colored lines, there always exists a line segment intersecting exactly one line of each color. Aloupis et al.~\cite{Aloupis2011} study the problem of coloring a given point set with $k$ colors so that every axis-aligned strip containing sufficiently many points contains a point from each color class.

\section{Covering trees versus perfect rainbow polygons}\label{sec:treesandpolygons}

Given a set of (monochromatic) points, in this section we derive a lower bound for the size of simple polygons that contain the given points and have arbitrarily small area. We also provide a lower bound for the size of a perfect rainbow polygon for some colored point sets.

A noncrossing \emph{covering tree} for a set $S$ of points in the plane is a noncrossing geometric tree (\ie, a plane straight-line tree) such that every point of $S$ lies at a vertex or on an edge of the tree; see \figurename~\ref{fig:intro2}. Let $T$ be a noncrossing covering tree whose vertices can be collinear. Similarly to~\cite{DumitrescuGKT14}, we define a \emph{segment} of $T$ as a path of collinear edges in $T$. Two segments of $T$ may cross at a vertex of degree 4 or higher; we are interested in pairwise noncrossing segments. Any vertex of degree two and incident to two collinear edges can be suppressed; consequently, we may assume that $T$ has no such vertices.

Let $\mathcal{M}$ be a partition of the edges of $T$ into the minimum number of pairwise noncrossing segments.
Let $s=s(T)$ denote the number of segments in $\mathcal{M}$.
A \emph{fork} of $T$ (with respect to~$\mathcal{M}$) is a vertex $v$ that lies in the interior of a segment $ab\in \mathcal{M}$ and is an endpoint of another segment in $\mathcal{M}$. The \emph{multiplicity} of a fork $v$ is 2 if it is the endpoint of two segments that lie on opposite sides of the supporting line of $ab$; otherwise its multiplicity is 1. See \figurename~\ref{fig:intro2} for an example.

Let $t=t(T)$ denote the sum of multiplicities of all forks in $T$ with respect to $\mathcal{M}$.
We express the number of vertices in a polygon that encloses a noncrossing covering tree $T$ in terms of the parameters $s$ and $t$. If all edges of $T$ are collinear, then $s=1$ and $T$ can be enclosed in a triangle.
The following lemma addresses the case that $s\geq 2$.

\begin{lemma}\label{lem:sep}
Let $T$ be a noncrossing covering tree and $\mathcal{M}$ a partition of the edges into the minimum number of pairwise noncrossing segments.
If $s\geq 2$ and $t\geq 0$, then for every $\eps>0$, there exists a simple polygon $P$ with $2s+t$ vertices such that $\area(P)\leq \eps$ and $T$ lies in $P$.
\end{lemma}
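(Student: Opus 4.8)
The plan is to prove the lemma by induction on the number of segments $s$, constructing the enclosing polygon $P$ by a careful ``thickening'' of the tree $T$ in which each segment contributes two vertices and each fork contributes extra vertices according to its multiplicity. First I would handle the base case $s=2$: the two segments $s_1,s_2\in\mathcal{M}$ are noncrossing, so they either share an endpoint, or one endpoint of one lies in the relative interior of the other (a fork of multiplicity $1$), or — but this cannot happen for $s=2$ since a multiplicity-$2$ fork needs two segments on opposite sides, which would already be three segments in $\mathcal{M}$ together with $ab$; so for $s=2$ we have $t\in\{0,1\}$. In the case $t=0$ the tree is (combinatorially) a path of two collinear-free segments, and a thin quadrilateral around it has $4=2s+t$ vertices; in the case $t=1$ the configuration is a ``T'' or ``Y'' shape, and one can enclose it in a thin pentagon with $5=2s+t$ vertices. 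In each case one shrinks the polygon toward $T$ to make $\area(P)\le\eps$.

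For the inductive step, suppose $s\ge 3$ and the statement holds for all trees with fewer segments. I would pick a leaf segment of the segment-incidence structure: since the segments of $\mathcal{M}$ form a connected noncrossing configuration, there is a segment $\sigma=ab\in\mathcal{M}$ that, when removed, leaves the remaining edges forming a noncrossing covering tree $T'$ of the remaining vertices, with $s(T')=s-1$. The segment $\sigma$ is attached to $T'$ at exactly one point $p$ (either an endpoint of $\sigma$ that is a vertex of $T'$, or an interior point of $\sigma$ that coincides with a vertex of $T'$ — but the latter would make $\sigma$ two segments, so $p$ is an endpoint of $\sigma$). The attachment vertex $p$ may or may not be a fork of $T'$, and attaching $\sigma$ may create a new fork, or may raise the multiplicity of an existing fork from $1$ to $2$. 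Either way, one checks that $t(T) = t(T') + (\text{number of new units of multiplicity created})$, and that this increment is exactly $1$ if $\sigma$ attaches at an interior point of some segment on one side (new fork of multiplicity $1$, or bump from $1$ to $2$), and $0$ if $\sigma$ attaches at an endpoint of $T'$'s segment structure. Apply the inductive hypothesis to $T'$ with a sufficiently small $\eps'$ to get a polygon $P'$ with $2(s-1)+t(T')$ vertices; then locally modify $P'$ near the attachment point $p$ to route the boundary out along one side of $\sigma$, around its far endpoint, and back, adding $2$ vertices for $\sigma$ itself plus $0$ or $1$ vertices at $p$ to account for the change in $t$. This yields $2s+t(T)$ vertices, and choosing $\eps'$ small relative to $\eps$ and to the geometry of $\sigma$ keeps the total area below $\eps$ and preserves simplicity.

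The main obstacle is the bookkeeping at the attachment/fork vertices: I must verify that the local surgery near $p$ can always be done while keeping $P$ simple (no self-intersections introduced, and the new excursion along $\sigma$ does not cross any previously built part of $P'$, which is possible because $P'$ hugs $T'$ within a tiny tube and $\sigma$ emanates into a region free of other edges), and that the vertex count increments match $t$ exactly. In particular, the multiplicity-$2$ case — where the boundary must pass around a vertex with segments sprouting on both sides of a through-segment — needs the two extra vertices, and I must confirm that the ``opposite sides'' definition of multiplicity $2$ is precisely what forces the boundary to make two separate detours there rather than one. Handling degree-$4$-or-higher vertices where two segments of $\mathcal{M}$ cross ``at a vertex'' (allowed by the setup, as these are not forks) requires a small argument that the thickened boundary can still navigate such a crossing point without extra cost; I would treat this by observing that a crossing of two segments contributes no fork and the four local branches can be connected pairwise by the boundary in a planar way using only the four vertices already charged to those two segments.
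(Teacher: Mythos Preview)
Your inductive strategy is different from the paper's, and as written it has a concrete gap. You assert that the attachment point $p$ of the removed leaf segment $\sigma$ must be an \emph{endpoint} of $\sigma$, dismissing the alternative with ``the latter would make $\sigma$ two segments.'' That inference is wrong: an interior vertex of a segment is precisely a fork, and $\sigma$ remains one segment in $\mathcal{M}$. For a small counterexample, take collinear edges $\ell_1a,\,a\ell_2$, collinear edges $\ell_3b,\,b\ell_4$, and an edge $ab$ not collinear with either line. Here $\mathcal{M}=\{\ell_1\ell_2,\ \ell_3\ell_4,\ ab\}$, $s=3$, $t=2$. The only segments whose removal keeps the remaining edges connected are $\ell_1\ell_2$ and $\ell_3\ell_4$, and each is attached to the rest at its \emph{interior} vertex ($a$ resp.\ $b$). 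Your case analysis never treats this situation, so the surgery-and-bookkeeping step is incomplete. (It can be repaired: attaching $\sigma$ through an interior point turns a leaf of $T'$ into a multiplicity-$1$ fork and adds one vertex at each of the two far ends of $\sigma$, giving $+3=2\cdot 1+1$; but you need to say this.) Two smaller issues: you do not justify that a leaf segment always exists (it does, via an acyclicity argument on the segment--junction incidence structure, but it is not automatic), and your closing remark about two segments of $\mathcal{M}$ crossing at a degree-$4$ vertex misreads the hypotheses---$\mathcal{M}$ is pairwise \emph{noncrossing}, so that configuration is instead realised as a multiplicity-$2$ fork.

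For comparison, the paper avoids all of this with a one-shot construction: around each vertex $v$ of $T$ it places a polygon vertex on the bisector of every sector of a tiny disk $D_v$ that does \emph{not} subtend a flat angle, and links them in cyclic order along $T$. The count $2s+t$ then drops out of a local bijection in each $D_v$ between polygon vertices and (segment endpoints at $v$) $+$ (fork multiplicity at $v$), using only that minimality of $\mathcal{M}$ forces two collinear edges at $v$ to lie in the same segment. No induction, no leaf-segment existence, no surgery cases.
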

\begin{proof}
Let $\delta>0$ be a sufficiently small constant specified below. For every vertex $v$ of $T$, let $D_v$ be a disk of radius $\delta$ centered at $v$. We may assume that $\delta>0$ is so small that the disks $D_v$, $v\in V(T)$, are pairwise disjoint, and each $D_v$ intersects only the edges of $T$ incident to $v$. Then the edges of $T$ incident to $v$ partition $D_v$ into $\deg(v)$ sectors.
If $\deg(v)\geq 3$, at most one of the sectors subtends a flat angle (\ie, an angle equal to $\pi$).
If $\deg(v)\leq 2$, none of the sectors subtends a flat angle by assumption.
Conversely, if one of the sectors subtends a flat angle, then the two incident edges are collinear;
they are part of the same segment (by the minimality of $\mathcal{M}$), and hence $v$ is a fork of multiplicity 1.

In every sector that does not subtend a flat angle, choose a point in $D_v$ on the angle bisector. By connecting these points in counterclockwise order along $T$, we obtain a simple polygon $P$ that contains $T$. Note that $P$ lies in the $\delta$-neighborhood of $T$, so $\area(P)$ is less than the area of the $\delta$-neighborhood of $T$. The $\delta$-neighborhood of a line segment of length $\ell$ has area $2\ell\delta +\pi\delta^2$. The $\delta$-neighborhood of $T$ is the union of the $\delta$-neighborhoods of its segments. Consequently, if $L$ is the sum of the lengths of all segments in $\mathcal{M}$, then the area of the $\delta$-neighborhood of $T$ is bounded from above by $2L\delta+s\pi\delta^2$, which is less than $\eps$ if $\delta>0$ is sufficiently small.

It remains to show that $P$ has $2s+t$ vertices; that is, the total number of sectors whose angle is not flat is precisely $2s+t$.
We define a perfect matching between the vertices of $P$ and the set of segment endpoints and forks (with multiplicity) in each disk $D_v$ independently for every vertex $v$ of $T$.
If $v$ is not a fork, then $D_v$ contains $\deg(v)$ vertices of $P$ and $\deg(v)$ segment endpoints.
If $v$ is a fork of multiplicity 1, then $D_v$ contains $\deg(v)-1$ vertices of $P$ and $\deg(v)-2$ segment endpoints.
Finally, if $v$ is a fork of multiplicity 2, then $D_v$ contains $\deg(v)$ vertices of~$P$ and $\deg(v)-2$ segment endpoints.
In all cases, there is a one-to-one correspondence between the vertices in $P$ lying in $D_v$ and the segment endpoints and forks (with multiplicity) in $D_v$.
Consequently, the number of vertices in $P$ equals the sum of the multiplicities of all forks plus the number of segment endpoints, which is $2s+t$, as required.
\end{proof}

Next, we establish a relation between covering trees and points sets, assuming a stronger notion of general position.
A point set is in \emph{strong general position} if
there is no nontrivial algebraic relation between the coordinates of the points.

\begin{lemma}\label{lem:tree}
Let $S$ be a finite set of points in the plane in strong general position 
and let $B$ be an axis-aligned bounding box of $S$.
Then there exists an $\eps>0$ such that \new{for every} simple polygon $P$ with $m$ vertices \new{such that $S\subset P$} and $\area(P\cap B)\leq\eps$, \new{there is} a noncrossing covering tree $T$ \new{of $S$} and a partition of the edges into pairwise noncrossing segments such that $2s+t\leq m$.
\end{lemma}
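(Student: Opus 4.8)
The plan is to construct the covering tree $T$ directly from a small-area polygon $P$ by "collapsing" $P$ onto a tree and then accounting for vertices. Start with a simple polygon $P$ with $m$ vertices, $S\subset P$, and $\area(P\cap B)\le\eps$ for an $\eps$ to be chosen at the end (depending on $S$ and $B$ through the minimum pairwise distances and the minimum distance between a point of $S$ and the convex hull of the others, etc.). Since $\area(P\cap B)$ is tiny, $P$ must be a thin "corridor": intuitively $P$ is a small neighborhood of its medial axis restricted to $B$. First I would take the part $P\cap B$ (note $S\subset B$, so all points of $S$ are still inside), and observe that because its area is at most $\eps$ while it must contain all $|S|$ points, each point $p\in S$ lies within distance $O(\sqrt\eps)$ of $\partial P$; more importantly, I would pass to the straight skeleton (or medial axis) of the simple polygon $P$, which is a tree-like $1$-complex $T_0$, and then prune and straighten it into a noncrossing geometric tree $T$ that still covers $S$. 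The key geometric fact is that if $\area(P\cap B)\le\eps$ with $\eps$ small, then $T_0$ restricted near the points of $S$ consists of nearly-straight corridors, so each $p\in S$ can be "snapped" onto an edge or vertex of the straightened tree without creating crossings and without moving other points off the tree — this uses strong general position to rule out the degenerate coincidences that would force extra vertices.

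The heart of the argument is the vertex count $2s+t\le m$, which should be obtained by reversing the bijection in the proof of Lemma~\ref{lem:sep}. Each vertex of $P$ "belongs" to a locally convex arc of $\partial P$; as the corridor is thin, a pair of opposite vertices across the corridor corresponds to either the two endpoints of one segment of $T$, or to a fork. Concretely, I would set up a charging scheme: walk along $\partial P$, and whenever the boundary turns, charge that reflex or convex turn to either (i) an endpoint of a segment of the tree $T$ we are building, or (ii) a fork, with forks of multiplicity $2$ receiving two charges. Summing, the number of segment endpoints over all segments is $2s$ (each of the $s$ segments has two ends), the total fork multiplicity is $t$, and each is charged to a distinct vertex of $P$, giving $2s+t\le m$. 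To make the tree $T$ have a \emph{minimum} partition $\mathcal M$ — which is what $s$ and $t$ refer to — I would build $\mathcal M$ directly from the corridor structure of $P$ (each maximal straight corridor of $P$ yields one segment) and then argue this partition is already minimum, or simply take any minimum partition afterward and note that the vertex count only improves.

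The step I expect to be the main obstacle is the topological/metric control needed to turn "small area" into "thin corridor with a clean tree structure": a simple polygon of tiny area can still be wild (long thin spirals, many near-coincident boundary arcs), and I must ensure that straightening $T_0$ into a genuinely noncrossing \emph{straight-line} tree $T$ neither loses coverage of any point of $S$ nor inflates the segment/fork count beyond $m$. Here strong general position is essential: it guarantees that after snapping, the points of $S$ land in the relative interiors of distinct edges (or at genuine vertices) in a non-degenerate way, so no spurious collinearities or coincidences arise that would force $\mathcal M$ to use extra segments. I would also need to choose $\eps$ carefully as a function of $S$ (smaller than, say, the square of the smallest feature size of $S$ relative to $B$) so that the corridor width is forced below the scale at which the combinatorial structure of $P$ could differ from that of a tree neighborhood. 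Once the corridor picture is established, the bijective vertex count is essentially the one already proved in Lemma~\ref{lem:sep}, read in reverse.
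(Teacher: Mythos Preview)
Your approach differs fundamentally from the paper's, and it has real gaps. The paper does not work with a fixed small-area polygon at all: it argues by compactness. Fixing $m$, if no $\eps$ works then there is a sequence $(P_n)$ of simple $m$-gons containing $S$ with $\area(P_n\cap B)\to 0$; one passes to a convergent subsequence in the compactification $\overline{\mathbb{R}^2}$ (the plane plus a circle at infinity). The limit $\overline{P}$ is a degenerate $m$-gon with $\area(\overline{P}\cap B)=0$ whose edges are pairwise noncrossing but may overlap; their union is a connected $1$-complex containing $S$. The overlap equivalence classes of edges \emph{are} the segments of the covering, and the per-vertex bijection from Lemma~\ref{lem:sep} gives $2s+t=m$ on the nose. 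Strong general position is used only to rule out distinct parallel rays, which controls the arcs at infinity; vertices at infinity are then pulled back to finite points without increasing the vertex count.

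Your medial-axis route has two concrete failures. First, you restrict to $P\cap B$, but $P\cap B$ need not be connected: a thin simple polygon can exit and re-enter $B$ many times, so $P\cap B$ is a disjoint union of regions and its medial axis is a forest, with points of $S$ possibly in different components. The paper's compactification is exactly what handles the excursions of $P$ outside $B$. Second, even granting connectedness, the passage ``straighten $T_0$ into a noncrossing straight-line tree that still covers $S$ with $2s+t\le m$'' is the whole lemma, and you do not supply it: the medial axis has parabolic arcs, and your charging scheme presupposes that $P$ decomposes into ``maximal straight corridors,'' whereas a small-area polygon can bend gradually with no straight corridor anywhere. Nothing forces the combinatorics of a positive-area $P$ to match those of a fattened straight-line tree until one passes to the zero-area limit --- which is precisely the step the paper takes and you avoid.
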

\begin{proof}
Let $m\geq 3$ be an integer such that for every $n\in \mathbb{N}$, there exists a simple polygon $P_n$ with precisely $m$ vertices such that $S\subset {\rm int}(P_n)$ and $\area(P_n\cap B)\leq \frac{1}{n}$.
Let $\overline{\mathbb{R}^2}$ be the compactification of the Euclidean plane $\mathbb{R}^2$
by adding a circle at infinity, corresponding to the unit (direction) vectors in $\mathbb{S}^1$.
By compactness, the sequence $(P_n)_{n\geq 1}$ contains a convergent subsequence of polygons in $\overline{\mathbb{R}^2}$. The limit is a \new{(possibly degenerate)}
polygon $\overline{P}$ with precisely $m$ vertices (some of which may coincide) such that $S\subset \overline{P}$ and $\area(\overline{P}\cap B)=0$; see Fig.~\ref{fig:polygons}(a).
Some vertices of $\overline{P}$ may be on the circle at infinity, and so every edge is
a single point, a line segment, a ray, or an arc of the circle at infinity.
Since $S$ is in strong general position, we may assume that if two rays are parallel, then one contains the other.
Hence any edge of $\overline{P}$ along the circle at infinity is an arc with central angle less than $\pi$.

Two edges of $\overline{P}$ \emph{overlap} if they intersect in a line segment, a ray, or a circular arc at infinity; and they \emph{cross} if they intersect in a single point that lies in the relative interior of both edges. Since $P_n$ is a simple polygon for all $n\geq 1$, the edges of $\overline{P}$ are noncrossing, although they may overlap.
Since $\area(\overline{P}\cap B)=0$, the union of the edges of $\overline{P}$ is a connected set that contains $S$.
In particular, the union of the $m$ edges of $\overline{P}$
contains a noncrossing covering tree for $S$ in $\overline{\mathbb{R}^2}$.

Let $D$ be a disk whose interior contains $S$ and all vertices of $P$ that are not at infinity. We consider the polygon $\overline{P}$ as a closed curve in $\overline{\mathbb{R}^2}$.
Let $\mathcal{P}$ be the collection of maximal \new{curves} of $\overline{P}$ in the exterior of $D$.
For any two \new{curves} in $\mathcal{P}$, the cyclic order of their endpoints along $\partial D$ cannot interleave (i.e., they cannot form an $abab$ pattern), since $P_n$ is a simple polygon for all $n\geq 1$. Each \new{curve} $\gamma \in \mathcal{P}$ contains a unique maximal \new{curve} along the circle at infinity; denote this \new{curve} by $\widehat{\gamma}$. The \new{curve} $\widehat{\gamma}$ may consist of a single vertex of $\overline{P}$, or a single edge of $\overline{P}$ (of central angle less than~$\pi$), or two or more consecutive edges of $\overline{P}$. Every \new{curve} $\widehat{\gamma}$ is homotopic, with endpoints fixed, to an arc of the circle at infinity (of central angle up to $2\pi$) between the same two vertices of $\overline{P}$; let $\mathcal{Q}$ be the set of these arcs. If two arcs in $\mathcal{Q}$ overlap, then one contains the other, otherwise the endpoints of the corresponding \new{curves} of $\overline{P}$ would interleave in $\partial D$.

\begin{figure}[htb]
	\centering
		\includegraphics[width=0.8\textwidth]{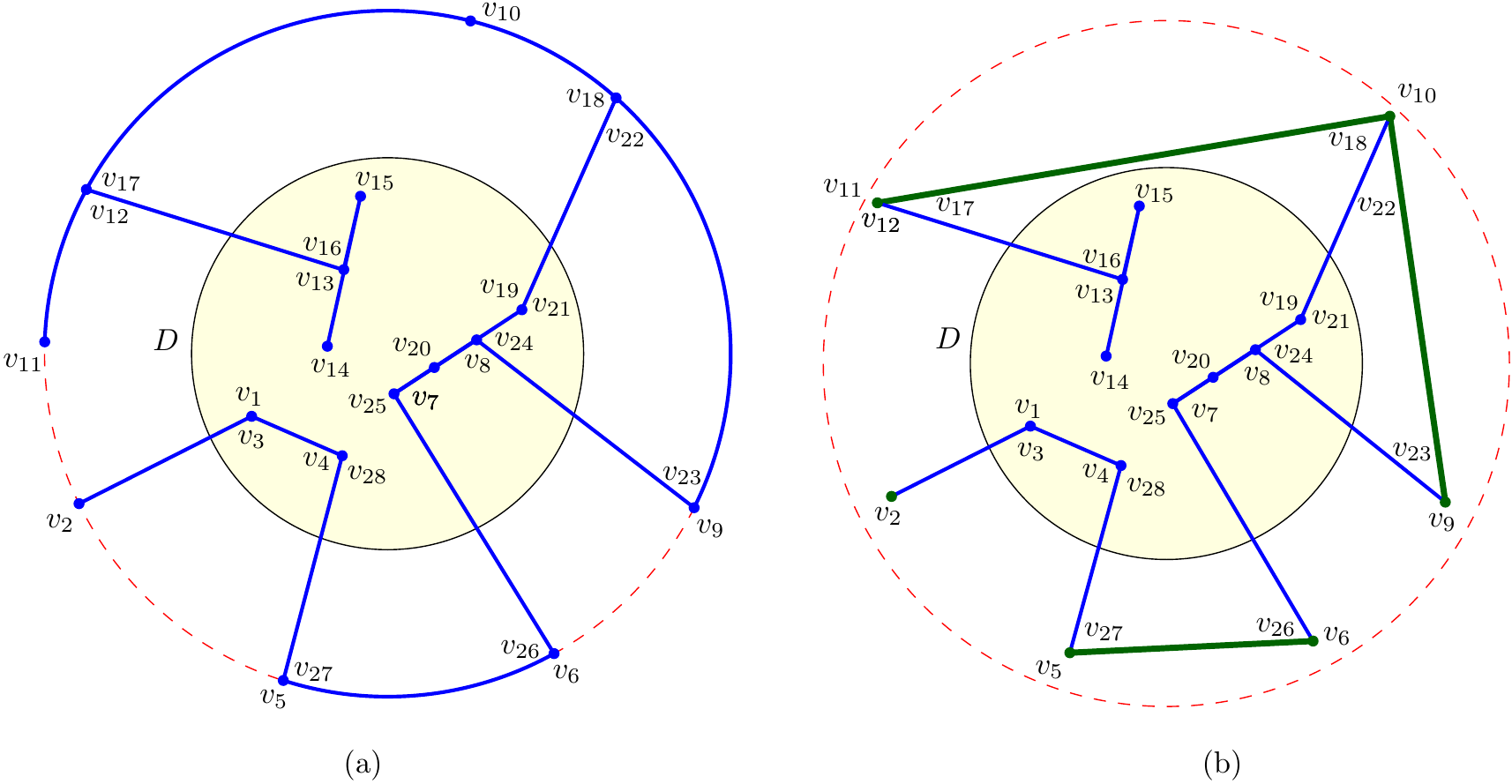}
	\caption{(a) A polygon $\overline{P}$ with 28 vertices and zero area in $\overline{\mathbb{R}^2}$. The dashed circle indicates the circle at infinity. (b) A modified polygon $P$, where all vertices are in the plane. }
	\label{fig:polygons}
\end{figure}

We modify $\overline{P}$ to obtain a polygon $P$ in the plane with $m$ vertices and pairwise noncrossing edges such that the union of its edges forms a covering tree for $S$; see Fig.~\ref{fig:polygons}(b). The transitive closure of the overlap relation between circular arcs in $\mathcal{Q}$
is an equivalence relation.
For each equivalence class of central angle less than $\pi$, we replace the arcs with a line segment in $\mathbb{R}^2$ that intersects the same set of rays. For an equivalence class of central angle at least $\pi$, we create two line segments that meet on one of the rays in the arc.

The transitive closure of the overlap relation between the edges of $P$ is an equivalence relation. The union of each equivalence class is a line segment; we call them \emph{segments} for brevity. These segments are pairwise noncrossing (since the edges of $P$ are pairwise noncrossing), and yield a covering of $S$ with a set $\mathcal{M}$ of pairwise nonoverlapping and noncrossing segments.

Analogously to the proof of Lemma~\ref{lem:sep}, at each vertex $v$ of $T$, there is a one-to-one correspondence between the vertices in $P$ located at $v$ and the segment endpoints and forks (with multiplicity) located at $v$. This implies $2s+t=m$ with respect to $\mathcal{M}$.
\end{proof}

An immediate consequence of Lemma~\ref{lem:tree} is a lower bound on the size of simple polygons with arbitrarily small area that enclose a point set $S$.

\begin{corollary}\label{cor:lowerpolygon}
Let $S$ be a finite set of points in the plane in strong general position with an axis-aligned bounding box $B$,
and let $T'$ be a noncrossing covering tree for $S$ minimizing $2s'+t' = m'$.
Then there exists an $\eps>0$ such that if $S$ is contained in a simple polygon $P$ with $m$ vertices and
$\area(P\cap B)\leq\eps$, then $m'\leq m$.
\end{corollary}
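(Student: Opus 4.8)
The plan is to read off Corollary~\ref{cor:lowerpolygon} from Lemma~\ref{lem:tree}, so the argument should be only a few lines. First I would apply Lemma~\ref{lem:tree} to the point set $S$ and its bounding box $B$ to obtain a threshold $\eps>0$ with the property that whenever $P$ is a simple polygon with $m$ vertices, $S\subset P$, and $\area(P\cap B)\le\eps$, there exist a noncrossing covering tree $T$ for $S$ and a partition $\mathcal{M}$ of its edges into pairwise noncrossing segments with $2s(T)+t(T)\le m$. I would then use precisely this $\eps$ in the statement of the corollary.

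Next I would take an arbitrary simple polygon $P$ with $m$ vertices that contains $S$ and satisfies $\area(P\cap B)\le\eps$, and apply Lemma~\ref{lem:tree} to it to produce the pair $(T,\mathcal{M})$. Invoking the definition of $m'$: by hypothesis $T'$, together with an optimal partition of its edges, minimizes $2s'+t'$ over all noncrossing covering trees for $S$ and all partitions of their edges into pairwise noncrossing segments. Since $(T,\mathcal{M})$ is one such admissible pair, minimality yields
\[
m' = 2s'+t' \;\le\; 2s(T)+t(T) \;\le\; m,
\]
which is exactly the claimed inequality.

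I do not anticipate any real obstacle: all the substance is in Lemma~\ref{lem:tree}. The one point that deserves a sentence of care is that the class over which $m'$ is minimized is the same class of objects that Lemma~\ref{lem:tree} outputs (covering trees paired with edge partitions into pairwise noncrossing segments), so the comparison $m'\le 2s(T)+t(T)$ is legitimate; this is how $m'$ is set up in the statement. If one worries that the $\eps$ furnished by Lemma~\ref{lem:tree} could a priori depend on the vertex count $m$, there is a harmless workaround: only the finitely many values $m\in\{3,\dots,m'-1\}$ can possibly violate $m'\le m$, so one applies Lemma~\ref{lem:tree} once for each of these and takes $\eps$ to be the minimum of the resulting thresholds. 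Finally, the degenerate case in which an optimal covering tree has all edges collinear ($s'=1$, hence $m'=2$) is vacuous, since every simple polygon has $m\ge 3>m'$.
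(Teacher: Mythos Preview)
Your proposal is correct and follows essentially the same route as the paper: apply Lemma~\ref{lem:tree} to obtain the threshold $\eps$, then use the minimality of $T'$ to conclude $m'\le 2s(T)+t(T)\le m$. Your extra paragraph about the possible $m$-dependence of $\eps$ and the degenerate $s'=1$ case are careful touches that the paper omits, but they do not change the argument.
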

\begin{proof}
By Lemma~\ref{lem:tree}, there exists an $\eps>0$ such that if a simple polygon $P$ with $m$ vertices and $\area(P\cap B)\leq\eps$ contains $S$, then $P$ also contains a noncrossing covering tree for $S$.
Therefore, by the minimality of $T'$, necessarily $m'\le m$.
\end{proof}

A similar lower bound can be established for perfect rainbow polygons. In particular, for every set $S$ of $k$ points in the plane in strong general position
one can build a $(k+1)$-colored point set $\widehat{S}$, such that finding a noncrosing covering tree for $S$ minimizing $2s+t$ is equivalent to finding a minimum perfect rainbow polygon for $\widehat{S}$.

\begin{theorem}\label{the:lowerrainbow}
Let $S$ be a set of $k$ points in the plane in strong general position, and let $T'$ be a noncrossing covering tree for $S$ minimizing $2s'+t' = m'$. Then there exists a $(k+1)$-colored point set $\widehat{S}$ in strong general position
such that every perfect rainbow polygon for $\widehat{S}$ has at least $m'$ vertices.
\end{theorem}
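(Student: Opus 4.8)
The plan is to take the $k$ points of $S$ and surround each of them, at a tiny scale, by a cluster of $k+1$ points carrying all $k+1$ colors arranged around a small circle, so that any perfect rainbow polygon is forced to ``pick'' exactly one point from each cluster — and the only way to do this with few vertices is essentially to thicken a noncrossing covering tree of $S$. Concretely, fix a small radius $\rho>0$. For each $p\in S$, place $k+1$ points on the circle of radius $\rho$ about $p$, one of each color $1,\dots,k+1$, in a cyclic pattern that is the same for every cluster (say color $j$ at angle $2\pi j/(k+1)$); perturb everything slightly to achieve strong general position. Call the resulting $(k+1)$-colored set $\widehat S$. Also scale $\rho$ down enough that the clusters are pairwise ``far'' compared to $\rho$ and that the relevant $\eps$ from Lemma~\ref{lem:tree} / Corollary~\ref{cor:lowerpolygon}, applied to the point set $S' \subset \widehat S$ consisting of one chosen representative per cluster, is not an obstacle.

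The first step is the easy direction: a noncrossing covering tree $T'$ of $S$ with $2s'+t'=m'$ can be thickened (Lemma~\ref{lem:sep}) into a simple polygon $P$ with $m'$ vertices and area at most $\eps$; since $T'$ passes through (or near) every $p\in S$ and $\rho$ is tiny, by choosing in each cluster the color that lies on the side of the thickened tree we can ensure $P$ contains exactly one point of each of the colors $1,\dots,k$ from the clusters, and we route one small detour to also capture a single point of color $k+1$. (One should first argue that this detour costs no extra vertices asymptotically, or — cleaner — build $\widehat S$ on $k$ clusters only using colors $1,\dots,k$ plus a single isolated point of color $k+1$ placed so that it is automatically enclosed; this is the role of the ``$(k+1)$-st color'' and should be set up so it contributes nothing to the lower bound argument.) This shows $\operatorname{rb-index}(\widehat S)\le m'$, though for the theorem we only need the reverse inequality.

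The main work is the lower bound: every perfect rainbow polygon $P$ for $\widehat S$ has at least $m'$ vertices. Given such a $P$, it contains exactly one point of each color $1,\dots,k$; since all $k+1$ colors appear in every cluster and $P$ omits all but one point per color, $P$ cannot contain an entire cluster, so from each of the $k$ clusters it picks at most $k$ of the $k+1$ points and in particular at least one point of that cluster must be excluded — but more importantly, $P$ must ``reach into'' every cluster, hence its boundary comes within distance $\rho$ of every $p\in S$, i.e. $P$ contains a point set $S'$ with one representative near each $p$. I would then argue that $\area(P\cap B)$ can be forced to be small: if $\area(P\cap B)>\eps$ we shrink the whole configuration (this is legitimate since $\operatorname{rb-index}$ is scale-invariant and the clusters shrink with it), so WLOG $\area(P\cap B)\le\eps$; Corollary~\ref{cor:lowerpolygon} (or directly Lemma~\ref{lem:tree}) then yields a noncrossing covering tree for $S'$, and since $\rho$ is tiny this tree is, after suppressing degree-two vertices and rescaling, a noncrossing covering tree for $S$ itself with parameters satisfying $2s+t\le m = |V(P)|$. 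By minimality of $m'$ we get $m'\le m$, as desired.

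The hard part will be the reduction from ``$P$ is a perfect rainbow polygon for the clustered set $\widehat S$'' to ``$P$ is a small-area polygon enclosing one representative per cluster,'' specifically controlling the area: a perfect rainbow polygon is under no a priori area constraint, and the clusters being small does not by itself force $P$ to be thin — $P$ could bulge into large empty regions of the plane. I would handle this by noting that any portion of $P$ not needed to connect the $k$ cluster-representatives and to avoid the excluded points can be replaced by a thin corridor without increasing the vertex count or destroying the rainbow property (a retraction/straightening argument on the polygon restricted to a large bounding box), reducing to an arbitrarily small-area polygon on $S'$ with at most $m$ vertices; alternatively, apply the compactness argument of Lemma~\ref{lem:tree} directly to $P$ (rather than to a shrinking sequence) after observing that outside a huge disk containing everything, $P$'s behavior can be normalized. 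A secondary technical point is ensuring strong general position of $\widehat S$ while keeping the cyclic color pattern in each cluster combinatorially identical; a generic infinitesimal perturbation achieves this and does not affect which covering trees are realizable.
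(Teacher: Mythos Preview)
Your construction has a fatal flaw that makes the lower bound fail outright, independent of the area-control issues you flag as ``the hard part.'' In your $\widehat S$, every cluster contains one point of \emph{each} of the $k+1$ colors. Consequently, a tiny triangle enclosing a single cluster (and nothing else) already contains exactly one point of every color and is therefore a perfect rainbow polygon for $\widehat S$ with only $3$ vertices. Your claim that ``$P$ cannot contain an entire cluster'' is simply false: containing one entire cluster and nothing else is precisely what a perfect rainbow polygon can do here. The same objection kills your alternative construction (clusters carrying colors $1,\dots,k$ plus one isolated point of color $k+1$): a thin polygon enclosing one full cluster together with the isolated point is rainbow and needs only a constant number of vertices. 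In neither version is there anything forcing the polygon to come near \emph{every} point of $S$.

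The paper's construction is quite different and sidesteps this entirely. It takes $\widehat S = S\cup S_{k+1}$ where each point of $S$ is a singleton color class (colors $1,\dots,k$), so any perfect rainbow polygon is \emph{forced} to contain every point of $S$. The $(k{+}1)$-st color class $S_{k+1}$ is not a few points near $S$ but a \emph{dense grid} filling the bounding box $B$, calibrated so that any triangle $\Delta$ with $\area(\Delta\cap B)\ge \eps/(2k)$ already contains at least two grid points. Then if $P$ is a perfect rainbow polygon with $x$ vertices, triangulating $P$ into $x-2$ triangles and using that $P$ contains at most one point of $S_{k+1}$ forces $\area(P\cap B)\le (x-2)\eps/(2k)\le \eps$ whenever $x<m'\le 2k-2$. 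Now Lemma~\ref{lem:tree} applies directly to $S$ (not to a perturbed representative set), yielding a covering tree with $2s+t\le x<m'$, a contradiction. The dense extra color is exactly what produces the area bound you were struggling to obtain; no rescaling or retraction argument is needed.
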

\begin{proof}
Note that $m'\le 2k-2$ since a star centered at one of the points of $S$ is a covering tree for~$S$ with $k-1$ segments and no forks. Let $B$ be an axis-aligned bounding box of $S$.
We may assume, by applying a suitable affine transofmration, that $B$ is a unit square.
By Lemma~\ref{lem:tree}, there exists an $\eps>0$ such that if $S$ is contained in a simple polygon $P$ with $m$ vertices and $\area(P\cap B)\leq\eps$, then $S$ admits a noncrossing covering tree and a partition of its edges into segments such that $2s+t\leq m$.

We construct a $(k+1)$-colored point set $\widehat{S}=S\cup S_{k+1}$, where $S_{k+1}$ is a point set such that every triangle $\Delta$ with $\area (\Delta\cap B)\geq \eps/(2k)$ contains at least two points in $S_{k+1}$.
Each point of~$S$ has a unique color and all points in $S_{k+1}$ have the same color.
Let $S_{k+1}=(\frac{\eps}{16k}\cdot \mathbb{Z}^2)\cap B$, that is, a section of a integer grid in $B$.
For a triangle $\Delta$ with $\area (\Delta\cap B)\geq \eps/(2k)$, the intersection $\Delta\cap B$ is a convex polygon,
hence \new{the} boundary of $\Delta\cap B$ intersects the interor of at most $4\cdot(16k/\eps)$ grid cells,
and the area of the grid polygon $\conv(\Delta\cap S_{k+1})$ is at least $\eps/(2k)-4\eps/(16k)=\eps/(4k)$.
By Pick's theorem, at least two points of the grid $S_{k+1}$ lie in the interior of
$\conv(\Delta\cap S_{k+1})$, hence in the interior of $\Delta$.
A random perturbation of $S_{k+1}$ maintains these properties, and the
resulting $(k+1)$-colored point set $\widehat{S}$ is in strong general position.
(This construction yields $|S_{k+1}|=\Theta(k^2(\area(B))/\eps^2)$. A substantially smaller point set $S_{k+1}$
can \new{be} constructed with the same properties using classical discrepancy theory~\cite{Matousek99}.)

Now suppose, for the sake of contradiction, that there exists a perfect rainbow polygon $P$ for $\widehat{S}$ with $x$ vertices where $x< m'$. Triangulate $P$ arbitrarily into $x-2$ triangles.
Since each triangle $\Delta$ contains at most one point from $S_{k+1}$, we have $\area(\Delta\cap B)\leq \eps/(2k)$. Summation over triangles yields $\area(P\cap B)\leq (x-2)\eps/2k\leq \eps$.
By the choice of $\eps$, $S$ admits a noncrossing covering tree and a partition of its edges into segments such that $2s+t\le x$. This contradicts the minimality of $T'$, which completes the proof.
\end{proof}

We conjecture that both problems, finding a noncrossing covering tree minimizing $2s+t$ for a given point set and finding a minimum perfect rainbow polygon for a given colored point set, are NP-hard.
Many geometric variants of the classical set cover problem are known to be NP-hard. For example
covering a finite set of points by the minimum number of lines is APX-hard~\cite{Broden01,Kumar00,MT82}, see also~\cite{DM15,Kratsch16}.
The minimum-link covering problem (finding a covering path for a set of points with the smallest number of segments) is NP-complete~\cite{Arkin}. However, in these problems, the covering objects (lines or edges) may cross. There are few results on covering points with noncrossing segments. It is known, for example, that it is NP-hard to find a maximum noncrossing matching in certain geometric graph~\cite{Aloupis13}. The problem of, given an even number of points, finding a noncrossing matching that minimizes the length of the longest edge is also known to be NP-hard~\cite{abu2014bottleneck}.

\section{Rainbow indexes of k = 3, 4, 5, 6, 7}\label{sec:small}

This section is devoted to determining the rainbow indexes $\operatorname{rb-index}(k)$ up to $k=7$. The following theorem is the main result of this section, and it summarizes the results proven below.

\begin{theorem}\label{thm-1}
The rainbow indexes of $k=3,4,5,6,7$ are the following:
$\operatorname{rb-index}(3)=3$, $\operatorname{rb-index}(4)=4$, $\operatorname{rb-index}(5)=5$, $\operatorname{rb-index}(6)=6$, and $\operatorname{rb-index}(7)=8$.
Furthermore, for every $k$-colored set of $n$ points, where $3\leq k\leq 7$, a perfect rainbow polygon of size at most $\operatorname{rb-index}(k)$ can be found in $O(n\log n)$ time.
\end{theorem}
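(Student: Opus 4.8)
The plan is to handle the two directions — the exact values of $\operatorname{rb-index}(k)$ for $3\le k\le 7$ and the $O(n\log n)$ algorithm — largely in tandem, since the upper-bound arguments will be constructive. For the \emph{upper bounds} $\operatorname{rb-index}(k)\le k$ when $k\le 6$ and $\operatorname{rb-index}(7)\le 8$, I would argue by induction on $k$, using the tool from Section~\ref{sec:treesandpolygons}: it suffices to exhibit, for every $k$-colored point set $S$, a noncrossing covering tree on one representative point of each color whose ``cost'' $2s+t$ is at most the claimed bound (then apply Lemma~\ref{lem:sep} to thicken it). For $k=3$ the empty heterochromatic triangle result already gives $3$. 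For $k=4,5,6$ the idea is to pick a point on the convex hull of $\bigcup S_i$, say of color $i$; the remaining colors form a $(k-1)$-colored set, and one wants to find a representative of each such that they, together with the hull point, span a noncrossing path or ``caterpillar'' of small complexity. The main structural claim to prove here is that for $k-1\le 5$ colors one can always choose representatives forming a convex-position or near-convex configuration that thickens to a $k$-gon; a convenient route is to show that some color class can be ``pierced'' by an edge of a small polygon built on the others. For $k=7$, the jump to $8$ comes precisely because seven points in convex position (one of each color, with the remaining points of each color pushed far outside) cannot be captured by a hexagon, forcing at least one extra vertex — this is also the heart of the lower bound.

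For the \emph{lower bounds}, the clean approach is to invoke Theorem~\ref{the:lowerrainbow}: for each $k\in\{3,\dots,6\}$ it is trivial since $\operatorname{rb-index}(k)\ge k$ always (any perfect rainbow polygon contains $k$ points, hence has $\ge 3$ vertices, and a dedicated construction with $k$ points in ``spread-out'' convex position forces $k$); and for $k=7$ I would exhibit a $6$-point set $S$ in strong general position whose every noncrossing covering tree has $2s'+t'\ge 8$, and then Theorem~\ref{the:lowerrainbow} produces a $7$-colored set $\widehat S$ with $\operatorname{rb-index}(\widehat S)\ge 8$. The natural candidate for $S$ is six points in convex position (a regular hexagon, perturbed into strong general position): any covering tree is a spanning tree, a spanning tree on $6$ convex points that is noncrossing has at least — here is the computation to do carefully — enough segments plus forks that $2s+t\ge 8$. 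I would verify this by a short case analysis on the degree sequence / shape of the spanning tree (path, star, caterpillar, spider), checking that $s\ge 4$ always and that whenever $s=4$ there is a fork, etc. Showing $2s+t\ge 8$ for \emph{all} noncrossing spanning trees of the hexagon, as opposed to some, is the delicate point and the main obstacle of the whole theorem.

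For the \emph{algorithmic} claim, I would make each upper-bound construction effective. Computing the convex hull of the $n$ points, sorting by angle around a chosen hull vertex, and the various ``ham-sandwich''-type or piercing selections needed to pick the color representatives can each be done in $O(n\log n)$; thickening the resulting covering tree into the polygon is linear in its size, which is $O(1)$ for fixed $k\le 7$. The only subtlety is that the selection of representatives may require, e.g., finding a line or wedge separating color classes or an empty heterochromatic triangle of minimum area — the latter is known to be computable in $O(n\log n)$ (or $O(n^2)$ naively, which one must improve), so I would cite or adapt the relevant sweep. Since $k$ is a constant here, no dependence on $k$ enters the running time.

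The step I expect to be hardest is the exhaustive verification, for the $k=7$ lower bound, that a perturbed regular hexagon admits no noncrossing covering tree with $2s+t\le 7$: one must enumerate the combinatorial types of noncrossing spanning trees on six convex points (there are Catalan-many, but only a handful up to symmetry), and for each determine the optimal segment partition $\mathcal M$ and its fork count. A close second is dispatching $\operatorname{rb-index}(6)\le 6$: unlike the smaller cases it is not immediate that a hexagon always suffices, and the argument must show that for any $6$-colored set one can either find $6$ representatives in convex position or route a low-complexity covering tree through them, which likely needs a careful case split on the hull structure of the representatives.
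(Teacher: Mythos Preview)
Your proposal has a concrete error in the $k=7$ lower bound, and the upper-bound plan is too vague to succeed as stated.

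\textbf{The hexagon does not work.} Six points in (perturbed) convex position do \emph{not} force $2s+t\ge 8$. Take a regular hexagon $p_1,\ldots,p_6$; extend side $p_1p_2$ and side $p_3p_4$ until they meet at a point $q$, and extend side $p_3p_4$ and side $p_5p_6$ until they meet at a point $q'$. The three-segment path $p_1\,q\,q'\,p_6$ is a noncrossing covering tree for all six vertices with $s=3$ and $t=0$, hence $2s+t=6$. So via Theorem~\ref{the:lowerrainbow} you would only get $\operatorname{rb-index}(7)\ge 6$. The paper instead takes six points forming two \emph{nested} triangles (an outer triangle with an inner triangle near the edge midpoints); for that configuration one can check that three segments always force at least two forks, giving $2s+t\ge 8$. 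The non-convexity is essential.

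\textbf{The lower bounds for $k\le 6$ are not ``trivial.''} It is not true that $\operatorname{rb-index}(k)\ge k$ ``always'': a single $4$-colored set with one point of each color and the fourth inside the triangle of the other three has rainbow index~$3$. And $k-1$ singleton colors in convex position plus a dense color does not automatically force $k$ either (same hexagon path shows this for $k=7$). You need tailored configurations: the paper uses a triangle with one interior point for $k=5$, two interior points for $k=6$, and in each case verifies the minimum $2s+t$ by hand.

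\textbf{Upper bounds.} Your inductive/convex-hull sketch does not supply the mechanism that makes $k=5,6$ go through; ``some color class can be pierced by an edge'' is the claim, not the proof. The paper's engine is a strip lemma (Lemma~\ref{lem:strip}): one finds two parallel lines bounding a strip that contains a point of every color, with two colors appearing exactly once on one boundary line and a third color controlled on the other. This reduces each small $k$ to a short case analysis inside the strip (plus, for $k=6,7$, an ``expedient triangle'' lemma that iteratively empties a triangle to produce a rainbow hexagon). The $O(n\log n)$ algorithm then follows directly from a sweep that finds this strip, not from minimum-area heterochromatic triangle computations.
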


\begin{figure}[htb]
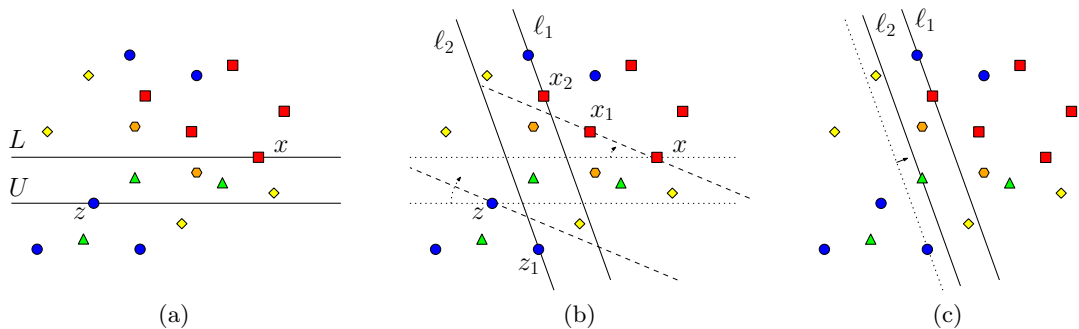

	\centering
	\subfloat[]{
		\includegraphics[scale=0.48,page=3]{Figures.pdf}
		\label{fig:strip1}
	}~~~~
	\subfloat[]{
		\includegraphics[scale=0.48,page=4]{Figures.pdf}
		\label{fig:strip2}
	}~~~~
	\subfloat[]{
		\includegraphics[scale=0.48,page=5]{Figures.pdf}
		\label{fig:strip3}
	}
	\caption{Illustrating the proof of Lemma~\ref{lem:strip}. The points of $S_1$ and $S_3$ are drawn as red squares and blue circles, respectively.}
	\label{fig:strip}
\end{figure}

Our proof for Theorem~\ref{thm-1} relies on the following lemma (Lemma~\ref{lem:strip}), which may be of independent interest. Lemma~\ref{lem:strip} guarantees the existence of a strip containing at least one point of each color, with the additional property that there are at least two color classes that have only one point in the strip.

Before proving the lemma, we introduce some notation. The line segment connecting two points $x$ and~$y$ in the plane will be denoted by $xy$ (or $yx$). Further, a ray emanating from $x$ and passing through $y$ is denoted by $\overrightarrow{xy}$. Given two parallel lines $\ell_1$ and $\ell_2$, defining a strip $ST$, we denote by $\overline{ST}$ the closure of the strip; that is, the set of points in the interior of the strip or on the lines $\ell_1$ and $\ell_2$.

\begin{lemma} \label{lem:strip}
For every $k$-colored point set ${S}=\bigcup_{i=1}^k S_i$, where $k\ge 3$, there exist three different colors, $i_1$, $i_2$, and $i_3$, and two parallel lines $\ell_1$ and $\ell_2$  defining a strip $ST$, that satisfy the following properties:
\begin{itemize}\itemsep 0pt
  \item[i)] $\overline{ST}$ contains at least one point of $S_i$ for $i=1, \ldots , k$.
  \item[ii)] $\ell_1$ passes through a point $x\in S_{i_1}$ and a point $y\in S_{i_2}$,
    such that $x$ and $y$ are the only points with colors $i_1$ and $i_2$ in $\overline{ST}$, respectively.
  \item[iii)] $\ell_2$ passes through a point $z\in S_{i_3}$.
  \item[iv)] If $\ell_2$ passes through no other point in $S$,
   then $z$ is the only point of color $i_3$ in $\overline{ST}$.
  \item[v)] If $\ell_2$ passes through another point $w\in S$, then either $w\in S_{i_3}$ and $z,w$ are the only points of color $i_3$ in $\overline{ST}$, or $w\in S-(S_{i_1}\cup S_{i_2}\cup S_{i_3})$ and $z$ is the only point of color $i_3$ in $\overline{ST}$.
\end{itemize}
Such a strip $ST$ can be computed in $O(n\log n)$ time, where $n=|S|$.
\end{lemma}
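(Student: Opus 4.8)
The plan is to obtain the strip by a continuous "sweep" argument, rotating and translating a line until the stated contact conditions are forced. First I would set up the following family of strips: for each direction $\theta\in[0,\pi)$, slide a line $\ell$ of direction $\theta$ from $-\infty$ towards the point set until it first touches $S$; let $x(\theta)$ be the first point hit. Generically this is a single point, and $x(\theta)$ is the vertex of the convex hull of $S$ extremal in the direction perpendicular to $\theta$. Now grow a strip with one boundary line on $\ell$ (through $x(\theta)$) by pushing the opposite boundary line $\ell_2$ (parallel, direction $\theta$) from $\ell$ across the set, stopping the instant $\overline{ST}$ contains at least one point of every color; call the stopping position $\ell_2(\theta)$, and let $z(\theta)\in S$ be the point on $\ell_2(\theta)$ responsible for completing the rainbow (the point of the last color to enter). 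This defines, for each $\theta$, a candidate strip satisfying (i) and (iii) automatically, with (iv)/(v) describing the degenerate event that $\ell_2(\theta)$ simultaneously passes through a second point $w$. The remaining issue is property~(ii): we need $\ell_1=\ell$ to pass through \emph{two} points $x,y$ of distinct colors, each unique of its color in $\overline{ST}$.

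To force the second contact point on $\ell_1$, I would rotate $\theta$ and track the color of $x(\theta)$. As $\theta$ increases, $x(\theta)$ walks monotonically along the convex hull of $S$; since there are $k\ge 3$ colors and every point of $S$ is hit for some $\theta$ (hull vertices) — more carefully, since the hull has at least three vertices — the color of $x(\theta)$ is not constant, so there is a critical direction $\theta^\star$ at which $\ell_1$ passes through two hull vertices $x$ and $y$ of different colors $i_1\ne i_2$. (If two adjacent hull vertices share a color we continue rotating; because $k\ge 3$ there must eventually be a hull edge with endpoints of distinct colors, or we instead pick $\theta^\star$ just past a vertex where the color changes, taking $\ell_1$ through that single vertex — here one must argue the second contact appears; the cleanest route is to take $\ell_1$ flush against a hull edge $xy$ whose endpoints have distinct colors, which exists since not all hull vertices are monochromatic when $k\ge3$ unless the hull is a monochromatic triangle, a case handled by rotating to the next color class which lies strictly inside). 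Fix this $\theta^\star$; then $x,y\in\ell_1$ have distinct colors. Because $\ell_1$ supports the hull, no other point of $S$ lies on $\ell_1$ on the closed halfplane side of the strip, and since $S$ is in general position no third point lies on the line $\ell_1$ at all; hence $x$ is the unique point of color $i_1$ and $y$ the unique point of color $i_2$ in $\overline{ST}$, giving~(ii).

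With $\theta^\star$ fixed, push $\ell_2$ as above to the first position completing the rainbow; then (i) and (iii) hold, and $z\in S$ lies on $\ell_2$ with some color $i_3$. For (iv): if $\ell_2$ meets no other point of $S$, then $z$ entered $\overline{ST}$ last, so just before this moment color $i_3$ was absent from $\overline{ST}$, hence $z$ is its only representative. For (v): if $\ell_2$ passes through a second point $w$, then by general position $w$ and $z$ are the only two points of $S$ on $\ell_2$; either $w$ has color $i_3$, and then both $z,w$ are exactly the color-$i_3$ points inside $\overline{ST}$ (no color-$i_3$ point was interior strictly before, by minimality of the push), or $w$ has a color $\ne i_1,i_2,i_3$, and then $z$ alone realizes $i_3$. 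One small subtlety: we must ensure $i_3\notin\{i_1,i_2\}$; but $x,y$ are already in $\overline{ST}$ as the unique points of colors $i_1,i_2$, so the "last color to arrive" when completing the rainbow is necessarily a third color, giving $i_3\notin\{i_1,i_2\}$. Finally, the $O(n\log n)$ bound: computing $\conv(S)$ takes $O(n\log n)$; the rotating family $x(\theta)$ is then read off in $O(n)$; and for the chosen $\theta^\star$, sorting the points of $S$ by their coordinate in direction perpendicular to $\theta^\star$ and scanning while maintaining color counts (a standard sliding-window over a sorted array) finds $\ell_2$ and $z,w$ in $O(n\log n)$ total.

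The main obstacle I anticipate is the edge-case analysis for property~(ii): guaranteeing that the rotating line $\ell_1$ can be placed through two points of distinct colors on the hull, and cleanly handling the degenerate configuration where the convex hull is monochromatic (so one must descend to an inner "layer" of the point set, or rotate to a direction where a non-hull point of a new color becomes the extremal contact). Getting this case distinction airtight — while keeping the uniqueness-in-$\overline{ST}$ guarantees for $i_1,i_2$ intact — is the delicate part; the rest is a routine sweep-and-count.
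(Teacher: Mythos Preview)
Your argument for property~(ii) has a genuine gap. You take $\ell_1$ to be a supporting line of $\conv(S)$ through a hull edge $xy$ with endpoints of distinct colors $i_1,i_2$, and then conclude that ``$x$ is the unique point of color $i_1$ and $y$ the unique point of color $i_2$ in $\overline{ST}$'' because no third point of $S$ lies on the line $\ell_1$. But $\overline{ST}$ is the entire closed strip between $\ell_1$ and $\ell_2$, not just the line $\ell_1$. Nothing prevents further points of color $i_1$ or $i_2$ from lying strictly inside the strip: the strip must be wide enough to contain a point of every color, so if (say) $S_{i_1}$ has many points clustered near $x$, they will all be swallowed long before the last missing color arrives. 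Thus your derivation of~(ii) is invalid, and the same issue breaks the case $w\in S_{i_1}\cup S_{i_2}$ in~(v), which you did not rule out.

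The paper's proof avoids this by a different choice of tangency: $\ell_1$ is kept tangent not to $\conv(S)$ but to the convex hull of a \emph{single} color class $S_{i_1}$ (so $i_1$ is automatically represented exactly once in $\overline{ST}$, namely by the tangency point), and $\ell_2$ is tangent to the hull of another color class restricted to one side. A rotating sweep then shrinks the strip until some other color $i_2$ is about to drop out, at which point that color too has a unique representative, lying on $\ell_1$. The key idea you are missing is that the line $\ell_1$ must be tied to individual color classes, not to the global hull; otherwise there is no mechanism forcing uniqueness inside the strip. Your acknowledged ``monochromatic hull'' obstacle is a symptom of the same issue, but fixing that case alone would not repair the main argument.
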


\begin{proof}
For $i=1,\ldots k$, let $L_i$ be the horizontal line passing through the lowest point in $S_i$.
Without loss of generality, assume that $L=L_1$ is the highest line among $L_i$, $i\in \{1,\ldots, k\}$.
For $i\in \{2,\ldots, k\}$ let $S_i^-$ denote the set of points in $S_i$ that lie strictly below $L$;
and let $U_i$ be the horizontal line that passes through the highest point in $S_i^-$.
Without loss of generality, assume that $U=U_3$ is the lowest line among all $U_i$, $i\in \{2,\ldots , k\}$
	and that $L\cap S_1$ is to the right of~$U\cap S_3$; see \figurename~\ref{fig:strip1}.
Observe that by choosing $i_1=1$, $i_3=3$, $\ell_1 = L$ and $\ell_2 = U$, all conditions (i) and (iii)--(v) are satisfied.

Let $A=\mathrm{conv}(S_1)$ and $B=\mathrm{conv}(S_3^-)$.
We describe a sweepline algorithm in which we maintain a strip $\overline{ST}$ between two parallel lines $\ell_1$ and $\ell_2$.
Initially, $\ell_1=L$ and $\ell_2=U$ are horizontal lines.
We also maintain the invariants that
\begin{enumerate}\itemsep 0pt
\item[(I1)] $\ell_1$ is tangent to $A$,
\item[(I2)] $\ell_2$ is a tangent to $B$,
\item[(I3)] $|\overline{ST}\cap S_i|\geq 1$ for all $\{1,\ldots , k\}$, and
\item[(I4)] $ST\cap S_3=\emptyset$.
\end{enumerate}

During the algorithm $\ell_1$ rotates clockwise about the point in $\ell_1\cap S_1$,
and $\ell_2$ rotates clockwise about the point $\ell_2\cap S_3^-$, which are called the
\emph{pivot} points of $\ell_1$ and $\ell_2$, respectively.
Using a fully dynamic convex hull data structure~\cite{DynamicHull19}, we maintain the convex hull of the points of $S\setminus (A\cup B)$ in $ST$, above $\ell_1$, and below $\ell_2$, respectively.
By computing tangent lines from the two pivot points to the three convex hulls, we can maintain an event queue of
when the next point in $S\setminus (A\cup B)$ enters or exits the strip $ST$ (it is deleted from one convex hull and inserted into another) and when the pivot $\ell_1\cap S_1$ or $\ell_2\cap S_3^-$ must be updated.
We also maintain the number of points in $ST\cap S_i$ for $i=\{2\}\cup \{4,\ldots  ,k\}$, in order to keep track of whether invariant (I3) is satisfied.

Specifically, the rotation stops if the number of points in $ST$ of a color in $\{2\}\cup \{4,\ldots, k\}$ drops to zero,
or if the line $\ell_1$ passes through a point of color 3. Termination is guaranteed, since when $\ell_1=\ell_2$
is the common tangent of $A$ and $B$, then $ST=\emptyset$, violating invariant (I3).
Each of the lines $\ell_1$ and $\ell_2$ sweeps through every point in $S\setminus (A\cup B)$ at most once.
Since the dynamic convex hull data structures can be updated in $O(\log n)$ amortized time
and they support tangent queries in $O(\log n)$ time~\cite{DynamicHull19},
the sweepline algorithm runs in $O(n\log n)$ total time.

Note that (I1) and (I2) are always satisfied by construction.
When the sweep line algorithm terminates, we face two possible scenarios.
If (I3) is violated, then there is a color $i_2\in \{2\}\cup \{4,\ldots , k\}$ such that
$S_{i_2}$ has exactly one point in $\overline{ST}\cap S_{i_2}$ lying on $\ell_1$ or $\ell_2$.
In this case our proof is complete with condition (ii) satisfied.
Otherwise (I4) is violated; that is, $\ell_1$ passes through a point of color 3.
In this case, we can translate $\ell_2$ towards $\ell_1$ until it passes through the
last point $u$ in $ST$ of one of the colors ${i_2}\in \{2\}\cup \{4,\ldots ,k\}$; see \figurename~\ref{fig:strip3}.
Exchanging the roles of colors ${i_2}$ and ${i_3}=3$, the lemma follows.
\end{proof}

Notice that if $k=3$, then the strip defined by $\ell_1$ and $\ell_2$ in Lemma~\ref{lem:strip} is empty, so the triangle $\triangle xyz$ is empty. As a consequence, Lemma~\ref{lem:strip} provides an alternative proof for  $\operatorname{rb-index}(3)=3$.

In the remainder of this section, we refer to colors 1, 2, 3, 4, 5, 6, and 7 (if they exist) as \emph{red}, \emph{blue}, \emph{green}, \emph{yellow}, \emph{pink}, \emph{orange}, and \emph{black}, respectively (e.g., a 4-colored point set will be red, blue, green, and yellow). Furthermore, when applying Lemma~\ref{lem:strip}, we may assume without loss of generality that the colors ${i_1}$, ${i_2}$, and ${i_3}$ are red, blue, and green, respectively, the lines $\ell_1$ and $\ell_2$ are horizontal, the point $x$ is to the left of point $y$ on $\ell_1$, and if $\ell_2$ passes through another point $w$ of $S$ that is not green, then $w$ is yellow. In addition, if $p$ is the intersection point between a ray $\overrightarrow{zu}$ and a line $\ell$, then $p'$ will denote a point infinitesimally close to $p$ on the ray $\overrightarrow{zu}$ towards $z$; see \figurename~\ref{fig:rb43}.

We can now
show that $\operatorname{rb-index}(4)=4$.

\begin{proposition}
\label{prop:rb4}
$\operatorname{rb-index}(4)=4$.
\end{proposition}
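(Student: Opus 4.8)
The plan is to prove the two inequalities $\operatorname{rb-index}(4)\ge 4$ and $\operatorname{rb-index}(4)\le 4$ separately. The lower bound is easy: I would exhibit a single $4$-colored point set for which no triangle is a perfect rainbow polygon. Taking four points in convex position, one of each color, already forces any simple polygon containing exactly one point of each color to have at least four vertices, since three of the four chosen points never span a triangle containing the fourth. One must also check that \emph{every} choice of one point per color (over all point sets one might try) cannot be enclosed in a triangle; but choosing the four color classes to be singletons in convex position makes this immediate. This gives $\operatorname{rb-index}(4)\ge 4$.

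For the upper bound, the idea is to apply Lemma~\ref{lem:strip} with $k=4$ to obtain the three special colors (which, by the stated normalization, we call red, blue, green), the horizontal lines $\ell_1,\ell_2$ bounding a strip $\overline{ST}$, the points $x\in S_1$ (red) and $y\in S_2$ (blue) on $\ell_1$ with $x$ left of $y$, and $z\in S_3$ (green) on $\ell_2$. By property (i), $\overline{ST}$ contains at least one yellow point; fix one such point $q$. The goal is to build a simple quadrilateral that contains $x$, $y$, $z$, and $q$ and no other point of $S$. Since $x$ and $y$ are the \emph{only} red and blue points of $\overline{ST}$, and (by (iv)--(v)) the green points of $\overline{ST}$ are controlled, the only real danger is enclosing a second green point or a second yellow point. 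I would use the rays $\overrightarrow{zx}$ and $\overrightarrow{zy}$ and the intersection-point notation $p,p'$ from the excerpt: the wedge at $z$ bounded by these two rays, intersected with $\overline{ST}$, is a triangle $\triangle xyz$ (possibly with $q$ inside or outside), and one then perturbs to a nearby quadrilateral. Concretely, if $q$ lies inside $\triangle xyz$, a slight push of one vertex gives a quadrilateral $x y z q'$-type polygon capturing exactly these four points; if $q$ lies in $\overline{ST}$ but outside $\triangle xyz$, one routes a thin ``finger'' from an edge of the triangle out to $q$ and back, realized as a quadrilateral by choosing the finger narrow enough (using strong general position / $\eps$-neighborhood arguments as in Lemma~\ref{lem:sep}) to avoid all other points.

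The main obstacle is the case analysis forced by where the yellow point $q$ sits relative to the triangle $\triangle xyz$ and by which sub-case of Lemma~\ref{lem:strip}(v) occurs (whether $\ell_2$ passes through a second green point, or through a yellow point $w$, or through no other point). In the worst sub-cases one has to be careful that the thin quadrilateral ``finger'' reaching $q$ does not accidentally enclose the second green point on $\ell_2$ or stray outside $\overline{ST}$ and pick up an extra point of some color; this is where the precise guarantees (ii)--(v) of the lemma do the work, since they pin down exactly which points of each color lie in the strip and on its boundary. I expect the argument to reduce, after choosing $q$ to be (say) the yellow point of $\overline{ST}$ closest to the line segment $xy$ or to $z$, to at most two or three clean sub-cases, each resolved by the same ``thicken a short tree'' device; the $O(n\log n)$ running time then follows from the $O(n\log n)$ bound for computing the strip in Lemma~\ref{lem:strip} plus linear-time selection of $q$ and a constant-size geometric construction of the quadrilateral.
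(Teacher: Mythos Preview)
Your lower bound is wrong. A perfect rainbow polygon is \emph{any} simple polygon containing exactly one point of each color; its vertices need not lie in $S$. If each of the four color classes is a singleton in convex position, then any triangle enclosing $\operatorname{conv}(S)$ is a perfect rainbow triangle, so $\operatorname{rb-index}(S)=3$ for your set. The observation that ``three of the four points never span a triangle containing the fourth'' is irrelevant, because nobody forces the triangle to have its vertices at points of $S$. The paper's lower-bound example instead takes $S_1=\{x\}$, $S_2=\{y\}$, $S_3=\{z\}$ and lets $S_4$ consist of \emph{two} points in the interior of $\triangle xyz$; then any triangle containing $x,y,z$ is convex and hence contains $\triangle xyz$, so it contains both yellow points and cannot be perfect.

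Your upper-bound sketch is in the right spirit (apply Lemma~\ref{lem:strip}, then build a covering tree with two segments and thicken it via Lemma~\ref{lem:sep}), but two points need tightening. First, a ``thin finger'' attached to a triangle produces a pentagon, not a quadrilateral; you must instead find two segments whose union covers $\{x,y,z,q\}$. Second, choosing an arbitrary yellow $q$ in $\overline{ST}$ gives you no control over the \emph{other} yellow points in the strip, and your quadrilateral may swallow several of them. The paper fixes both issues at once: rotate a ray from $z$ until it meets the \emph{first} yellow point $u$, let $p=\overrightarrow{zu}\cap\ell_1$, and take the two-segment tree $py\cup pz$ (or the quadrilateral $yxzp'$ when $p\in xy$); the rotational choice guarantees the thickened tree contains no second yellow point.
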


\begin{proof} We first show that $\operatorname{rb-index}(4)\ge 4$. Consider the $4$-colored point set in \figurename~\ref{fig:rb41}, where $S_1=\{x\}$, $S_2=\{y\}$, $S_3=\{z\}$, and $S_4$ consists of two points in the interior of the triangle $\triangle xyz$. Every triangle that contains a point of color 1, 2, and 3 must contain $\triangle xyz$,
hence two points of $S_4$. It follows that there exists no perfect rainbow triangle.

\begin{figure}[htb]
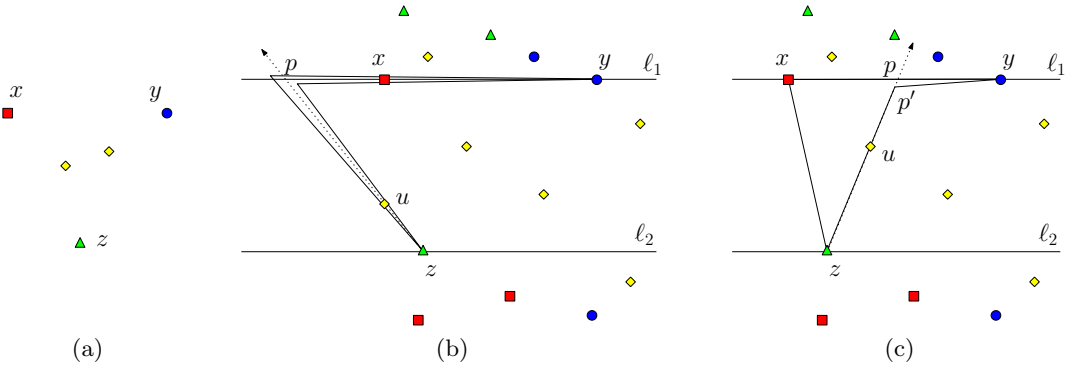

	\centering
	\subfloat[]{
		\includegraphics[scale=0.45,page=6]{Figures.pdf}
		\label{fig:rb41}
	}~~~~
	\subfloat[]{
		\includegraphics[scale=0.45,page=7]{Figures.pdf}
		\label{fig:rb42}
	}~~~~
	\subfloat[]{
		\includegraphics[scale=0.45,page=8]{Figures.pdf}
		\label{fig:rb43}
	}
	\caption{Illustrating the proof of Proposition~\ref{prop:rb4}.}
	\label{fig:rb4}
\end{figure}

We now show that $\operatorname{rb-index}(4)\le 4$. Let $S=S_1\cup S_2 \cup S_3 \cup S_4$ be a point set in the plane whose points are colored red, blue, green, and yellow. By Lemma~\ref{lem:strip},
there is a strip defined by two horizontal lines, $\ell_1$ and $\ell_2$, where $\ell_1$ passes through a red point $x$ and a blue point $y$, and $\ell_2$ passes through a green point $z$, such that either there are only yellow points in the interior of the strip, or the strip is empty and $\ell_2$ passes through a yellow point $w$. In the first case, we rotate the horizontal ray emanating from $z$
	clockwise until it encounters a yellow point $u$ in the interior of the strip; see \figurename~\ref{fig:rb42}.
Let $p$ be the intersection point of $\overrightarrow{zu}$ and~$\ell_1$. By symmetry, we may assume that $p$ is to the left of $x$ or on the segment $xy$. If $p$ is to the left of $x$, then $py\cup pz$ is a covering tree $T$
for $\{x,y,z,u\}$,
	which can be thickened to a perfect rainbow quadrilateral by Lemma~\ref{lem:sep}; see \figurename~\ref{fig:rb42}.
If $p$ is on $xy$, then $yxzp'$ is a perfect rainbow quadrilateral; see \figurename~\ref{fig:rb43}.
Finally, if the strip is empty and $\ell_2$ contains a yellow point $w$, then $xyzw$ is a perfect rainbow quadrilateral.
\end{proof}

Before moving to the next proposition, let us prove the following useful lemma that in fact works for monochromatic
points.

\begin{lemma}
\label{lem-2}
Let $P=\{u,v,w\}$ be a set of three points in the interior of a triangle $\triangle xyz$ such that
$P\cup \{x,y,z\}$ is in general position. Then one can label the points in $P$ by $a$, $b$, and $c$ so that
$a$ lies on the line segment $xr$, for $r=\overrightarrow{xa}\cap \overrightarrow{yb}$, $b$ lies on the line segment $ys$, for $s=\overrightarrow{yb}\cap \overrightarrow{zc}$, and $c$ lies on the line segment $zt$, for $t=\overrightarrow{zc}\cap \overrightarrow{xa}$.
\end{lemma}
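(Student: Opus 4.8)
The plan is to find the labeling by a rotational/extremal argument on the three points $u,v,w$ inside $\triangle xyz$. The cyclic condition we want to achieve is: going around $x\to y\to z\to x$, each point $a,b,c$ "hides behind" the next one as seen from the corresponding vertex, in the sense that $a$ lies on the segment from $x$ to the crossing point $r=\overrightarrow{xa}\cap\overrightarrow{yb}$, etc. The key observation is that this is exactly the combinatorial configuration one gets from thickening the covering tree $zc\cup cb\cup bx$ (or a similar path) into a rainbow hexagon, so the lemma is the geometric engine behind the $k=5,6,7$ constructions that follow.

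First I would set up the relevant orientation predicates. For any labeling $(a,b,c)$ of $(u,v,w)$, the condition "$a$ lies on segment $xr$ with $r=\overrightarrow{xa}\cap\overrightarrow{yb}$" is equivalent to saying that the ray $\overrightarrow{yb}$ crosses the ray $\overrightarrow{xa}$ beyond $a$, i.e. $b$ lies in the open halfplane bounded by line $xa$ not containing — wait, more precisely it is equivalent to: $b$ and $y$ are on opposite sides of the line through $x$ and $a$ is false in general; the clean formulation is that $r$ exists and the order along $\overrightarrow{xa}$ is $x,a,r$, which is a single orientation inequality once one knows that $\overrightarrow{xa}$ and $\overrightarrow{yb}$ are not parallel (guaranteed by general position). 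So the three required conditions become three orientation predicates $\sigma_{xy}(a,b)$, $\sigma_{yz}(b,c)$, $\sigma_{zx}(c,a)$, each of which depends only on the pair involved and the corresponding edge of the triangle. I would then argue: consider the six labelings (the $3!$ permutations of $u,v,w$ into $a,b,c$); since the triangle has three-fold cyclic symmetry in the statement, it suffices to pick the right cyclic order of the three points and then the right starting point.

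The main step, and the one I expect to be the crux, is a counting/parity argument showing that at least one of the (at most) six labelings satisfies all three predicates simultaneously. The natural approach: fix a cyclic order, say we must decide for each unordered pair whether to "read it" as $(a,b)$-type with respect to edge $xy$, etc.; define for each of the three points a score and use a discrete intermediate-value / pigeonhole argument. Concretely, consider the three rays $\overrightarrow{xu},\overrightarrow{xv},\overrightarrow{xw}$ from $x$ and sort the points by angle; do the same from $y$ and from $z$. The condition $\sigma_{xy}(a,b)$ will say $a$ comes "before" $b$ in a suitable angular order, and one shows that these three orders cannot be simultaneously inconsistent for every labeling — if every labeling failed, we would get a cyclic chain of strict inequalities $a<b$, $b<c$, $c<a$ in three compatible orders, which is impossible. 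I would make this precise by choosing the labeling greedily: pick $a$ to be the point that is "extremal" as seen from $x$ in the appropriate sense (e.g. the point $p\in P$ minimizing the angle $\angle yxp$, equivalently the one for which $\overrightarrow{xp}$ is the first ray hit when rotating $\overrightarrow{xy}$ inward), then among the remaining two pick $b$ extremal from $y$ with respect to $z$, and let $c$ be the last one; then verify the third condition $\sigma_{zx}(c,a)$ holds automatically because $a$ was chosen extremal from $x$. The verification that the "leftover" condition is forced is where the real work lies, and I would handle it by a direct case check on the at most two orderings that survive after the first two greedy choices, using that all seven points are in general position so no predicate is ever an equality.

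Finally I would note the degenerate-case bookkeeping: general position of $P\cup\{x,y,z\}$ guarantees that each of the three rays $\overrightarrow{xa},\overrightarrow{yb},\overrightarrow{zc}$ actually meets the next one in a single proper crossing point (they are not parallel and do not pass through a common vertex), so each "lies on segment" statement is well-defined, and the chosen point indeed lies strictly between the vertex and the crossing point rather than beyond it — this last sign is exactly what the extremal choice of $a$ (and then $b$) buys us.
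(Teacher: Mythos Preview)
Your proposal is a plan rather than a proof, and the plan has a genuine gap at its core step. Your greedy choice---take $a\in P$ minimizing $\angle yxp$, then $b$ extremal from $y$ toward $z$, then $c$ the leftover---does guarantee the \emph{third} condition: minimizing $\angle yxa$ maximizes $\angle zxa$, so every other point (in particular $c$) is closer to $z$ than $a$ as seen from $x$, which is exactly what ``$c$ lies on $zt$'' requires. But nothing in this scheme controls the \emph{first} condition, which compares $a$ and $b$ \emph{as seen from~$y$}. The point with smallest $\angle yxp$ can easily have the \emph{largest} $\angle xyp$; for instance, in $\triangle xyz$ with $x=(0,0)$, $y=(10,0)$, $z=(5,10)$, take $P=\{(9,0.5),(1,0.1),(2,0.3)\}$. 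Your greedy picks $a=(9,0.5)$ (angle $\approx 3.2^\circ$ from $x$), but from $y$ this point sits at $\approx 27^\circ$ while both remaining points sit below $3^\circ$, so no choice of $b$ can satisfy $\angle xya<\angle xyb$. A correct labeling for this $P$ is $(a,b,c)=((1,0.1),(9,0.5),(2,0.3))$, which your procedure never reaches. The fallback ``parity/tournament'' idea is also not sound as stated: the three orientation predicates live in three \emph{different} angular orders (one from each vertex), so a cycle $a\prec b$, $b\prec c$, $c\prec a$ across those orders is not a contradiction.

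The paper's proof takes a different and cleaner route. It first reformulates the three segment conditions as the single geometric statement that the triangles $\triangle xya$, $\triangle yzb$, $\triangle zxc$ are pairwise interior-disjoint (if $a$ failed to lie on $xr$ then $a$ would sit on the $z$-side of line $yb$, forcing $\triangle xya$ to overlap $\triangle yzb$, and cyclically). With that reformulation in hand, the labeling is found by a short explicit case analysis: assume the line through two of the points, say $u$ and $w$, meets sides $xy$ and $xz$ with $u$ nearer $xy$; set $p=\overrightarrow{yu}\cap\overrightarrow{zw}$ and case on which of the three subtriangles $\triangle yzp$, $\triangle xyp$, $\triangle zxp$ contains $v$. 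Each case yields an explicit assignment of $(a,b,c)$ for which the three triangles are visibly disjoint. If you want to salvage an extremal argument, the right object to be extremal about is this disjoint-triangles picture, not independent angular sorts from each vertex.
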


\begin{proof}
We first show that we can label the three points in $P$ by $a$, $b$, and $c$ so that $\triangle xya$, $\triangle yzb$, and $\triangle zxc$ are interior-disjoint. By symmetry, we may assume that the line passing through $u$ and $w$ intersects edges $xy$ and $xz$ of $\triangle xyz$, and $u$ is closer to $xy$ than $w$.
Let $p=\overrightarrow{yu}\cap \overrightarrow{zw}$. Since $P\cup \{x,y,z\}$ is in general position, $v$ lies neither on $\overrightarrow{yu}$ nor on $\overrightarrow{zw}$; see \figurename~\ref{fig:DisjointTriangles}.
Three cases arise: If $v$ is in $\triangle yzp$, then the triangles $\triangle xyu$, $\triangle yzv$, and $\triangle zxw$ are interior-disjoint; see \figurename~\ref{fig:DisjointTriangles1}.
If $v$ is in $\triangle xyp$, then $\triangle xyv$, $\triangle yzu$, and $\triangle zxw$ are interior-disjoint; see \figurename~\ref{fig:DisjointTriangles2}. Finally, if $v$ is in $\triangle zxp$, then we can take the triangles $\triangle xyu$, $\triangle yzw$, and $\triangle zxv$; see \figurename~\ref{fig:DisjointTriangles3}. By labeling $u,v,w$ as $a,b,c$ in the first case, as $b,a,c$ in the second case, and as $a,c,b$ in the third case, this part of the proof follows.

We now show that $a$ lies on the line segment $xr$, for $r=\overrightarrow{xa}\cap \overrightarrow{yb}$, $b$ lies on the line segment $ys$, for $s=\overrightarrow{yb}\cap \overrightarrow{zc}$, and $c$ lies on the line segment $zt$, for $t=\overrightarrow{zc}\cap \overrightarrow{xa}$. If $a$ were not on the line segment $xr$, then $a$ would lie below $\overrightarrow{yb}$, and so $\triangle xya$ and $\triangle yzb$ would intersect, a contradiction; see \figurename~\ref{fig:DisjointTriangles4}. The other cases, $b$ lying on $ys$ and $c$ lying on $zt$, hold by symmetry.
\end{proof}

\begin{figure}[htbp]
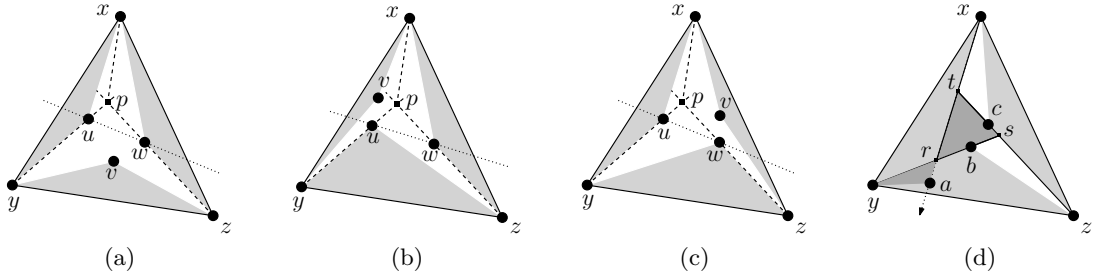

	\centering
	\subfloat[]{
		\includegraphics[scale=0.36,page=9]{Figures.pdf}
		\label{fig:DisjointTriangles1}
	}~~~
	\subfloat[]{
		\includegraphics[scale=0.36,page=10]{Figures.pdf}
		\label{fig:DisjointTriangles2}
	}~~~
	\subfloat[]{
		\includegraphics[scale=0.36,page=11]{Figures.pdf}
		\label{fig:DisjointTriangles3}
	}~~~
	\subfloat[]{
		\includegraphics[scale=0.36,page=12]{Figures.pdf}
		\label{fig:DisjointTriangles4}
	}
	\caption{Illustrating the proof of Lemma~\ref{lem-2}.}
	\label{fig:DisjointTriangles}
\end{figure}

\begin{figure}[htbp]
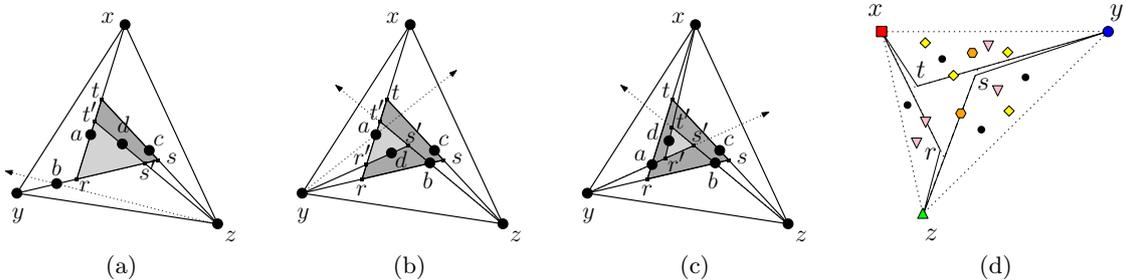

	\centering
	\subfloat[]{
		\includegraphics[scale=0.36,page=13]{Figures.pdf}
		\label{fig:expedient1}
	}~~~
	\subfloat[]{
		\includegraphics[scale=0.36,page=14]{Figures.pdf}
		\label{fig:expedient2}
	}~~~
	\subfloat[]{
		\includegraphics[scale=0.36,page=15]{Figures.pdf}
		\label{fig:expedient3}
	}~~~
	\subfloat[]{
		\includegraphics[scale=0.48,page=16]{Figures.pdf}
		\label{fig:hexagon}
	}
	\caption{(a)--(c) Building a new expedient triangle $\triangle r's't'$ from the expedient triangle $\triangle rst$, when replacing $c$ by $d$. (d) A perfect rainbow hexagon for six of the colors.}
	\label{fig:speciallemma}
\end{figure}

We say that the triangle $\triangle rst$ described in Lemma~\ref{lem-2} is \emph{expedient} with respect to $\{a,b,c\}$. Note that an expedient triangle can be computed in $O(1)$ time. Further, note that the labeling given in Lemma~\ref{lem-2} is not unique, and thus expedient triangles are not uniquely determined. Using Lemma~\ref{lem-2}, one can find perfect rainbow hexagons for six colors in some special colored point sets, as the following lemma shows.

\begin{lemma}\label{lem:six}
Let $S$ be a $k$-colored set of $n$ points, with $k\ge 6$, such that $\text{conv}(S)=\triangle xyz$. Assume that
$S_1=\{x\}$, $S_2=\{y\}$, and $S_3=\{z\}$.
Then there is a perfect rainbow hexagon for six of the colors, which can be found in $O(n)$ time.
\end{lemma}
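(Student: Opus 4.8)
The plan is to first reduce to the case $k=6$ (if $k>6$, just ignore all colors other than $1,\dots,6$; any point of colors $7,\dots,k$ interior to $\triangle xyz$ simply need not be enclosed, and a hexagon for six colors is what we seek) and then to pick one representative point for each of the colors $4,5,6$ that lie in the interior of $\triangle xyz$ — call them $u\in S_4$, $v\in S_5$, $w\in S_6$ — and apply Lemma~\ref{lem-2} to the triple $\{u,v,w\}$ inside $\triangle xyz$. This yields a labeling $a,b,c$ of $\{u,v,w\}$ and an expedient triangle $\triangle rst$, where $a$ lies on segment $xr$ with $r=\overrightarrow{xa}\cap\overrightarrow{yb}$, $b$ lies on $ys$ with $s=\overrightarrow{yb}\cap\overrightarrow{zc}$, and $c$ lies on $zt$ with $t=\overrightarrow{zc}\cap\overrightarrow{xa}$. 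The three triangles $\triangle xya$, $\triangle yzb$, $\triangle zxc$ are interior-disjoint (this is the content of the first half of Lemma~\ref{lem-2}), and their union together with $\triangle rst$ is exactly the hexagon $x\,a\,y\,b\,z\,c$ — or rather a hexagon whose six vertices are $x,y,z$ and points infinitesimally beyond $a,b,c$ along the rays toward the center. More precisely, I would take the six points $x$, $a'$, $y$, $b'$, $z$, $c'$ in this cyclic order, where $a'$ is a point infinitesimally past $a$ on $\overrightarrow{xa}$ (hence strictly inside $\triangle rst$, close to $a$), and similarly $b'$, $c'$. The resulting hexagon is the ``pinwheel'' $x a' y b' z c'$.

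Next I would verify that this hexagon $H = x a' y b' z c'$ is simple and contains exactly one point of each of the six colors. Simplicity follows because the three ears $\triangle x a' y$, $\triangle y b' z$, $\triangle z c' x$ are pairwise interior-disjoint (they are small perturbations of $\triangle xya$, $\triangle yzb$, $\triangle zxc$, which are interior-disjoint by Lemma~\ref{lem-2}) and each shares only a vertex with the central triangle $\triangle r's't'$, which is a perturbation of $\triangle rst$; so the boundary is a non-self-intersecting closed curve. For the color count: $H$ contains $x,y,z$ on its boundary, giving exactly one point each of colors $1,2,3$ since $S_1=\{x\}$, $S_2=\{y\}$, $S_3=\{z\}$. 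It contains $a,b,c$ — one point of each of colors $4,5,6$ — because $a'$ lies just past $a$, so $a$ is on the boundary segment (or interior) near $a'$; I would choose $a'$ close enough that $a$ is the unique point of its color class inside $H$, and likewise for $b,c$. The crucial point to check is that $H$ contains \emph{no other} point of colors $4,5,6$: any other such point lies in the interior of $\triangle xyz$, and I must argue it is not swept into $H$. Here is where the geometry of Lemma~\ref{lem-2} pays off — the hexagon $H$ can be made to lie in an arbitrarily small neighborhood of the tree (or one-dimensional set) $xa\cup yb\cup zc\cup \triangle rst$'s boundary, \emph{except} that $\triangle rst$ has positive area. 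So I actually need the stronger statement that $\triangle rst$ contains no point of $S$ other than (perturbations of) nothing — but that is false in general. Let me reconsider.

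The fix, which is the genuinely delicate step, is to choose the representatives $u,v,w$ and the expedient triangle more carefully: among all points of $S_4\cup S_5\cup S_6$ in $\triangle xyz$, and all valid labelings, I would pick one making $\triangle rst$ as small as possible — or better, apply Lemma~\ref{lem-2} recursively / use the ``new expedient triangle'' construction hinted at in Fig.~\ref{fig:speciallemma}(a)--(c): if some point $d\in S_4\cup S_5\cup S_6$ (of a color already represented, say color $6$) lies inside $\triangle rst$, replace $c$ by $d$ to get a strictly smaller expedient triangle $\triangle r's't'$, and repeat; since $S$ is finite this terminates with an expedient triangle whose interior contains no point of $S_4\cup S_5\cup S_6$ at all. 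I would also need that this shrinking process does not introduce points of colors $1,2,3$ into the ears — but those are fixed singletons at $x,y,z$, so that is automatic — nor points of colors $\ge 7$, which we are free to ignore. Once $\triangle rst$ is $S_{456}$-empty, the hexagon $H=xa'yb'zc'$ with $a',b',c'$ taken sufficiently close to $a,b,c$ contains exactly $\{x,y,z,a,b,c\}$ from $S_1\cup\dots\cup S_6$, one of each color, as required. The main obstacle is thus exactly this emptiness argument for the central triangle; I expect it to occupy the bulk of the proof, with everything else being the routine verification sketched above. Finally, the $O(n)$ running time: selecting one interior point per color is $O(n)$; Lemma~\ref{lem-2} runs in $O(1)$; each shrinking step eliminates at least one point of $S_4\cup S_5\cup S_6$ from the current triangle and can be implemented in $O(n)$ by scanning, but with care (e.g., sorting the candidate points once, or charging each point at most once across all shrink steps) the total is $O(n)$, or at worst $O(n\log n)$, which still fits the claimed bound since the statement only asks for $O(n)$ — I would implement the shrinking by a single linear scan that repeatedly picks the currently-innermost point, which visits each point $O(1)$ times amortized, giving $O(n)$.
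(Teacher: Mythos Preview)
Your overall plan---apply Lemma~\ref{lem-2}, iteratively replace one of $a,b,c$ by a point $d$ lying in the current expedient triangle so as to shrink it, then read off a hexagon---is exactly the paper's. The gap is in the hexagon you build.

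You take the hexagon to be $H=(x,a',y,b',z,c')$ with $a',b',c'$ just past $a,b,c$, and you describe its interior as $\triangle xya\cup\triangle yzb\cup\triangle zxc\cup\triangle rst$. Both claims are wrong. First, along line $xa$ the order is $x,t,a,r$ (and cyclically on the other cevians), so $a$ sits on side $tr$ of $\triangle rst$; the point $a'$, still on line $xa$, is on the boundary of $\triangle rst$, not strictly inside. Second, the closed curve $x\to a'\to y\to b'\to z\to c'\to x$ visits $x,y,z$ in the same cyclic order as $\partial(\triangle xyz)$ with $a',b',c'$ as \emph{reflex} vertices, so the bounded region it encloses is $\triangle xyz$ \emph{minus} the three ears $\triangle xa'y,\triangle yb'z,\triangle zc'x$, not their union with anything. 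This region strictly contains $\triangle rst$: for instance the triangle $\triangle yar$ (bounded by $ya$, the portion $ar$ of side $tr$, and $ry$) lies in $H$ but outside $\triangle rst$. Your shrinking step only empties $\triangle rst$, so a point of $S_4\cup S_5\cup S_6$ sitting in one of these leftover triangles is never expelled, and $H$ need not be rainbow.

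The paper's hexagon is taken at the \emph{vertices of the expedient triangle}, not at $a,b,c$: once $\triangle rst$ is empty of $S$, it perturbs $r,s,t$ slightly toward $x,y,z$ and uses the hexagon with vertex set $\{x,y,z,r',s',t'\}$. In the limit this curve traces each side of $\triangle rst$ once and doubles back along the three antenna segments $xt$, $yr$, $zs$; the enclosed region is $\triangle rst$ together with three arbitrarily thin slivers reaching $x,y,z$, on whose boundaries $a,b,c$ already sit (since $a\in tr$, $b\in rs$, $c\in st$). Emptying $\triangle rst$ then really does suffice. Two further differences worth noting: the paper proves the containment $\triangle r's't'\subset\triangle rst$ by an explicit three-case analysis on the position of $d$ relative to $\overrightarrow{zb}$ and $\overrightarrow{ya}$ (you only assert it); and the paper empties $\triangle rst$ of \emph{all} of $S$, allowing $d$ to carry any colour, so the final six colours are $1,2,3$ together with whichever three colours $a,b,c$ end up with. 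Your restriction to $S_4\cup S_5\cup S_6$ would also work once the hexagon is corrected.
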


\begin{proof}
Let $a$, $b$, and $c$ be points in the interior of $\triangle xyz$ with three different colors $i_1$, $i_2$, and $i_3$, respectively, and consider an expedient triangle $\triangle rst$ with respect to $\{a,b,c\}$, which exists by Lemma~\ref{lem-2}. If there are points of $S$ in the interior of $\triangle rst$, then we choose one of them, say $d$, we replace $c$ by $d$ in $\{a,b,c\}$, and we take a new expedient triangle $\triangle r's't'$ with respect to the set $\{a,b,d\}$. Without loss of generality, we may assume that if the color of $d$ is some of $i_1$, $i_2$, and $i_3$, then $c$ is the point having the same color as $d$.

We show how to construct an expedient triangle $\triangle r's't'$ such that $\triangle r's't'\subset \triangle rst$. Assume that $a$ lies on $xr$, for $r=\overrightarrow{xa}\cap \overrightarrow{yb}$, $b$ lies on $ys$, for $s=\overrightarrow{yb}\cap \overrightarrow{zc}$, and $c$ lies on $zt$, for $t=\overrightarrow{zc}\cap \overrightarrow{xa}$; cf.\ \figurenames~\ref{fig:expedient1}--\ref{fig:expedient3}. Consider the ray $\overrightarrow{zb}$ and distinguish cases based on whether $d$ is to the left or to the right of $\overrightarrow{zb}$. If $d$ is to the right of $\overrightarrow{zb}$ (Figure~\ref{fig:expedient1}), then $\triangle rs't'$ is an expedient triangle, where $a$ lies on $xr$, $b$ lies on $ys'$, and $d$ lies on $zt'$. Suppose now that $d$ is to the left of $\overrightarrow{zb}$. Two subcases arise, depending on whether $d$ is to the left or to the right of the ray $\overrightarrow{ya}$. If $d$ is to the right of $\overrightarrow{ya}$ (Figure~\ref{fig:expedient2}), then $\triangle r's't'$ is an expedient triangle, where $a$ lies on $xr'$, $d$ lies on $ys'$, and $b$ lies on $zt'$. Finally, if $d$ is to the left of $\overrightarrow{ya}$ (Figure~\ref{fig:expedient3}), then $\triangle r's't'$ is an expedient triangle, where $d$ lies on $xr'$, $a$ lies on $ys'$, and $b$ lies on $zt'$. In all three cases, $\triangle r's't'\subset \triangle rst$, as required.

Since $\triangle r's't'\subset \triangle rst$, and since $d$ is in the interior of $\triangle rst$ but not in the interior of $\triangle r's't'$, it follows that $\triangle r's't'$ contains fewer points of $S$ than $\triangle rst$. Hence we can repeat this procedure until we find an expedient triangle that is empty of points of $S$. From this expedient triangle, we can obtain a perfect rainbow hexagon for the six colors involved, by slightly moving the vertices of the expedient triangle towards the vertices of $\triangle xyz$ as depicted in \figurename~\ref{fig:hexagon}. Furthermore, an \emph{empty} expedient triangle can be computed in $O(n)$ time: We can start with an arbitrary expedient triangle $\triangle rst$. For each point $s\in S$, we can test whether $s\in \triangle rst$ and update it to a smaller triangle $\triangle r's't'\subset \triangle rst$ if necessary in $O(1)$ time.
Consequently, a perfect rainbow polygon for six of the colors can also be found in $O(n)$ time.
\end{proof}

\begin{figure}[htbp]
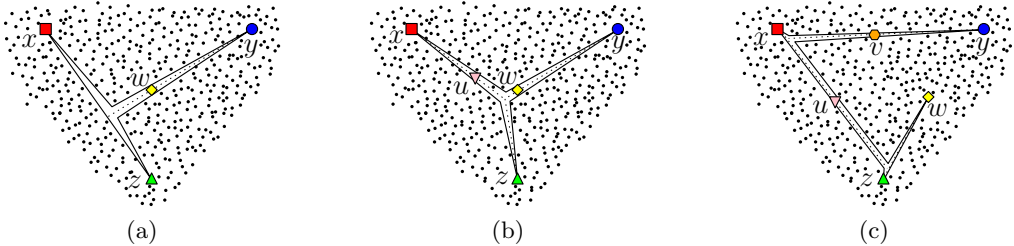

	\centering
	\subfloat[]{
		\includegraphics[scale=0.52,page=17]{Figures.pdf}
		\label{fig:sets1}
	}~~~~~~
	\subfloat[]{
		\includegraphics[scale=0.52,page=18]{Figures.pdf}
		\label{fig:sets2}
	}~~~~~~
	\subfloat[]{
		\includegraphics[scale=0.52,page=19]{Figures.pdf}
		\label{fig:sets3}
	}
	\caption{5-, 6-, and 7-colored points sets whose rainbow indices are 5, 6, and 8, respectively.}
	\label{fig:sets}
\end{figure}

We are now ready to prove the following result.

\begin{proposition}
\label{prop:rb5}
$\operatorname{rb-index}(5)=5$, $\operatorname{rb-index}(6)=6$, and $\operatorname{rb-index}(7)=8$.
\end{proposition}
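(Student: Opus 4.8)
The plan is to establish six inequalities: the three lower bounds $\operatorname{rb-index}(5)\ge 5$, $\operatorname{rb-index}(6)\ge 6$, $\operatorname{rb-index}(7)\ge 8$, and the three matching upper bounds. For the lower bounds I would exhibit explicit colored point sets, as suggested by \figurename~\ref{fig:sets}. For $k=5$ and $k=6$ the lower bound $\operatorname{rb-index}(k)\ge k$ should follow from a configuration in the spirit of \figurename~\ref{fig:rb41}: put one point of each of three colors at the vertices of a triangle $\triangle xyz$, and crowd the remaining $k-3$ colors (with two or more points each, suitably nested) deep inside, so that any rainbow polygon must ``reach around'' each interior color class and therefore needs at least $k$ vertices; a short counting/convexity argument rules out fewer vertices. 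For $k=7$, the point of the example in \figurename~\ref{fig:sets3} is that a $7$-gon does \emph{not} suffice, so I would design a $7$-colored set in which every perfect rainbow polygon needs at least $8$ vertices — presumably by combining two ``hard'' triangle gadgets so that covering all seven colors forces a covering tree with $2s+t\ge 8$, then invoking Theorem~\ref{the:lowerrainbow} (with $k=7$, hence a construction on $7$ base points whose optimal covering tree has $2s'+t'=8$) to convert this into a genuine lower bound on the rainbow index. Making the base point set realize exactly $2s'+t'=8$, and no less, is the delicate part of the lower-bound half.

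For the upper bounds, the cases $k=5$ and $k=6$ should go through an argument parallel to Proposition~\ref{prop:rb4}. Apply Lemma~\ref{lem:strip} to get a strip $\overline{ST}$ between horizontal lines $\ell_1,\ell_2$ with $x\in S_1$, $y\in S_2$ on $\ell_1$ the unique red and blue points in $\overline{ST}$, and $z\in S_3$ on $\ell_2$. If the strip is empty ($k=5$ forces this only when $\ell_2$ carries extra points), we get the remaining colors directly on $\ell_2$ or by a tiny quadrilateral-type move as before. Otherwise the interior of $\overline{ST}$ contains only points of the remaining $k-3$ colors; since $k-3\le 3$, Lemma~\ref{lem-2} applies to one point of each of those colors inside the ``triangle-like'' region cut out of $\overline{ST}$ by $x$, $y$, $z$, and sweeping rays. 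Concretely, I would pick one point of each remaining color in the strip, use the expedient-triangle structure of Lemma~\ref{lem-2} to route a noncrossing covering tree through $x,y,z$ and these $k-3$ points using only $s=k-1$ segments and $t\le 1$ forks — or rather, arrange the combinatorics so that $2s+t\le k$ — and then thicken it via Lemma~\ref{lem:sep}. One has to check the boundary cases (a remaining-color point lying on a sweep ray, or $p$ landing on segment $xy$ as in \figurename~\ref{fig:rb43}) exactly as in Proposition~\ref{prop:rb4}; these give the explicit polygons in \figurename~\ref{fig:sets1}--\ref{fig:sets2}.

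The case $k=7$, upper bound $8$, is where Lemma~\ref{lem:six} earns its keep. Again apply Lemma~\ref{lem:strip}. If the strip is nonempty, its interior contains points of only the $4$ remaining colors; together with $x,y,z$ this is a $7$-colored instance whose convex hull is essentially a triangle with $x,y,z$ as the distinguished vertices, so Lemma~\ref{lem:six} yields a perfect rainbow hexagon for six of the seven colors, and the seventh color — which has at least one point in $\overline{ST}$ — is picked up by one extra ``finger'' off the hexagon, costing at most two more vertices, for a total of $8$. I would have to be careful that the hexagon of \figurename~\ref{fig:hexagon} can be extended to capture the seventh color without creating crossings and without swallowing a second point of an already-used color; the cleanest way is to attach the seventh point as a degree-one leaf to the underlying covering tree (so $s$ goes up by $1$, $t$ by at most $1$, giving $2s+t$ up from $6$ to $8$) and re-thicken via Lemma~\ref{lem:sep}. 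If instead the strip is empty, then $\ell_2$ carries green and possibly yellow, and $x,y,z$ (and $w$) are already on two parallel lines; the remaining three or four colors all lie in one of the two half-planes bounded by $\ell_1,\ell_2$, and a direct construction — or a second application of Lemma~\ref{lem:six}/Lemma~\ref{lem-2} to that side — produces a rainbow octagon. I expect the main obstacle to be the $k=7$ lower bound: one must produce a $7$-colored configuration provably defeating all heptagons, and the accounting is subtle because a clever heptagon can have reflex vertices and wrap tightly; routing this through the covering-tree lower bound of Theorem~\ref{the:lowerrainbow} (choosing the seven base points so that their cheapest noncrossing covering tree has $2s'+t'=8$, e.g.\ a configuration where any covering tree needs $s'\ge 3$ segments plus $t'\ge 2$ forks, or $s'\ge 4$) is the step that requires the most care.
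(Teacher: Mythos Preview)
Your outline has the right skeleton (explicit constructions for the lower bounds, Lemma~\ref{lem:strip} plus case analysis for the upper bounds), but several steps, as stated, do not go through. On the lower-bound side, for $k=5,6$ the paper does \emph{not} use ``two or more points of each remaining color, suitably nested'' together with a convexity count; instead it takes a small base set of $4$ (resp.\ $5$) points, proves that any noncrossing covering tree for it has $2s+t\ge 5$ (resp.\ $\ge 6$), and then invokes Theorem~\ref{the:lowerrainbow}, which adds a single dense color class. Your ``short counting/convexity argument'' is not specified, and since rainbow polygons may be highly nonconvex, it is unclear how such an argument would run. For $k=7$ you have the right mechanism, but Theorem~\ref{the:lowerrainbow} turns $k$ base points into a $(k{+}1)$-colored set, so the construction uses six base points (the two nested triangles $\triangle xyz$ and $\triangle uvw$), not seven; the heart of the argument is then the case analysis showing that any covering tree of those six points with exactly three segments already forces $t\ge 2$.

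On the upper-bound side, two concrete issues. First, for $k\ge 5$ the strip of Lemma~\ref{lem:strip} is never empty in the interior: property~(i) forces $\overline{ST}$ to contain all colors, and general position allows at most two points on each of $\ell_1,\ell_2$, so your ``empty strip'' case does not arise. The correct dichotomy (which the paper uses) is whether the interior of $ST$ contains all $k-3$ remaining colors, or whether yellow sits on $\ell_2$ as the second point~$w$ while the interior only carries the other $k-4$ colors. Second, Lemma~\ref{lem-2} requires three interior points, so it cannot be invoked for $k=5$ (only two interior colors); the paper instead sweeps a ray from $z$ to find two consecutive points of distinct colors and builds the pentagon directly from that. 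Even for $k=6,7$, the interior points of the strip need not lie inside $\triangle xyz$, so Lemma~\ref{lem:six} only applies after one first argues (via the symmetric sweep from both ends) that the relevant points can be assumed to be inside $\triangle xyz$; the paper handles the alternative positions with separate explicit polygons (\figurename~\ref{fig:rb5}). Finally, the ``extra finger'' for the seventh color is attached at $z$ (or $w$), and one must check it adds only two vertices and stays outside the empty expedient triangle---the paper does this by routing the black point to $z$ below $\ell_2$.
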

\begin{proof}
We first show that $\operatorname{rb-index}(5)\ge 5$, $\operatorname{rb-index}(6)\ge 6$, and $\operatorname{rb-index}(7)\ge 8$, by constructing a 5-colored point set $S(5)$, a 6-colored point set $S(6)$ and a 7-colored point set $S(7)$, such that $\operatorname{rb-index}(S(5))= 5$, $\operatorname{rb-index}(S(6))= 6$, and $\operatorname{rb-index}(S(7))= 8$.

The set $S(5)$ consists of four one-element color classes $S_1=\{x\}$, $S_2=\{y\}$,
$S_3=\{z\}$, and $S_4=\{w\}$, where $w$ is in the interior of $\triangle xyz$. The set $S_5$ of black points  contains $\triangle xyz$ in its convex hull, as described in the proof of Theorem~\ref{the:lowerrainbow}; that is, every triangle of area $\eps$ or more contains at least two black points; see \figurename~\ref{fig:sets1}.
The set $S(6)$ is obtained from $S(5)$ by adding a one-element color class $S_6=\{u\}$, where $u$ is in the interior of $\triangle xyz$; see \figurename~\ref{fig:sets2}.
The set $S(7)$ is based on two triangles, $\triangle xyz$ and $\triangle uvw$, where $\triangle uvw$
lies in the interior of $\triangle xyz$ and its
vertices are very close to the midpoints of the edges of $\triangle xyz$; see \figurename~\ref{fig:sets3}. The set $S(7)$ then consists of six one-element color classes, $S_1=\{x\}$, $S_2=\{y\}$, $S_3=\{z\}$, $S_4=\{u\}$, $S_5=\{v\}$, and $S_6=\{w\}$, and the dense class $S_7$ of black points as described in the proof of Theorem~\ref{the:lowerrainbow}. In the three sets, $S(5)$, $S(6)$ and $S(7)$, we assume that $x$, $y$, $z$, $u$, $v$, and $w$ (if defined) are in strong general position.

It is easy to see that a noncrossing covering tree for $\{x,y,z,w\}$ in $S(5)$, minimizing $2s+t$, requires
either two segments and a fork, or at least three segments (and no fork). Hence, by Theorem~\ref{the:lowerrainbow}, the size of a minimum perfect polygon for $S(5)$ is at least $5$. \figurename~\ref{fig:sets1} illustrates a perfect rainbow pentagon based on a covering tree that uses a segment to cover $x$ and $z$ and another segment to cover $y$ and $w$.
Every noncrossing covering tree for $\{x,y,z,w,u\}$ in $S(6)$ requires at least three segments, so the size of any perfect rainbow polygon for $S(6)$ is at least~6 by Theorem~\ref{the:lowerrainbow}. \figurename~\ref{fig:sets2} shows a perfect rainbow hexagon based on three segments that cover $x$ and $u$, $y$ and $w$, and $z$, respectively.

Finally, consider a noncrossing covering tree for $\{x,y,z,w,u,v\}$ in $S(7)$. It has at least three segments, by the pigeonhole principle, since no three points are collinear. If it has four or more segments, then the size of the corresponding perfect rainbow polygon for $S(7)$ is at least~8. Otherwise it consists of exactly three segments, and then an analysis of the possible choices shows that at least two forks are always required. Therefore, the size of a minimum perfect rainbow polygon for $S(7)$ is at least~8. \figurename~\ref{fig:sets3} illustrates the perfect rainbow octagon for $S(7)$ based on the segments that cover $\{x,u\}$, $\{y,v\}$, and $\{z,w\}$, respectively.

\begin{figure}[htb]
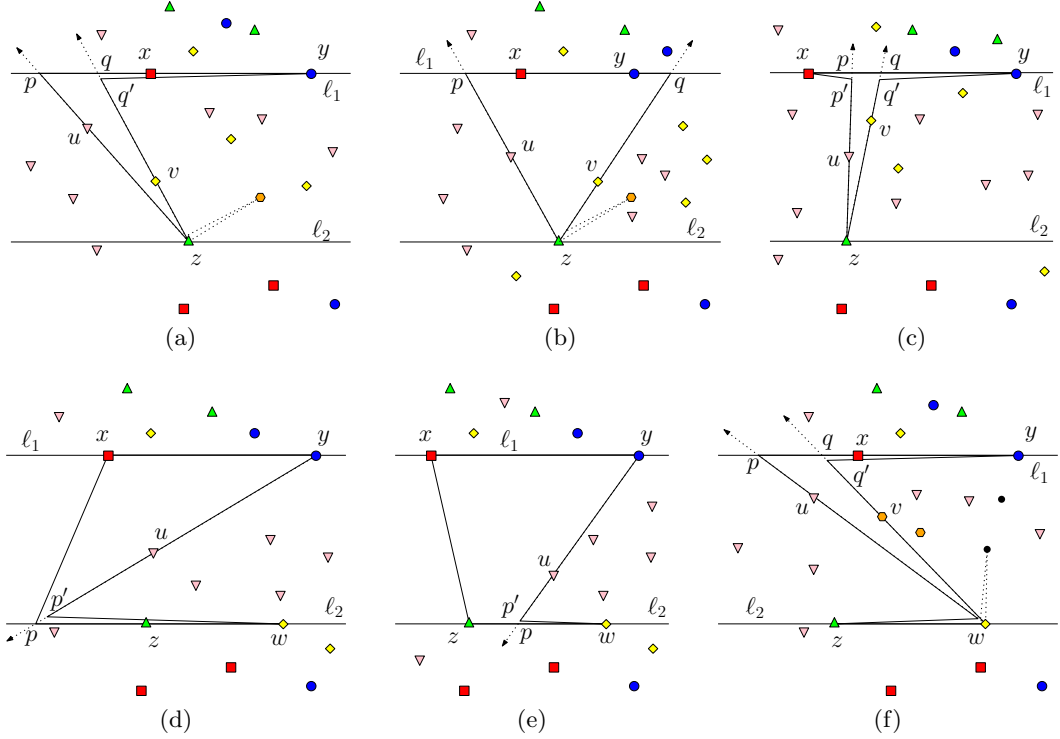

	\centering
	\subfloat[]{
		\includegraphics[scale=0.44,page=20]{Figures.pdf}
		\label{fig:rb51}
	}~~
	\subfloat[]{
		\includegraphics[scale=0.44,page=21]{Figures.pdf}
		\label{fig:rb52}
	}~~
	\subfloat[]{
		\includegraphics[scale=0.44,page=22]{Figures.pdf}
		\label{fig:rb53}
	} \\
	\subfloat[]{
		\includegraphics[scale=0.44,page=23]{Figures.pdf}
		\label{fig:rb54}
	}~~
	\subfloat[]{
		\includegraphics[scale=0.44,page=24]{Figures.pdf}
		\label{fig:rb55}
	}~~
	\subfloat[]{
		\includegraphics[scale=0.44,page=25]{Figures.pdf}
		\label{fig:rb56}
	}
	\caption{Illustrating the proof of Proposition~\ref{prop:rb5}. }
	\label{fig:rb5}
\end{figure}

We next show that $\operatorname{rb-index}(5)\le 5$, $\operatorname{rb-index}(6)\le 6$, and $\operatorname{rb-index}(7)\le 8$. Let $S$ be a $k$-colored point set in the plane, with $k\in \{5,6,7\}$.
By Lemma~\ref{lem:strip}, there is a strip defined by two horizontal lines $\ell_1$ and $\ell_2$, with $\ell_1$ passing through a red point $x$ and a blue point $y$, and $\ell_2$ passing through a green point $z$, such that either the strip contains some points of all other colors, or $\ell_2$ passes through a yellow point $w$, the interior of the strip does not contain any other yellow point, but contains at least one point from the remaining $k-4$ color classes.
We analyse these cases in detail as follows.

\paragraph{The strip $ST$ contains points of all $k-3$ other colors.}
Consider the horizontal ray emanating from $z$ to the left and rotate it in clockwise direction, sweeping all the colored points in the strip until we find two consecutive points of $S$, say $u$ and $v$, with different colors, say yellow and pink; see \figurename~\ref{fig:rb51}. Let $p$ and $q$ be the intersection points of $\overrightarrow{zu}$ and $\overrightarrow{zv}$ with $\ell_1$, respectively. Assume that $p$ is to the left of $x$. If $q$ is also to the left of $x$ or on the line segment $xy$, then $ypzq'$ is a perfect rainbow quadrilateral for five of the colors; see \figurename~\ref{fig:rb51}. If $q$ is to the right of $y$, then $\triangle pqz$ is a perfect rainbow triangle for five of the colors; see \figurename~\ref{fig:rb52}. If $k=5$, we are done. If $k=6$, then we connect an orange point to $z$ and thicken this edge (dotted line segments in \figurenames~\ref{fig:rb51} and~\ref{fig:rb52}). In this way, we obtain a perfect rainbow hexagon in the first case and a perfect rainbow pentagon in the second case. If $k=7$, we repeat this process connecting a black point to $z$, to obtain either a perfect rainbow octagon or a perfect rainbow heptagon.

Suppose now that $p$ is to the right of $x$. Arguing in an analogous way when rotating the horizontal ray emanating from $z$ to the right counterclockwise, if $u'$ and $v'$ are two consecutive points with different colors, then we may assume that the intersection point $p_1$ between $\ell_1$ and $\overrightarrow{zu'}$ is to the left of $y$. When this happens, $p$ to the right of $x$ and $p_1$ to the left of $y$, it is straightforward to see that $\triangle xyz$ must contain at least one point of each color, and that $q$ is on $xy$. If $k=5$, then $yxp'zq'$ is a perfect rainbow pentagon; see \figurename~\ref{fig:rb53}. If $k=6$, a perfect rainbow hexagon exists by Lemma~\ref{lem:six}; see \figurename~\ref{fig:hexagon}. If $k=7$, we can build a perfect rainbow hexagon for six of the colors by Lemma~\ref{lem:six}, and form a perfect rainbow octagon for $S$ by connecting a black point to $z$ and thicken this edge.

\paragraph{The only yellow point $w$ in $\overline{ST}$ is on $\ell_2$.}
When $k=5$, the strip contains only pink points. Thus, we rotate the horizontal ray emanating from $y$ to the left
counterclockwise until it encounters a pink point $u$ in the strip; see \figurename~\ref{fig:rb54}. Let $p$ be the intersection point of $\overrightarrow{yu}$ and $\ell_2$. By symmetry, there are two cases to consider: either $p$ is to the left of $z$, or on the segment $zw$. In the first case, $yxpwp'$ is a perfect rainbow pentagon; see \figurename~\ref{fig:rb54}. In the second case $yxzwp'$ is a perfect rainbow pentagon; see \figurename~\ref{fig:rb55}.

When $k=6$ or $k=7$, the strip contains points of at least two of the colors and we can argue as before, but now rotating clockwise about $w$ instead of about $z$, to look for the first two consecutive points $u$ and $v$ with different colors, say pink and orange. If the intersection point $p$ between $\overrightarrow{wu}$ and $\ell_1$ is to the left of $x$, then we can build a perfect rainbow quadrilateral or a perfect rainbow triangle for five of the colors, as shown in \figurenames~\ref{fig:rb51} and~\ref{fig:rb52}. After that, we connect $z$ (and a black point if $k=7$) to $w$ to form a perfect rainbow hexagon (or a perfect octagon if $k=7$) for $S$; see \figurename~\ref{fig:rb56}.

Finally, if $p$ is to the right of $x$, then there are points of at least two of the colors to the left of $\overrightarrow{xw}$. Therefore, when rotating the horizontal ray emanating from $x$ to the right clockwise until finding two consecutive points $u$ and $v$ of different colors, the intersection point between $\overrightarrow{xu}$ and~$\ell_2$ will be necessarily to the right of $w$, and we can carry out symmetric constructions.
\end{proof}

The following corollary is straightforward from the proofs of Propositions~\ref{prop:rb4} and~\ref{prop:rb5}.

\begin{corollary}\label{cor:algo}
For $k=3,4,5,6,7$, a perfect rainbow polygon with at most $\operatorname{rb-index}(k)$ vertices can be found in $O(n\log n)$ time for any $k$-colored set $S$ of $n$ points.
\end{corollary}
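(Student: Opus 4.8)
The plan is to revisit the constructions in the proofs of Propositions~\ref{prop:rb4} and~\ref{prop:rb5} and check that every step can be carried out within the claimed bound. Since $O(n\log n)$ dominates $O(n)$ and $O(1)$, it suffices to verify that no single step exceeds $O(n\log n)$ time, and to bound the number of cases by a constant.

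First I would invoke Lemma~\ref{lem:strip}, which already guarantees that the strip $ST$, the three colors, the lines $\ell_1$ and $\ell_2$, and the points $x$, $y$, $z$ (and $w$, when it exists) can be computed in $O(n\log n)$ time. Determining which of the two main cases applies (either $\overline{ST}$ contains points of all remaining colors, or $\ell_2$ passes through a yellow point with no other yellow point in the interior of the strip), and identifying the at most two color classes that have a single point in $\overline{ST}$, is done by one pass over $S$ in $O(n)$ time. In each case the construction rotates a ray about one of the pivot points $z$, $w$, $x$, or $y$ and halts at the first point, or at the first two consecutive points, of suitable colors; I would realize this by computing once the angular order of the (at most $n$) relevant points around the pivot in $O(n\log n)$ time and then scanning it linearly to locate the desired pair of consecutive differently colored points in $O(n)$ time. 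Every other geometric primitive appearing in the case analysis---intersecting a ray with $\ell_1$ or $\ell_2$, deciding whether an intersection point lies to the left of $x$, to the right of $y$, or on the segment $xy$, and placing a vertex $p'$ infinitesimally close to $p$---is an $O(1)$ operation, as is the thickening of the resulting constant-size covering tree into a perfect rainbow polygon by Lemma~\ref{lem:sep}. In the subcase (for $k=6$ or $k=7$) where $\triangle xyz$ contains a point of every color, the perfect rainbow hexagon for six colors is produced by Lemma~\ref{lem:six} in $O(n)$ time, and the additional edge to an orange or black point, together with its thickening, is found and appended in $O(n)$ time.

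I do not expect a genuine obstacle here; the only point that needs care is to make explicit that the ``rotate a ray until two consecutive points of different colors appear'' step---phrased purely geometrically in the proofs---is implemented by a single angular sort followed by a linear scan, and that this is the unique place (beyond the call to Lemma~\ref{lem:strip}) where a logarithmic factor is incurred. Once this is spelled out, the corollary follows by summing the running times over the constant number of cases of Propositions~\ref{prop:rb4} and~\ref{prop:rb5}, which gives $O(n\log n)$ overall.
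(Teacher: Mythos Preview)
Your proposal is correct and follows essentially the same approach as the paper: invoke Lemma~\ref{lem:strip} for the $O(n\log n)$ strip computation, implement each ray-rotation step by an angular sort around the pivot in $O(n\log n)$ time followed by a linear scan, and appeal to Lemma~\ref{lem:six} for the $O(n)$ hexagon construction. The paper's own proof is terser but identifies exactly the same three ingredients (strip, cyclic order around $x,y,z,w$, and Lemma~\ref{lem:six}); your version simply makes the angular-sort implementation of the rotation step explicit.
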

\begin{proof}
By Lemma~\ref{lem:strip}, the strip $ST$ used in the proofs of Propositions~\ref{prop:rb4} and~\ref{prop:rb5} can be found in $O(n\log n)$ time. In addition, the cyclic order of the points in $S$ around any of $x,y,z$ or $w$ can be computed in $O(n\log n)$ time. A perfect rainbow hexagon as described in Lemma~\ref{lem:six} can be obtained in $O(n)$ time. Therefore, the corollary follows.
\end{proof}

\section{Upper bound for rainbow indexes}\label{sec:upper}

We show in this section that for every $k$-colored point set, there exists a perfect rainbow polygon of size at most $10 \lfloor\frac{k}{7}\rfloor + 11$. We begin with an auxiliary lemma showing that any seven (monochromatic) points in a vertical strip can be covered by a noncrossing forest of two trees of order four and two, respectively, such that both trees are fully contained in the strip.

\begin{lemma}\label{lem:seven}
Let $S=\{p_1, \ldots , p_7\}$ be a point set in general position in the plane, ordered by $x$-coordinate. Let $B$ be the strip defined by the two vertical lines passing through $p_1$ and $p_7$, respectively. Then, in $O(1)$ time, we can construct two noncrossing trees $T_1$ and $T_2$, of order four and two, respectively, with the following properties:
\begin{enumerate}
  \item[(i)] The union of $T_1$ and $T_2$ covers $S$ and is contained in $B$.
  \item[(ii)] For $i\in\{1,2\}$, the tree $T_i$ has a leaf $v_i$ such that the ray emanating from $v_i$ in the direction opposite to the edge incident to $v_i$ goes to the left and does not cross $T_i$. Moreover, if the extension at $v_i$ hits~$T_j$, $j\neq i$, then the extension at $v_j$ does not hit $T_i$; that is, the two trees and the two extensions do not create cycles.
\end{enumerate}
\end{lemma}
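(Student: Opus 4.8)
The plan is to dispose of $T_2$ and the extension/no-cycle conditions first, which reduces the lemma to a statement about covering five points. Relabel so that $x(p_1)<\dots<x(p_7)$ (we may assume the $x$-coordinates are distinct, applying an infinitesimal rotation otherwise), so $B$ is the slab $\{x(p_1)\le x\le x(p_7)\}$. Set $T_2:=p_6p_7$ and $v_2:=p_6$: the ray from $p_6$ opposite to the edge $p_6p_7$ has negative $x$-component, so it ``goes to the left'', and it points away from $p_7$, so it does not meet $T_2$ again. I will then build $T_1$ covering $\{p_1,\dots,p_5\}$, contained in the slab $\{x(p_1)\le x\le x(p_5)\}\subseteq B$, with $p_1$ a leaf of $T_1$ whose unique incident edge goes to the right. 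Setting $v_1:=p_1$, the ray from $p_1$ opposite to its edge has negative $x$-component, hence ``goes to the left''; it lies in the halfplane $\{x\le x(p_1)\}$, so it meets $T_1\subseteq\{x\ge x(p_1)\}$ only at $p_1$ and never reaches $T_2\subseteq\{x\ge x(p_6)\}$. Thus the extension at $v_1$ crosses neither $T_1$ nor $T_2$, and since it does not hit $T_2$, the no-cycle condition in (ii) holds vacuously, whether or not the extension at $v_2$ hits $T_1$. Only the construction of $T_1$ remains.

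The substance is that construction. Since $S$ is in general position, no segment spanned by two of $p_1,\dots,p_5$ passes through a third, so covering these five points with only four tree vertices is impossible without a vertex $d\notin S$, and $d$ must have degree at least two and carry one of the five points in the interior of each of two incident edges. I would look for $T_1$ among a short list of shapes, in each of which $d$ is obtained as the intersection of two rays gotten by extending two edges past the points they must carry. Two representative candidates: the $x$-monotone path $p_1$--$d$--$p_4$--$p_5$ with $p_2$ on $p_1d$ and $p_3$ on $dp_4$, where $d$ is the intersection of the rays $\overrightarrow{p_1p_2}$ and $\overrightarrow{p_4p_3}$ taken beyond $p_2$ and beyond $p_3$; and the star with centre $d$, leaves $p_1,p_3,p_5$, and $p_2$ on $dp_1$, $p_4$ on $dp_5$, where $d$ is the intersection of $\overrightarrow{p_1p_2}$ and $\overrightarrow{p_5p_4}$ beyond $p_2$ and beyond $p_4$ (together with the sibling path $p_1$--$p_2$--$d$--$p_5$, and the reflections of the star across the $x$-axis). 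For each candidate one checks quickly that $d$ lands in the required vertical slab exactly when those two rays cross there; that the tree is noncrossing (automatic for the monotone paths, and for the star the one extra spoke $dp_3$ adds no crossing); that it stays inside $\{x(p_1)\le x\le x(p_5)\}$; and --- crucially --- that $p_1$ is a leaf whose incident edge goes to the right (it is the left endpoint of a monotone path, and a leaf lying left of $d$ in the star).

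The heart of the argument, and the main obstacle, is showing that such a short list is exhaustive. I would distinguish cases by the order type of $p_1,\dots,p_5$ --- concretely, by which side of the lines $p_1p_3$, $p_2p_4$, $p_3p_5$ the remaining points lie on, using reflection across the $x$-axis to halve the cases --- and argue that in each case some candidate is feasible. The delicate point is that for configurations in convex position, e.g.\ a near-regular pentagon, there is no four-vertex $x$-monotone covering path at all, so the star construction is genuinely required; and in every case one must confirm that the chosen $T_1$ keeps $p_1$ a leaf with a rightward neighbour, so that its extension safely leaves $B$ on the left as (ii) demands. Finally, since everything involves only the seven given points and a bounded number of ray intersections and orientation tests, $T_1$ and $T_2$ are produced in $O(1)$ time.
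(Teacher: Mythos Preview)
Your reduction --- setting $T_2=p_6p_7$, $v_2=p_6$, and confining $T_1$ to the slab $\{x(p_1)\le x\le x(p_5)\}$ --- is valid and dispatches condition~(ii) cleanly: since the extension at $v_1=p_1$ lives in $\{x<x(p_1)\}$, it can never reach $T_2$, so the no-cycle clause holds automatically. This is a genuinely different decomposition from the paper's, which keeps both extreme points $p_1$ and $p_7$ in $T_1$ and determines $T_2$ only after analysing the convex hull of all seven points.

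The gap is exactly where you locate it: the construction of $T_1$. You list a handful of candidate shapes (two $x$-monotone paths, a star, and its reflection) and propose a case analysis by order type of $\{p_1,\dots,p_5\}$, but you do not carry it out, and it is not obvious that your short list suffices. For instance, when the lines $p_1p_2$ and $p_5p_4$ are nearly parallel the star candidate degenerates, and for suitable positions of $p_3$ relative to the line $p_2p_4$ both path candidates can simultaneously place their Steiner vertex $d$ on the wrong side of the point it is supposed to absorb (so that, say, $p_2\notin p_1d$ or $p_4\notin dp_5$). A complete proof along your lines would need either to enlarge the candidate list (allowing, e.g., $p_3$ rather than $p_2$ on the edge out of $p_1$, or paths such as $p_1\text{--}d\text{--}p_3\text{--}p_5$) or to give an argument that excludes all such configurations. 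As written, the ``main obstacle'' you flag is left open, so the proof is incomplete.

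By contrast, the paper avoids the five-point sub-problem entirely. It examines the lower arc of $\operatorname{conv}(S)$ for all seven points and distinguishes only two cases, according to whether that arc has exactly three or at least four vertices. In each case $T_1$ is built directly from convex-hull vertices together with one rotational sweep to pick up an interior point, and $T_2$ is simply the segment spanned by the two points not yet covered. The casework is therefore short and fully explicit, at the price of a slightly less automatic verification of condition~(ii).
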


\begin{proof}
Let $\ell$ be the line passing through $p_1$ and $p_7$.
Without loss of generality, we may assume that $S$ contains at least $\lceil 5/2\rceil=3$ points below $\ell$.
Note that $p_1$ and $p_7$ are extremal points in~$S$, hence they are vertices of the convex hull $\mathrm{conv}(S)$ of $S$. Points $p_1$ and $p_7$ decompose the boundary of $\mathrm{conv}(S)$ into two convex arcs, an \emph{upper arc} and a \emph{lower arc}. Since $S$ contains at least 3 points below $\ell$, the lower arc must have at least 3 vertices (including $p_1$ and $p_7$). We distinguish between two cases depending on the number of vertices of the lower arc of $\mathrm{conv}(S)$.

\begin{figure}[htb]
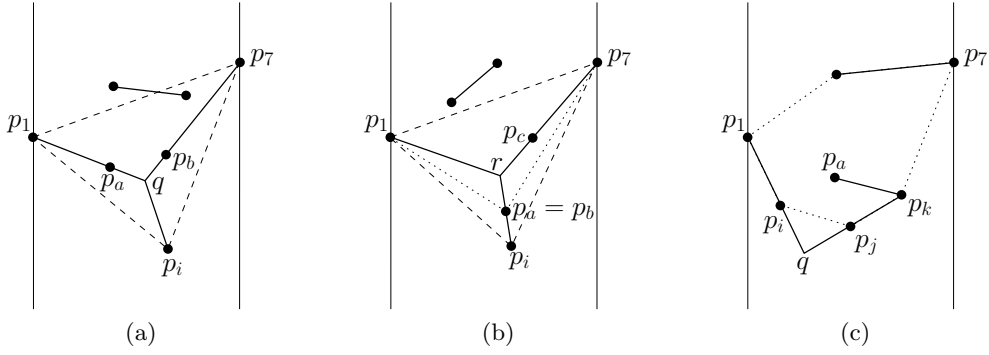

	\centering
	\subfloat[]{
		\includegraphics[scale=0.60,page=26]{Figures.pdf}
		\label{fig:ccc1}
	}~~~~~~
	\subfloat[]{
		\includegraphics[scale=0.60,page=27]{Figures.pdf}
		\label{fig:ccc2}
	}~~~~~~
	\subfloat[]{
		\includegraphics[scale=0.60,page=28]{Figures.pdf}
		\label{fig:ch5}
	}
	\caption{Finding two trees when the lower arc of $\mathrm{conv}(S)$ has three vertices (\ref{fig:ccc1}--\ref{fig:ccc2}), and when it has four or more vertices (\ref{fig:ch5}).}
	\label{fig:three}
\end{figure}

\paragraph{The lower arc of $\mathrm{conv}(S)$ has 3 vertices.}
Assume that the lower arc of $\mathrm{conv}(S)$ is the path $(p_1,p_i,p_7)$, where $1<i<7$; see \figurenames~\ref{fig:ccc1}--\ref{fig:ccc2}.
Since $S$ contains at least 3 points below $\ell$, at least 2 points of $S$ are in the interior of $\triangle p_1p_ip_7$. Rotate the ray $\overrightarrow{p_1p_i}$ counterclockwise until it encounters a point $p_a$ in the interior of $\triangle p_1p_ip_7$; rotate the ray $\overrightarrow{p_7p_i}$ clockwise until it encounters a point $p_b$ in the interior of $\triangle p_1p_ip_7$. We distinguish between two cases depending on whether $p_a$ and $p_b$ are distinct.

In the first case, assume that $p_a\neq p_b$; see \figurename~\ref{fig:ccc1}. The rays $\overrightarrow{p_1p_a}$ and $\overrightarrow{p_7p_b}$ intersect in the interior of $\triangle p_1p_ip_7$, at some point $q$. By construction, the remaining two points in $S\setminus \{p_1,p_i,p_7,p_a,p_b\}$ are above the path $(p_1,q,p_7)$, in a wedge bounded by $\overrightarrow{qp_1}$ and $\overrightarrow{qp_7}$. This wedge is convex, hence it contains the line segment between the two points. Let $T_1$ be the star centered at $q$ with edges $p_1q$, $p_iq$, and $p_7q$; and let $T_2$ be the line segment spanned by the two points of $S$ above $(p_1,q,p_7)$.

In the second case, assume that $p_a=p_b$; see \figurename~\ref{fig:ccc2}. Let $r$ be the first point along the ray $\overrightarrow{p_ip_a}$ such that the line segment $p_1r$ or $p_7r$ contains a point of $S$ in the interior of $\triangle p_1p_ap_7$. Denote this point by $p_c\in S$.
By construction, the remaining two points in $S\setminus \{p_1,p_i,p_7,p_a,p_c\}$ are above the path $(p_1,r,p_7)$, in a convex wedge bounded by $\overrightarrow{rp_1}$ and $\overrightarrow{rp_7}$.
Let $T_1$ be the star centered at $r$ with edges $p_1r$, $p_ir$, and $p_7r$, and let $T_2$ be the line segment spanned by the two points of $S$ above $(p_1,r,p_7)$.

In both cases, $T_1\cup T_2$ covers all seven points in $S$, is in $B$, and is noncrossing, as required.
Moreover, the second property of the lemma is clearly satisfied by choosing $v_1=p_1$ and $v_2$ the leftmost point of $T_2$.

\paragraph{The lower arc of \boldmath{$\operatorname{conv}(S)$} has 4 or more vertices.}
Let $p_1, p_i, p_j, p_k$ be the first four vertices of the lower arc of $\mathrm{conv}(S)$ in counterclockwise order (possibly $p_k=p_7$). Let $q$ be the intersection point of the lines passing through $p_1$ and $p_i$, and through $p_j$ and $p_k$, respectively. Rotate the ray $\overrightarrow{p_kp_j}$ clockwise until it encounters a point in $S\setminus \{p_1, p_i, p_j, p_k\}$, and denote it by $p_a$, Let $T_1$ be the union of the path $(p_1,q,p_k,p_a)$, and connect the two remaining points of $S$ to define $T_2$; see \figurename~\ref{fig:ch5}.
Note that $T_1\cup T_2$ covers all seven points in $S$, is in $B$, and is noncrossing, as required. Moreover, the second property of the lemma is clearly satisfied by choosing $v_1=p_1$ and $v_2$ the leftmost point of $T_2$.
\end{proof}

Using Lemma~\ref{lem:seven}, the following theorem provides a method to find noncrossing covering trees with few segments and forks.

\begin{figure}[htb]
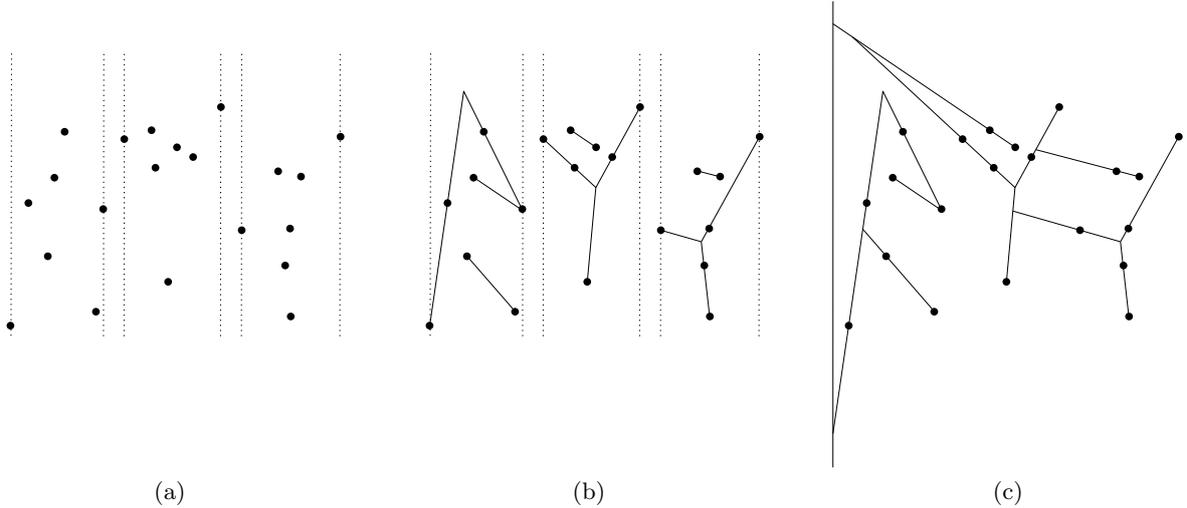

	\centering
	\subfloat[]{
		\includegraphics[scale=0.48,page=29]{Figures.pdf}
		\label{fig:groups}
	}~~~~
	\subfloat[]{
		\includegraphics[scale=0.48,page=30]{Figures.pdf}
		\label{fig:components}
	}~~~~
	\subfloat[]{
		\includegraphics[scale=0.48,page=31]{Figures.pdf}
		\label{fig:gluing}
	}

	\caption{Part (a): Dividing the $n$ points into groups of size 7. Part (b): Applying Lemma \ref{lem:seven} to each group. Part (c): Joining all trees to a vertical segment.}
	\label{fig:coveringtree}
\end{figure}

\begin{theorem}\label{the:boundtree}
Let $S$ be a finite set of $n=7j+r$ points in the plane in general position, with $j\ge 0$ and $0\le r \le 6$.
Then, in $O(n\log n)$ time, we can construct a noncrossing covering tree $T$ consisting of $4 \lfloor \frac{n}{7} \rfloor + 1 + \lceil \frac{r}{2}\rceil$ segments and $2 \lfloor \frac{n}{7} \rfloor + \lceil \frac{r}{2}\rceil$ forks with multiplicity 1.
\end{theorem}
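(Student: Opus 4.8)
The plan is to carry out the construction sketched in Figure~\ref{fig:coveringtree}: partition the points into blocks of seven along the $x$-axis, treat each block with Lemma~\ref{lem:seven}, and hang all the resulting small trees off a single tall vertical segment placed to the left of everything.

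Concretely, I would first sort $S=\{p_1,\dots,p_n\}$ by $x$-coordinate in $O(n\log n)$ time. Consecutive points are then separated by vertical lines, so the first seven points lie in a vertical strip $B_1$, the next seven in a strip $B_2$ lying to the right of $B_1$, and so on; write $j=\lfloor n/7\rfloor$ for the number of complete blocks, and let the last $r$ points be grouped into $\lceil r/2\rceil$ small pieces (pairs of consecutive points, with a lone point when $r$ is odd). Applying Lemma~\ref{lem:seven} to block $g$ produces two noncrossing trees $T_1^g,T_2^g\subset B_g$ covering the seven points of the block: $T_1^g$ is either a three-edge star or a three-edge path, so it consists of exactly three segments and has no fork, while $T_2^g$ is a single edge, i.e.\ one segment and no fork. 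Each $T_i^g$ comes with a distinguished leaf $v_i^g$ whose incident edge, extended backwards, gives a leftward ray that avoids $T_i^g$; and since $v_1^g=p_1$ is the leftmost point of its block, that ray also avoids $T_2^g$, which lies entirely to its right.

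Next I place a vertical segment $\sigma$ to the left of all of $S$, tall enough to be met by all the leftward rays above, and I grow a connected structure $\Sigma$, initialised to $\sigma$, by processing the pieces $T_1^1,T_2^1,T_1^2,T_2^2,\dots$ and finally the $\lceil r/2\rceil$ remainder pieces, in this left-to-right order. To attach a piece with distinguished leaf $v$ I follow the leftward ray from $v$ until it first touches the current $\Sigma$ and join there; for a lone remainder point I instead run a fresh segment from it to the first point of $\Sigma$ met by some ray. Because the attaching ray from $v$ is collinear with the piece's edge at $v$, after suppressing the degree-two vertex at $v$ the two merge into one segment, so attaching a piece creates no new segment; it does create exactly one fork, at the touch point, which lies in the relative interior of a segment of $\Sigma$ and has multiplicity $1$ since the incoming segment ends there and (generically) stays on one side of its supporting line. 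A lone remainder point instead contributes one new segment and one fork. Summing: each of the $j$ blocks contributes $4$ segments and $2$ multiplicity-one forks, each of the $\lceil r/2\rceil$ remainder pieces contributes $1$ segment and $1$ fork, and $\sigma$ contributes $1$ segment, giving $4j+\lceil r/2\rceil+1$ segments and $2j+\lceil r/2\rceil$ forks, as claimed; the running time is $O(n\log n)$, dominated by the initial sort (the attachments can be performed within this bound).

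The real work -- and the step I expect to be the main obstacle -- is verifying that the graph $T$ produced this way is a plane tree covering $S$. Covering and connectivity are easy: each point lies at a vertex or on an edge of some piece, and each piece is attached to the connected structure $\Sigma$, so by induction $T$ is connected. Planarity is forced by the ``stop at first contact'' rule, together with the facts that a newly added piece crosses neither $T_i^g$ (by Lemma~\ref{lem:seven}(ii), using the leftmost-point remark for $v_1^g$) nor any earlier piece (already part of $\Sigma$ when it is processed). For acyclicity, the key observation is that processing block $g$ introduces two new components $T_1^g,T_2^g$ together with exactly two bridging edges; these could close a cycle only if one edge joined $T_1^g$ to $T_2^g$ via $v_1^g$'s ray hitting $T_2^g$ and the other did the same via $v_2^g$'s ray hitting $T_1^g$, but the former cannot occur (leftmost-point remark) and, in any case, this is precisely the configuration excluded by the ``no cycles'' clause of Lemma~\ref{lem:seven}(ii). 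What remains is routine genericity bookkeeping: a small perturbation guarantees that all touch points are distinct, lie in segment interiors rather than at existing vertices, and are not collinear with the segments they meet, so that every fork has multiplicity exactly $1$.
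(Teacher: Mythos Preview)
Your construction is essentially the same as the paper's: sort by $x$-coordinate, partition into blocks of seven, apply Lemma~\ref{lem:seven} to each block, and connect everything by extending the distinguished leaves leftward to a tall vertical segment. The only real difference is cosmetic --- you process pieces left to right and attach each one to the already-built structure $\Sigma$, whereas the paper processes the special leaves right to left and lets an extension stop at any tree, any earlier extension, or $P'$; both orderings yield a noncrossing covering tree with the stated counts. The one place your write-up is thinner than the paper's is the running-time claim: you assert that ``the attachments can be performed within this bound'' without saying how, but finding the first contact point of each leftward ray with the growing structure is not a constant-time operation, and a naive implementation is $\Theta(n^2)$. The paper handles this with an explicit right-to-left sweepline that maintains the intersection of a vertical line with the current forest and the active extensions, giving $O(\log n)$ per event; you should either cite that or sketch an equivalent data structure.
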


\begin{proof}
By rotating the point set if necessary, we may assume that the points in $S$ have distinct $x$-coordinates. We assume first that $n=7j$, for some integer $j>0$. \figurename~\ref{fig:coveringtree} illustrates the method to obtain a noncrossing covering tree with $4 \lfloor \frac{n}{7} \rfloor + 1$ segments and $2 \lfloor \frac{n}{7} \rfloor$ forks with multiplicity 1. We partition the $n$ points from left to right into $j$ groups $G_1, G_2, \ldots , G_j$ of seven points each; see \figurename~\ref{fig:groups}. We apply Lemma~\ref{lem:seven} to every group $G_i$ to cover the points in $G_i$ by two trees, consisting of 4 segments in total; see \figurename~\ref{fig:components}. In this way, we obtain a forest~$F$ formed by $2j$ trees with $4j$ segments. In addition, by the same lemma, every tree $T_i$ of~$F$ contains a special leaf $v_i$ that can be extended to the left without crossing $T_i$.

We add a long vertical segment $P'$ to the left of the point set such that the extension of any tree $T_i$ of $F$ at $v_i$ crosses $P'$; see \figurename~\ref{fig:gluing}. For every tree $T_i$, we extend the edge incident to its special leaf $v_i$ to the left until the extension hits another tree, another extension, or $P'$; see \figurename~\ref{fig:gluing}. This is carried out exploring, for example, the special leaves from right to left. Thus, we join the $2j$ trees of $F$ and the segment $P'$ to form an single component. This component is necessarily a noncrossing covering tree $T$ with $4j+1$ segments and $2j$ forks with multiplicity 1, since all extensions go to the left without creating cycles. Therefore, there exists a noncrossing covering tree $T$ consisting of $4 \lfloor \frac{n}{7} \rfloor + 1$ segments and $2 \lfloor \frac{n}{7} \rfloor$ forks with multiplicity 1.

Consider the case that $n=7j+r$, where $1\le r\le 6$. Using the first $7j$ points from left to right, we proceed as before, and we build a noncrossing covering tree $T$ with $4j+1$ segments and $2j$ forks with multiplicity 1. If $j=0$, the previous step is not required. The last $r$ points can be covered by connecting the point at position $7j+1$ to the following one, the point at position $7j+3$ to the following one, and so on. If the last point cannot be connected to the following one, we assign a small horizontal segment to it. In this way, we are covering the last~$r$ points with $\lceil \frac{r}{2}\rceil$ segments. These segments can be joined to $T$ by extending their leftmost points. Therefore, we can obtain a noncrossing covering tree $T'$ consisting of $4 \lfloor \frac{n}{7} \rfloor + 1 + \lceil \frac{r}{2}\rceil$ segments and $2 \lfloor \frac{n}{7} \rfloor + \lceil \frac{r}{2}\rceil$ forks with multiplicity 1.

It remains to show that the construction above can be implemented in $O(n\log n)$ time.
We can sort the points in $S$ in increasing order by $x$-coordinates in $O(n\log n)$ time, and
hence partition $S$ into $O(n)$ groups of size seven. For each group, we can find two trees in $O(1)$ time by Lemma~\ref{lem:seven}. Finally, we can compute the left extensions of the special leaves $v_i$ of all trees $T_i$ by a standard sweepline algorithm~\cite[Sec.~2.1]{Dutch08} as follows: We sweep a vertical line~$L$ right to left from the rightmost point to one unit left of the leftmost point in $S$. In the course of the algorithm, we maintain the intersection of~$L$ with the forest~$F$, and the left extensions of all leaves $v_i$ to the right of~$L$. An event queue maintains the time steps when~$L$ passes through a vertex of $F$, when a left extension hits an edge of $F$ (in which case the left extension ends), and when two left extensions meet (in which case one extension ends and other one continues). There are $O(n)$ events, and the event queue can be updated in $O(\log n)$ time for each event. Consequently, the sweepline algorithm runs in $O(n\log n)$ time. In the last step, the sweepline~$L$ is to the left of $S$; we can let the vertical line segment $P'$ be the convex hull of the intersections of~$L$ with all surviving left extensions.
\end{proof}

Notice that by construction, the minimum number of pairwise noncrossing segments into which~$T'$ can be decomposed is precisely $4 \lfloor \frac{n}{7} \rfloor + 1 + \lceil \frac{r}{2}\rceil$. As a consequence of this theorem, we can give an upper bound for the size of a perfect rainbow polygon.

\begin{theorem}\label{the:boundpolygon}
Let $S$ be a $k$-colored set of $n$ points in general position.
Then a perfect rainbow polygon $P$ of size at most $10 \lfloor \frac{k}{7}\rfloor  + 11$
can be computed in $O(n\log n)$ time.
\end{theorem}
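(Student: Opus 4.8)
The plan is to pick one point of each color and build a small noncrossing covering tree for these $k$ points, then thicken it into a perfect rainbow polygon via Lemma~\ref{lem:sep}. The natural approach is to select the $k$ representatives greedily so that Theorem~\ref{the:boundtree} applies with $n=k$: choose any point of each color, obtaining a $k$-point set $S'$ in general position (a tiny perturbation preserves this and does not change which original points lie in the thickened polygon, since the polygon can be made arbitrarily close to the tree). Write $k=7j+r$ with $0\le r\le 6$. Applying Theorem~\ref{the:boundtree} to $S'$ yields, in $O(n\log n)$ time, a noncrossing covering tree $T$ with $s=4\lfloor k/7\rfloor + 1 + \lceil r/2\rceil$ segments and $t=2\lfloor k/7\rfloor + \lceil r/2\rceil$ forks of multiplicity~$1$. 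By Lemma~\ref{lem:sep}, for every $\eps>0$ there is a simple polygon $P$ with $2s+t$ vertices, of area at most $\eps$, containing $T$ (hence containing exactly the $k$ chosen points, for $\eps$ small enough that $P$ stays disjoint from the rest of $S$). This gives a perfect rainbow polygon of size
\[
2s+t = 8\Big\lfloor\frac{k}{7}\Big\rfloor + 2 + 2\Big\lceil\frac{r}{2}\Big\rceil + 2\Big\lfloor\frac{k}{7}\Big\rfloor + \Big\lceil\frac{r}{2}\Big\rceil
= 10\Big\lfloor\frac{k}{7}\Big\rfloor + 2 + 3\Big\lceil\frac{r}{2}\Big\rceil.
\]

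The remaining step is to check the arithmetic: since $0\le r\le 6$ we have $\lceil r/2\rceil\le 3$, so $2+3\lceil r/2\rceil \le 11$, and therefore $2s+t\le 10\lfloor k/7\rfloor + 11$, as claimed. (The bound is attained when $r\in\{5,6\}$; for smaller $r$ one gets a slightly better polygon, but the stated uniform bound suffices.) One small caveat: Theorem~\ref{the:boundtree} is stated for $j\ge 0$, which covers $k<7$ as well, so no separate treatment of small $k$ is needed here — though of course the exact values from Section~\ref{sec:small} are better for those cases.

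For the running time, observe that all the work is on the $k$-point set $S'$, but we first must read $S$ and extract one point per color, which is $O(n)$ after an $O(n\log n)$ sort (or a bucketing pass); constructing $T$ via Theorem~\ref{the:boundtree} costs $O(k\log k)=O(n\log n)$; and the thickening in Lemma~\ref{lem:sep} is a local $O(k)$ construction once a suitable $\delta$ is fixed. Choosing $\delta$ small enough to avoid all of $S$ requires knowing the minimum distance from $T$ to $S\setminus S'$, computable in $O(n\log n)$ time (e.g.\ via a plane-sweep or nearest-neighbor structure), so the total is $O(n\log n)$.

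The main obstacle I expect is purely bookkeeping rather than conceptual: making sure the thickened polygon produced by Lemma~\ref{lem:sep} genuinely contains \emph{only} the $k$ selected points and not stray points of $S$ of already-used colors. This is handled by the area/neighborhood estimate in Lemma~\ref{lem:sep} — shrinking $\delta$ forces $P$ into an arbitrarily thin neighborhood of $T$ — combined with the observation that $T$ touches $S$ only at the $k$ chosen points, so for $\delta$ below the distance from $T$ to the rest of $S$, the polygon $P$ is clean. Everything else is a direct substitution into Theorem~\ref{the:boundtree} and Lemma~\ref{lem:sep}.
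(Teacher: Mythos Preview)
Your proposal is correct and follows essentially the same route as the paper: pick one representative per color, apply Theorem~\ref{the:boundtree} to this $k$-point set to get a covering tree with the stated $s$ and $t$, thicken via Lemma~\ref{lem:sep}, and choose the thickness small enough (via a minimum-distance computation in $O(n\log n)$ time) to exclude all other points of $S$. The paper computes the minimum distance using a Voronoi diagram of the tree edges and the remaining points, whereas you gesture at a sweep or nearest-neighbor structure, but this is a cosmetic difference; the arithmetic and overall structure match.
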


\begin{proof}
We choose a point of each color to define a point set $S'$ of cardinality $k=7j+r$, with $j\ge 0$ and $0\le r \le 6$. By Theorem~\ref{the:boundtree}, there is a noncrossing covering tree $T'$ for the point set $S'$, consisting of $4 \lfloor \frac{k}{7} \rfloor + 1 + \lceil \frac{r}{2}\rceil$ segments and $2 \lfloor \frac{k}{7} \rfloor + \lceil \frac{r}{2}\rceil$ forks with multiplicity 1, and it can be computed in $O(k\log k)$ time.
By Lemma~\ref{lem:sep}, given a noncrossing covering tree $T$ and a partition $\mathcal{M}$ of the edges into the minimum number $s$ of pairwise noncrossing segments, for every $\eps>0$, there exists a simple polygon $P$ with $2s+t$ vertices such that $\area(P)\leq \eps$ and $T$ lies in $P$, where $t$ is the sum of the multiplicities of all forks in $T$. Thus, for every $\eps>0$, we can construct a simple polygon $P'$ with $2 (4 \lfloor \frac{k}{7} \rfloor + 1 + \lceil \frac{r}{2}\rceil) + 2 \lfloor \frac{k}{7} \rfloor + \lceil \frac{r}{2}\rceil \le 10 \lfloor \frac{k}{7}\rfloor  + 11$ vertices such that $\area(P')\leq \eps$ and $S'$ lies in $P'$. By choosing $\eps$ sufficiently small so that $P'$ contains no other point in $S$ except for the points in $S'$, we can construct a perfect rainbow polygon for $S$ of size at most $10 \lfloor \frac{k}{7}\rfloor  + 11$.

A suitable $\eps>0$ can be half of the minimum distance between the covering tree $T'$ and the points in $S\setminus S'$. To find this distance, we can compute
the Voronoi diagram for a set of \emph{sites}, which consists of the $O(k)\leq O(n)$ edges of $T'$ and the $O(n)$ points in $S\setminus S'$ in $O(n\log n)$ time~\cite[Sec.~7.3]{Dutch08}. The Voronoi diagram is formed by $O(n)$ line segments and parabolic arcs;
and we can find the closest point in $T'$ (hence in $S\setminus S')$ for each of these arcs in $O(1)$ time.
\end{proof}

\section{Lower bound for rainbow indexes}\label{sec:lower}

For every $k\geq 3$, Dumitrescu et al.~\cite{DumitrescuGKT14} constructed a set $S$ of $n=2k$ points in the plane
in strong general position (without colors) for which
every noncrossing covering path has at least $(5n-4)/9$ edges. They also showed that every noncrossing covering tree for $S$ has at least $(9n-4)/17$ edges. Furthermore, every set of $n\geq 5$ points in general position in the plane admits a noncrossing covering tree with at most $\lceil n/2\rceil$ noncrossing segments, and this bound is the best possible. We recall that a segment is defined as a path of collinear edges.

In this section, we use the point sets constructed in~\cite{DumitrescuGKT14} to derive a lower bound for the complexity of a covering tree as defined in Section~\ref{sec:treesandpolygons}.
This bound, in turn, yields a lower bound on the complexity of perfect rainbow polygons for colored point sets built from such sets.

\begin{figure}[htbp]
			\centering
			\includegraphics[scale=0.5,page=32] {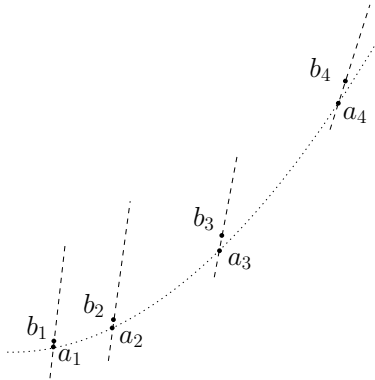}
		\caption{A sketch of the construction given in~\cite{DumitrescuGKT14} for $k=4$. The figure is not to scale.}
		\label{fig:twins}
\end{figure}

\paragraph{Construction.}
We use the point set constructed by Dumitrescu et al.~\cite{DumitrescuGKT14}.
We review some of its properties here.
For every $k\in \mathbb{N}$, they construct a set of $n=2k$ points, $S=\{a_i,b_i: i=1,\ldots , k\}$.
The pairs $\{a_i,b_i\}$ ($i=1,\ldots ,k)$ are called \emph{twins}.
The points $a_i$ ($i=1,\ldots ,k$) lie on the parabola $\alpha=\{(x,y):y= x^2\}$,
sorted by increasing $x$-coordinate. The points $b_i$ ($i=1,\ldots ,k$) lie on a
convex curve $\beta$ above $\alpha$,
such that $\dist(a_i,b_i)<\eps$ for a sufficiently small~$\eps$, and the lines $a_ib_i$ are
almost vertical with monotonically decreasing positive slopes (hence the supporting lines
of any two twins intersect below $\alpha$).
For $i=1,\ldots , k$, they also define pairwise disjoint disks $D_i(\eps)$ of radius $\eps$ centered at $a_i$ such that $b_i\in D_i(\eps)$, and the supporting lines of segments $a_ia_j$ and $b_ib_j$ meet in $D_i(\eps)$ for every $j$, $i < j \le k$.
Furthermore, (1) no three points in $S$ are collinear; (2) no two lines determined by the points in $S$ are parallel; and (3) no three lines determined by disjoint pairs of points in $S$ are concurrent.
The $x$-coordinates of $a_i$ ($i=1,\ldots ,k$) are chosen such that
(4) for any four points $c_1,c_2,c_3,c_4$ from $S$, labeled by increasing $x$-coordinate,
the supporting lines of $c_1c_4$ and $c_2c_3$ cross to the left of these points.
See \figurename~\ref{fig:twins} for a sketch of the construction.
Finally, the point set $S$ is perturbed into strong general position maintaining these
properties (except that the points $a_i$ ($i=1,\ldots ,k$) are no longer on a parabola).

\paragraph{Analysis.}
Let $S$ be a set of $n=2k$ points defined in~\cite{DumitrescuGKT14} as described above, for some $k>1$.
Let $\mathcal{M}$ be a set of pairwise noncrossing line segments in the plane
whose union is connected and contains $S$. In particular, if $T$ is a noncrossing covering tree for $S$, then any
partition of the edges of $T$ into pairwise noncrossing segments could
be taken to be $\mathcal{M}$.

A segment in $\mathcal{M}$ is called \emph{perfect} if it contains two points in $S$;
otherwise it is \emph{imperfect}.
By perturbing the endpoints of the segments in $\mathcal{M}$, if necessary,
we may assume that every point in $S$ lies in the relative interior of a segment in $\mathcal{M}$.
By the construction of $S$, no three perfect segments are concurrent, so we can define
the set $\Gamma$ of maximal paths of perfect segments; we call these \emph{perfect paths} or \emph{perfect chains}.

Dumitrescu et al.~\cite[Lemmata~4--10]{DumitrescuGKT14} proved several properties of a covering \emph{path} for $S$. Clearly, a covering path has precisely two leaves, while a covering tree may have arbitrarily many leaves. Their results are based on local configurations, however, and hold for any set of noncrossing segments $\mathcal{M}$ where the endpoints of perfect chains play the same role as the endpoints of a covering path. We restate their key results for a set $\mathcal{M}$ of noncrossing covering segments.

\begin{lemma} \label{lem:nearby} \cite[Lemma~7]{DumitrescuGKT14}
Let $pq$ be a perfect segment in $\mathcal{M}$ that contains one point from
each of the twins $\{a_i,b_i\}$ and $\{a_j,b_j\}$, where $i<j$. Assume that $p$
is the left endpoint of $pq$. Let $s$ be the segment in $\mathcal{M}$
containing the other point of the twin $\{a_i,b_i\}$. Then one of the
	following four cases occurs.
\begin{description} \itemsep -2pt
	\item[]\emph{Case 1:} $p$ is the endpoint of a perfect chain;
	\item[]\emph{Case 2:} $s$ is imperfect;
	\item[]\emph{Case 3:} $s$ is perfect, one of its endpoints $v$ lies in $D_i(\eps)$, and $v$ is the endpoint of a perfect chain;
	\item[]\emph{Case 4:} $s$ is perfect and $p$ is the common left endpoint of segments $pq$ and $s$.
\end{description}
\end{lemma}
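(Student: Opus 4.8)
\textbf{Proof proposal for Lemma~\ref{lem:nearby}.}
The plan is to adapt the proof of Lemma~7 in~\cite{DumitrescuGKT14} verbatim, checking that it never uses the global hypothesis that $\mathcal{M}$ is a covering path rather than a general set of pairwise noncrossing covering segments. The key observation is that the original argument is entirely local: it only looks at the segment $pq$, at the segment $s$ through the second point of the twin $\{a_i,b_i\}$, and at the disk $D_i(\eps)$. The only way a covering path enters the original proof is through the notion of an \emph{endpoint} of the path; here that role is played by an endpoint of a perfect chain, which is where a vertex of $V(\mathcal M)$ is incident to perfect segments that do not continue collinearly (equivalently, a leaf of the graph formed by the perfect segments, or a fork-type vertex where the chain is forced to stop). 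So the first step is to fix this translation of terminology and note that, by the choice of $\mathcal M$, no three perfect segments are concurrent, so the perfect chains of $\Gamma$ are well defined and their endpoints are well defined.

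Next I would carry out the case analysis exactly as in~\cite{DumitrescuGKT14}. Since $p$ is the left endpoint of $pq$ and $i<j$, by property~(4) of the construction the supporting lines of the two twin pairs covered by $pq$ meet to the left of $pq$, and in particular the second point of $\{a_i,b_i\}$, call it $c$, lies in $D_i(\eps)$ and is to the right of $p$ along its own segment's supporting line only in a controlled way; this is what forces the segment $s$ through $c$ to interact with $pq$ near $D_i(\eps)$. If $s$ is imperfect we are in Case~2 and there is nothing more to prove. If $s$ is perfect, then either $p$ is already an endpoint of the perfect chain containing $pq$ (Case~1), or the chain continues through $p$; in the latter situation one uses the noncrossing hypothesis together with the almost-vertical, monotone-slope geometry of the twins inside $D_i(\eps)$ to conclude that either $s$ shares the left endpoint $p$ with $pq$ (Case~4), or $s$ has an endpoint $v\in D_i(\eps)$ and, because two perfect segments through $D_i(\eps)$ cannot both be extended without crossing $pq$ or violating noncrossingness, $v$ must be an endpoint of a perfect chain (Case~3). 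Each branch is a direct transcription of the corresponding lines of~\cite[Lemma~7]{DumitrescuGKT14}, using only properties~(1)--(4) and the disk containment $b_i\in D_i(\eps)$.

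The step I expect to be the main obstacle is verifying that replacing ``endpoint of the covering path'' by ``endpoint of a perfect chain'' does not secretly use a global property—for instance, a covering path has exactly two leaves, so certain counting or parity arguments in the original paper might implicitly rely on that. Here I would be careful: Lemma~7 itself is purely structural and makes no counting claim, so the translation is safe, but I would explicitly flag that the later aggregate lemmas of~\cite{DumitrescuGKT14} that \emph{do} count endpoints will need a separate re-examination (which the paper handles in the subsequent analysis). A secondary technical point is the perturbation remark: one must note that perturbing the endpoints of the segments in $\mathcal M$ so that every point of $S$ lies in the \emph{relative interior} of its segment does not change the incidence structure of perfect chains or the case distinctions, since $S$ is finite and the configuration is open in the relevant sense. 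With these caveats addressed, the lemma follows immediately from the cited result, and the write-up can be as short as ``this is~\cite[Lemma~7]{DumitrescuGKT14}; the proof there is local and applies verbatim to any set $\mathcal M$ of pairwise noncrossing covering segments, with endpoints of perfect chains playing the role of path endpoints.''
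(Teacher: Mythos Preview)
Your proposal is correct and matches the paper's own treatment: the paper does not reprove this lemma but simply cites \cite[Lemma~7]{DumitrescuGKT14}, having already observed in the preceding paragraph that the arguments there ``are based on local configurations, however, and hold for any set of noncrossing segments $\mathcal{M}$ where the endpoints of perfect chains play the same role as the endpoints of a covering path.'' Your write-up articulates exactly this translation and its justification, so it is essentially the same approach, just spelled out in more detail than the paper itself provides.
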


\begin{lemma}\label{lem:vertical} \cite[Lemma~9]{DumitrescuGKT14}
Let $pq$ be a perfect segment in $\mathcal{M}$ that contains a twin
$\{a_i,b_i\}$, and let $q$ be the upper (\ie , right) endpoint of $pq$.
Then $q$ is the endpoint of a perfect chain.
\end{lemma}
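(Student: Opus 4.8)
The plan is to prove the lemma by contradiction, adapting the local analysis of Dumitrescu et al.\ \cite[Lemma~9]{DumitrescuGKT14} from a covering path to a set $\mathcal{M}$ of pairwise noncrossing segments. Assume $q$ is an interior vertex of the perfect chain through $pq$. Then there is a perfect segment $s\in\mathcal{M}$, not collinear with $pq$, that also has $q$ as an endpoint, and the chain continues from $q$ along $s$; here we may assume that no two segments of $\mathcal{M}$ are collinear and share an endpoint, since such a pair can be merged into one while keeping the union of $\mathcal{M}$ connected, noncrossing, and a cover of $S$. Since $pq$ contains both points of the twin $\{a_i,b_i\}$, it lies on the supporting line $\ell$ of $a_ib_i$, which is almost vertical with positive slope; as $q$ is the upper endpoint, the order along $\ell$ is $p,a_i,b_i,q$, with $q$ strictly above $b_i$.

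First I would record the separation property of the construction: because the supporting lines of the twins have monotonically decreasing positive slopes and any two of them meet below the parabola $\alpha$, the open left half-plane bounded by $\ell$ contains exactly $\{a_j,b_j:j<i\}$ and the open right half-plane exactly $\{a_j,b_j:j>i\}$; in particular, no point of $S$ other than $a_i,b_i$ lies on $\ell$. Since $s$ is not collinear with $\ell$ and meets $pq$ only at $q$, it leaves $q$ into the interior of exactly one of these two half-planes. Let $c$ be the point of $S$ on $s$ nearest to $q$, let $c'$ be the twin of $c$, and let $s'\in\mathcal{M}$ be a segment covering $c'$.

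I would then run a case analysis: first on which side of $\ell$ the segment $s$ leaves $q$, and then on whether $c$ is the $a$-point or the $b$-point of its twin. In each case the aim is to show, using the near-vertical orientation of the twin lines, the containments of the disks $D_j(\eps)$ (that $a_j,b_j\in D_j(\eps)$, and that the supporting lines of $a_ja_m$ and $b_jb_m$ meet in $D_j(\eps)$ for $m>j$), and the incidence properties (1)--(4) of the construction, that $c'$ is confined to a region whose boundary is formed by parts of $pq$, $s$, and the supporting line of one of them, so that no segment of $\mathcal{M}$ through $c'$ can reach the rest of $\mathcal{M}$ without two segments of $\mathcal{M}$ crossing --- a contradiction. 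These are precisely the local configurations treated in \cite{DumitrescuGKT14} (where the endpoints of perfect chains play the role of the endpoints of a covering path), so the arguments transfer with only notational changes, and I would write out the adapted case analysis in full. The main obstacle is this last step: a priori $q$ may lie far above $b_i$, so the ``walls'' $pq$ and $s$ may be long and their exact extents matter, and ruling out every way in which $s'$ could escape the region around $c'$ requires the fine-grained properties of the construction, not merely convexity.
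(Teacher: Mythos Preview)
The paper does not provide its own proof of this lemma: it is stated with a citation to \cite[Lemma~9]{DumitrescuGKT14}, preceded by the remark that the arguments there ``are based on local configurations \ldots\ and hold for any set of noncrossing segments $\mathcal{M}$ where the endpoints of perfect chains play the same role as the endpoints of a covering path.'' Your proposal is precisely an outline of how to carry out that adaptation --- assume a second perfect segment $s$ at $q$, use the near-vertical orientation of the twin line and the separation of $S$ by $\ell$, and trap the twin $c'$ of the nearest point on $s$ to force a crossing --- so it matches the paper's approach (by reference) and is in fact more explicit than what the paper writes. Your caveat at the end, that $q$ may lie far above $b_i$ and the precise extents of $pq$ and $s$ matter, is the right thing to flag; resolving it is exactly the content of the cited lemma in \cite{DumitrescuGKT14}, which the paper is invoking rather than reproving.
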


Let $\Gamma_x$ be the set of maximal \emph{$x$-monotone} chains of perfect segments in $\mathcal{M}$.

\begin{lemma}\label{lem:monotone}
The right endpoints of the chains in $\Gamma_x$ are distinct.
\end{lemma}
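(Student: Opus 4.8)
The plan is to argue by contradiction, exploiting the very strong ``crossing to the left'' properties (1)--(4) of the Dumitrescu et al.\ construction, in the same spirit as Lemmata~\ref{lem:nearby} and~\ref{lem:vertical}. Suppose two distinct $x$-monotone perfect chains $\gamma_1,\gamma_2\in\Gamma_x$ share a common right endpoint $q$. Being $x$-monotone, each $\gamma_i$ approaches $q$ from the left, i.e.\ the last perfect segment of $\gamma_i$ incident to $q$ has $q$ as its right endpoint; call these last segments $p_1q$ and $p_2q$ with $p_1,p_2$ to the left of $q$. Since $\gamma_1\neq\gamma_2$ and a perfect chain is a \emph{maximal} path of perfect segments, the two chains cannot continue past $q$ into one another; in particular $p_1q$ and $p_2q$ are two distinct perfect segments sharing the endpoint $q$ but lying on different perfect chains. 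The first step is to note that $q$, being a point of $S$, lies in the relative interior of some segment $e$ of $\mathcal{M}$, so $q$ is a vertex of $\mathcal{M}$ of degree $\ge 3$ (the two sides of $e$ plus at least one of $p_1q$, $p_2q$ pointing left of $q$ — after a suitable labeling at least one of them is genuinely a separate segment). Actually the cleanest route: each of $p_1q$, $p_2q$ contains two points of $S$, hence together with $q$ they involve twins; apply property~(4) to the four points $p_1,p_2$ and the two $S$-points on the far ends.

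The key combinatorial step is to use the concurrency-avoidance property~(3) together with the geometric fact that all ``short'' interactions happen inside the tiny disks $D_i(\eps)$ and all crossings of supporting lines of perfect segments occur \emph{to the left} of the relevant points (property~(4)). Because $\gamma_1$ and $\gamma_2$ are both $x$-monotone and end at $q$ from the left, there is an $\eps$-neighborhood of $q$ in which $\gamma_1$ and $\gamma_2$ each consist of a single segment entering from the left; since $\mathcal{M}$ is noncrossing, near $q$ these two segments must be collinear or one must ``turn'' at $q$. If they were collinear they would form one perfect segment (no three perfect segments concurrent forbids a third one, and two collinear perfect segments through $q$ would be merged into a single segment of $\mathcal{M}$ by minimality of the decomposition), contradicting that $\gamma_1,\gamma_2$ are distinct maximal chains. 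So the two incident perfect segments at $q$ are non-collinear, meaning $q$ is an \emph{interior} vertex of the union and both chains would have to continue — but continuing past $q$ in an $x$-monotone way forces moving to the right of $q$, and the only segments of $\mathcal{M}$ leaving $q$ to the right are parts of $e$ and possibly others; I then show that whichever way each chain tries to continue, it either re-enters the other chain (contradicting distinctness and noncrossingness — an $abab$-type obstruction as in the proof of Lemma~\ref{lem:tree}) or it in fact \emph{does} continue, contradicting that $q$ was its right endpoint.

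I expect the main obstacle to be pinning down exactly why $q$ cannot simultaneously be a right endpoint of two distinct chains \emph{given} that perfect chains are defined as \emph{maximal} $x$-monotone paths: the subtlety is that ``maximal $x$-monotone'' permits a chain to stop at $q$ even if a perfect segment leaves $q$ to the right, provided that segment is not an $x$-monotone continuation in the allowed direction — so one must be careful about the precise definition of $\Gamma_x$ and whether a chain may stop ``by choice.'' The right way to handle this is to observe that $\Gamma_x$ partitions the perfect segments canonically: at each vertex of the perfect-segment graph, the $x$-monotone chains passing through are determined by pairing up the incident segments on the left with those on the right in the unique crossing-free way, and property~(4) guarantees this pairing is unambiguous (no two incident segments go the ``same direction'' in a way that would create a choice), so each vertex — in particular $q$ — is the right endpoint of at most one chain. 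Formalizing this pairing argument, and checking it is forced by properties~(1)--(4), is the real content; everything else is the standard noncrossing/$abab$ bookkeeping already used in Lemmas~\ref{lem:sep} and~\ref{lem:tree}.
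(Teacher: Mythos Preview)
Your proposal does not yet contain a proof; several steps are either incorrect or missing the key idea.

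First, two factual slips. The common right endpoint $q$ is \emph{not} assumed to lie in $S$; in fact, after the perturbation ``every point in $S$ lies in the relative interior of a segment in $\mathcal{M}$,'' so $q\notin S$. Consequently your sentence ``$q$, being a point of $S$, lies in the relative interior of some segment $e$'' is unfounded, and the degree-at-$q$ discussion built on it collapses. Second, property~(4) is a statement about four points \emph{of $S$}, not about segment endpoints; applying it to ``the four points $p_1,p_2$ and the two $S$-points on the far ends'' is a category error, since $p_1,p_2$ need not belong to $S$.

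The substantive gap is that you never make the one case distinction that drives the argument: does one of the two last perfect segments $p_1q,\,p_2q$ contain a \emph{twin} $\{a_i,b_i\}$ or not? If, say, $p_1q$ contains a twin, then by construction its slope is positive and nearly vertical, so $q$ is its upper endpoint; Lemma~\ref{lem:vertical} then says $q$ is the endpoint of a perfect chain in $\Gamma$, which (since no three perfect segments are concurrent) forces every other segment of $\mathcal{M}$ incident to $q$ to be imperfect --- contradicting that $p_2q$ is perfect. If neither $p_1q$ nor $p_2q$ contains a twin, then the four $S$-points they cover (two on each) are the objects to which property~(4) applies: the supporting lines of the two segments must meet to the \emph{left} of all four of these points, but they meet at $q$, which lies to the right of them. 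That is the whole proof.

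Your final ``pairing argument'' and the $abab$ bookkeeping are not needed here; the lemma follows in two lines once you split on the twin/non-twin dichotomy and invoke Lemma~\ref{lem:vertical} and property~(4) correctly.
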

\begin{proof}
Suppose, for the sake of contradiction, that $\gamma_1,\gamma_2\in \Gamma_x$
have a common right endpoint~$q$. Let $pq$ and $rq$, respectively,
be the rightmost segments of $\gamma_1$ and $\gamma_2$.

If $pq$ contains a twin, then $pq$ has positive slope (by construction),
and so $q$ is the upper endpoint of $pq$. In this case segment $rq$ is imperfect
by Lemma~\ref{lem:vertical}, contradicting the assumption that $rq$ is in $\gamma_2$.
We may assume that neither $pq$ not $rq$ contains a twin. In this case,
their supporting lines intersect to the left of the points in $S$ on $pq$ and $pr$
 by property (4), contradicting our assumption that $q$ is the right endpoint of both segments.
\end{proof}
\begin{corollary}\label{cor:monotone}
Every chain in $\Gamma$ consists of at most two chains in $\Gamma_x$.
\end{corollary}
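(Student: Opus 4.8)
The plan is to derive the bound from Lemma~\ref{lem:monotone} by locating a forbidden configuration: a vertex that is the common right endpoint of two distinct chains in $\Gamma_x$. I will show that such a configuration must occur inside any perfect chain $\gamma\in\Gamma$ that is built from three or more chains of $\Gamma_x$.

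First I would record the following structural observation. Since no three perfect segments of $\mathcal{M}$ are concurrent, every point of the plane lies on at most two perfect segments. Hence, along a maximal perfect chain $\gamma\in\Gamma$, each interior vertex is incident to exactly two perfect segments, namely the two edges of $\gamma$ at that vertex (a third perfect segment there would either extend $\gamma$, contradicting its maximality, or violate non-concurrency), and each endpoint of $\gamma$ is incident to exactly one perfect segment. It follows that the maximal $x$-monotone subchains of $\gamma$---obtained by cutting $\gamma$ at its interior local $x$-extrema, i.e.\ at the interior vertices at which the $x$-coordinate along $\gamma$ reverses direction---are exactly the members of $\Gamma_x$ contained in $\gamma$: such a subchain cannot branch away from $\gamma$ at an interior vertex of $\gamma$, nor can it be extended past an endpoint of $\gamma$. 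In particular, ``$\gamma$ consists of $m$ chains in $\Gamma_x$'' is equivalent to ``$\gamma$ has $m-1$ interior local $x$-extrema.''

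Now the core argument. Suppose $\gamma$ consists of $m\ge 3$ chains $\gamma_1,\dots,\gamma_m\in\Gamma_x$, listed in the order in which they occur along $\gamma$; then $\gamma$ has $m-1\ge 2$ interior local $x$-extrema. These alternate between local maxima and local minima along $\gamma$, so at least one of them, say $v$, is a local $x$-maximum; let $\gamma_i$ and $\gamma_{i+1}$ be the two pieces of $\gamma$ meeting at $v$. Because $v$ is a local $x$-maximum of $\gamma$, the last edge of $\gamma_i$ and the first edge of $\gamma_{i+1}$ both reach $v$ from the left; since $\gamma_i$ and $\gamma_{i+1}$ are $x$-monotone, each of them lies entirely to the left of $v$, so $v$ is the right endpoint of both $\gamma_i$ and $\gamma_{i+1}$. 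As $\gamma$ is a simple path, $\gamma_i$ and $\gamma_{i+1}$ are distinct chains (they share only the vertex $v$ and no edge), so they contradict Lemma~\ref{lem:monotone}. Hence $m\le 2$, which is the assertion of the corollary.

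The only delicate point is the structural observation that the $x$-monotone pieces of $\gamma$ are genuine (maximal) members of $\Gamma_x$, rather than proper subchains of longer $x$-monotone chains; this is precisely where non-concurrency of perfect segments and maximality of the perfect chain are both used, so I would state and justify it explicitly. Everything after that is a one-line monotonicity remark followed by a direct appeal to Lemma~\ref{lem:monotone}, so I do not anticipate any further obstacle. (As a safeguard, one can even bypass the structural observation and argue directly: for any three consecutive chains of $\Gamma_x$ whose concatenation is a subpath of $\gamma$, one of the two transition vertices is a local $x$-maximum, and Lemma~\ref{lem:monotone} applies there.)
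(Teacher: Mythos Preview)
Your proposal is correct and is precisely the intended argument behind the corollary: the paper states it without proof, as the natural consequence of Lemma~\ref{lem:monotone}, and you have spelled out exactly that consequence. The only thing worth trimming is the structural observation, since the paper's definition of $\Gamma$ already uses that no three perfect segments are concurrent, so every chain in $\Gamma_x$ is automatically a subpath of a unique chain in $\Gamma$; with that taken for granted, your core paragraph alone suffices.
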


Denote by $s_0$, $s_1$, and $s_2$, respectively, the number of segments in $\mathcal{M}$ that contain 0, 1, and~2 points from~$S$.
An adaptation of a charging scheme from~\cite[Lemma~4]{DumitrescuGKT14} yields the following result,
where $t$ is the number of forks (with multiplicity) in $\mathcal{M}$.

\begin{lemma}\label{lem:s01}
	$s_2\leq 8s_0+9s_1+4(t+1)$.
\end{lemma}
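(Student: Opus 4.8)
The plan is to adapt the charging argument of Dumitrescu et al.~\cite[Lemma~4]{DumitrescuGKT14}, which in their setting bounds the number of perfect segments in a covering \emph{path} in terms of the imperfect ones, and to account for the extra freedom that a covering tree (or an arbitrary connected family $\mathcal{M}$ of noncrossing segments) allows, namely the forks. The key structural facts I would use are Lemma~\ref{lem:nearby}, Lemma~\ref{lem:vertical}, and Corollary~\ref{cor:monotone}: each perfect segment that contains two twins has, via Lemma~\ref{lem:nearby}, one of four ``reasons'' attached to the partner of its left twin, and three of those reasons point at an endpoint of a perfect chain, while the fourth (Case~4) says two perfect segments share a common left endpoint. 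So the natural charging scheme is: every perfect segment $pq$ counted in $s_2$ sends a unit charge either to an imperfect segment (Cases~2), or to an endpoint of a perfect chain (Cases~1 and~3), or ``along'' the chain towards a shared left endpoint (Case~4), and in the last case we follow the chain until it terminates.

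First I would set up the bookkeeping. Recall from Corollary~\ref{cor:monotone} that each perfect chain in $\Gamma$ splits into at most two $x$-monotone chains, and by Lemma~\ref{lem:monotone} the right endpoints of the monotone chains are distinct, so $|\Gamma_x|$ is controlled by the number of chain endpoints. A perfect chain ends (has a ``free'' endpoint) for one of a bounded number of reasons: the endpoint is an endpoint of some segment of $\mathcal{M}$ that is not continued collinearly, which happens either because the incident segment is imperfect, or because the vertex is a leaf of the underlying structure, or because the vertex is a fork — a point lying in the interior of one segment but being the endpoint of another. I would count chain endpoints against $s_0+s_1$ (imperfect segments contribute endpoints) plus the number of forks $t$ plus a small additive constant coming from the global structure ($+1$ for the connectedness/tree, analogous to the ``$+1$'' that appears in $4j+1$ in Theorem~\ref{the:boundtree}). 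This is where the factor $4(t+1)$ should enter: each fork can be the terminus of a bounded number (at most four, matching the coefficient) of the ``Case~4  propagation chains,'' and the $+1$ absorbs the one global chain-end not attributable to an imperfect segment or a fork.

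Concretely, the steps in order: (1) Perturb so every point of $S$ is interior to its segment, and fix $\mathcal{M}$, $\Gamma$, $\Gamma_x$; (2) For each perfect segment carrying two twins, apply Lemma~\ref{lem:nearby} to classify it into Cases~1--4, and for each perfect segment carrying \emph{no} twin, observe (as in the proof of Lemma~\ref{lem:monotone}, via property~(4)) that its right endpoint is a chain endpoint, so it too can be charged to a chain endpoint; (3) Show each perfect-chain endpoint receives $O(1)$ charge from Cases~1 and~3 (bounded local multiplicity, using that disks $D_i(\eps)$ are pairwise disjoint and property~(3) on non-concurrency); (4) Handle Case~4 by ``walking left'' along the shared-left-endpoint relation: such a walk along a perfect chain terminates at an endpoint of that chain, and different segments walking into the same endpoint are again $O(1)$ many; (5) Bound the total number of chain endpoints by $2(s_0+s_1) + 2t + 2$ or similar, using that an imperfect segment has two endpoints and a fork is the endpoint of at most two extra segments, then tighten constants to land at $8s_0+9s_1+4(t+1)$; this tightening is essentially Dumitrescu et al.'s accounting with Cases 2 and 3 weighted as they weight them (which is why $s_1$ gets coefficient $9$ and $s_0$ gets $8$ — the imperfect segment containing one point is ``worth more'' because it can be blamed both as the partner-segment $s$ in Case~2 and as a chain-terminator).

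The main obstacle I anticipate is step~(4): bounding the Case~4 propagation. In the original covering-path argument the two leaves of the path give a clean global bound, but a tree has many leaves and a covering family $\mathcal{M}$ has forks, so I must argue that following the ``common left endpoint'' relation can only add a controlled overhead per fork (the coefficient $4$) and that no cycles arise in this walk (which is where the noncrossing property and property~(4) — supporting lines of nested point-pairs crossing to the left — are essential, since a cycle would force a crossing or a concurrency forbidden by~(3)). I would also need to be careful that the ``$+1$'' genuinely suffices, i.e.\ that exactly one chain endpoint globally escapes being charged to an imperfect segment or a fork; this should follow from connectedness of $\bigcup\mathcal{M}$ together with the fact that, reading the monotone chains left to right, the leftmost relevant endpoint is unique.
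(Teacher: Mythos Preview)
Your plan correctly identifies the charging framework and the relevant lemmas, but two concrete pieces are missing or mis-aimed.

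First, your handling of Case~4 is more complicated than needed and not fully justified. You propose ``walking left along the shared-left-endpoint relation'' and worry about cycles. The paper's move is much simpler: in Case~4, the segment $pq$ is the \emph{leftmost} segment of its maximal $x$-monotone chain $\gamma_x\in\Gamma_x$, and one charges $pq$ to the \emph{right} endpoint of $\gamma_x$. By Lemma~\ref{lem:monotone} these right endpoints are distinct, and by Corollary~\ref{cor:monotone} each endpoint of a chain in $\Gamma$ is a right endpoint of at most one $\gamma_x$, so each perfect-chain endpoint absorbs at most one Case~4 charge. This immediately yields the clean intermediate inequality $s_2\le s_1+4|\Gamma|$ (each chain endpoint takes at most two charges total, and each chain has two endpoints), which you never isolate.

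Second, and more seriously, your step~(5) is where the tree-vs.-path adaptation actually happens, and you do not have the argument. You propose bounding ``chain endpoints'' by $2(s_0+s_1)+2t+2$ and then ``tightening constants,'' but that is not a proof and the constants do not obviously come out right. The paper's device is to bound $|\Gamma|$ directly: pick a root in $T$, orient all edges toward it, and observe that each perfect chain has a unique vertex closest to the root; by maximality and non-concurrency of three perfect segments, that vertex is a fork, an endpoint of an imperfect segment, or the root itself. Hence $|\Gamma|\le 2(s_0+s_1)+t+1$, and substituting into $s_2\le s_1+4|\Gamma|$ gives exactly $8s_0+9s_1+4(t+1)$. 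This rooted-tree counting is the one genuinely new idea beyond \cite{DumitrescuGKT14}, and it is absent from your outline.

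A minor point: your step~(2) terminology is garbled. A perfect segment either contains a twin $\{a_i,b_i\}$ (handled by Lemma~\ref{lem:vertical}) or contains one point from each of two twins (handled by Lemma~\ref{lem:nearby}); there is no separate ``carrying no twin'' case to treat via property~(4).
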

\begin{proof}

Let $pq$ be a perfect segment of $\mathcal{M}$, and part of a chain $\gamma\in \Gamma$.
We charge $pq$ to either an endpoint of $\gamma$ or some imperfect segment.

We define the charging as follows.
If $pq$ contains a twin, then charge $pq$ to the top vertex of
$pq$, which is the endpoint of a perfect chain by Lemma~\ref{lem:vertical}.
Assume now that $pq$ does not contain a twin, its left endpoint is $p$,
and it contains a point from each of the twins $\{a_i,b_i\}$ and $\{a_j,b_j\}$, with $i<j$.
We consider the four cases presented in Lemma~\ref{lem:nearby}.

In Case~1, charge $pq$ to $p$, which is the endpoint of a perfect chain.
In Case~2, charge $pq$ to the imperfect segment $s$ containing a point of the twin $\{a_i,b_i\}$.
In Case~3, charge $pq$ to the endpoint $v$ of a perfect chain located in $D_i(\eps)$.
Now, consider Case~4 of Lemma~\ref{lem:nearby}. In this case, $pq$ is the leftmost
segment of a maximal $x$-monotone chain $\gamma_x$.
We charge $pq$ to the right endpoint of $\gamma_x$, which is the endpoint of a perfect chain by Lemma~\ref{lem:monotone}. This completes the definition of the charges.

Note that every imperfect segment and every right endpoint of a chain in $\Gamma$
is charged at most once for perfect segments in Cases~1--3, and every left endpoint of a chain is charged at most twice.
By Corollary~\ref{cor:monotone}, each endpoint of a perfect chain is
charged at most once for perfect segments in Case~4.
Overall, every imperfect segment containing one point of $S$ is charged at most once,
and every endpoint of a perfect chain is charged at most twice.
Consequently,
\begin{equation}\label{eq:Gamma}
s_2\leq s_1+4|\Gamma|.
\end{equation}

We bound $|\Gamma|$ from above in terms of $s_0$, $s_1$, and $t$.
Choose an arbitrary root vertex in $T$, and direct all edges in $T$ towards the root.
Every perfect chain has a unique vertex $v$ closest to the root.
As all chains in $\Gamma$ are maximal and as no three perfect segments are concurrent,
$v$ must be a fork, the endpoint of an imperfect segment, or the root.
There are at most $t$ forks, the $s_0+s_1$ imperfect segments
jointly have at most $2(s_0+s_1)$ distinct endpoints, and we have one root.
This yields $|\Gamma|\leq 2(s_0+s_1)+t+1$. Combined with \eqref{eq:Gamma}, this yields,

\begin{equation*}
s_2 \leq s_1+4[2(s_0+s_1)+t+1]
=    8s_0+9s_1+4(t+1),
\end{equation*}
as claimed.
\end{proof}

\begin{lemma}\label{lem:lb}
Let $S$ be a set of $n=2k\geq 4$ points from~\cite{DumitrescuGKT14}.
Then every covering tree $T$ of $S$ satisfies $2s+t\geq (20n-8)/19$.
\end{lemma}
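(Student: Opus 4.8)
The plan is to combine the incidence identity ``every point of $S$ lies in the relative interior of exactly one segment of $\mathcal{M}$'' with the charging bound of Lemma~\ref{lem:s01}, then eliminate $s_1$ and optimize over the remaining parameters. This reduces the whole statement to a short piece of linear arithmetic.

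First I would fix a covering tree $T$ of $S$ and a partition $\mathcal{M}$ of its edges into the minimum number $s=s(T)$ of pairwise noncrossing segments, so that $t=t(T)$ is the sum of the fork multiplicities and $s=s_0+s_1+s_2$, where $s_0,s_1,s_2$ count the segments of $\mathcal{M}$ containing $0,1,2$ points of $S$, respectively. After the harmless perturbation of segment endpoints described at the start of the analysis in Section~\ref{sec:lower}, every point of $S$ lies in the relative interior of a segment of $\mathcal{M}$, and, since the segments of $\mathcal{M}$ are pairwise noncrossing and nonoverlapping, each point lies in exactly one such relative interior. Counting incidences between $S$ and the segments then gives $s_1+2s_2=n$. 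Substituting $s_1=n-2s_2$ yields
\[
2s+t \;=\; 2s_0+2s_1+2s_2+t \;=\; 2s_0+2n-2s_2+t .
\]
Since $s_2$ occurs here with a negative coefficient, a lower bound on $2s+t$ follows from an \emph{upper} bound on $s_2$ — which is precisely what Lemma~\ref{lem:s01} supplies.

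Next I would substitute $s_1=n-2s_2$ into the inequality $s_2\le 8s_0+9s_1+4(t+1)$ of Lemma~\ref{lem:s01}, obtaining $19s_2\le 8s_0+9n+4t+4$, i.e.\ $s_2\le (8s_0+9n+4t+4)/19$. Plugging this into the displayed expression for $2s+t$ and simplifying gives
\[
2s+t \;\ge\; 2s_0+2n+t-\frac{16s_0+18n+8t+8}{19} \;=\; \frac{22s_0+20n+11t-8}{19}\;\ge\;\frac{20n-8}{19},
\]
the final inequality using only $s_0\ge 0$ and $t\ge 0$. This establishes the bound for every $n=2k\ge 4$, with no case distinction.

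The genuine content is already isolated in Lemma~\ref{lem:s01} (the adaptation of the charging scheme of Dumitrescu et al.), so what remains are bookkeeping points rather than obstacles: one must check that the perturbation in Section~\ref{sec:lower} leaves the parameters $s_0,s_1,s_2,t$ entering Lemma~\ref{lem:s01} unchanged, get the incidence count $s_1+2s_2=n$ exactly right (an inequality $s_1+2s_2\ge n$ would in fact already suffice, since it points the right way), and track the sign carefully when replacing $s_2$ by its upper bound. None of these is delicate; once Lemma~\ref{lem:s01} is in hand the proof is two lines of algebra.
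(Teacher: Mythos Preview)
Your proposal is correct and follows essentially the same approach as the paper: combine the incidence identity $s_1+2s_2=n$ with the charging inequality $s_2\le 8s_0+9s_1+4(t+1)$ from Lemma~\ref{lem:s01}, then reduce to linear arithmetic using $s_0,t\ge 0$. Your elimination of $s_1$ followed by bounding $s_2$ is arguably a cleaner presentation of the same computation than the paper's chain of inequalities, but the content is identical.
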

\begin{proof}
The combination of Lemma~\ref{lem:s01} and $n=s_1+2s_2$ yields
\begin{eqnarray*}
2s+t&=&2(s_0+s_1+s_2)+t\\
&\geq & (s_0+s_1+t)+(s_1+2s_2)\\
& = & \frac{2[8s_0+9s_1+4(t+1)] + 3s_0 + s_1 + 11t - 8}{19} + n\\
&\geq & \frac{2s_2 + s_1 - 8}{19}+ n\\
& = & \frac{n-8}{19} +n\\
& = & \frac{20n-8}{19},
\end{eqnarray*}
as claimed.
\end{proof}

We are now ready to prove the main result of this section.

\begin{theorem}\label{thm:2019}
For every integer $k\geq 5$, there exists a finite set of $k$-colored points in the plane such that every perfect rainbow polygon has at least $\frac{40\lfloor (k-1)/2 \rfloor -8}{19}$ vertices.
\end{theorem}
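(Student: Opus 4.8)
The plan is to leverage Theorem~\ref{the:lowerrainbow} together with the lower bound on $2s+t$ for covering trees provided by Lemma~\ref{lem:lb}. The idea is that Theorem~\ref{the:lowerrainbow} turns any point set $S$ of $m$ points in strong general position into an $(m+1)$-colored point set $\widehat{S}$ whose minimum perfect rainbow polygon has at least $m'$ vertices, where $m'=\min(2s'+t')$ over all noncrossing covering trees $T'$ of $S$. So I would take $S$ to be the Dumitrescu--Kano--Tóth--\dots\ point set of $n=2\ell$ points from~\cite{DumitrescuGKT14}, which by Lemma~\ref{lem:lb} satisfies $2s+t\geq (20n-8)/19$ for every covering tree. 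Applying Theorem~\ref{the:lowerrainbow} with this $S$ (of size $m = 2\ell$) produces a $(2\ell+1)$-colored point set $\widehat{S}$ for which every perfect rainbow polygon has at least $(20\cdot 2\ell - 8)/19 = (40\ell-8)/19$ vertices.

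Next I would match the parameters. Given a target $k\geq 5$, I want a $k$-colored construction, so I set $2\ell + 1 = k$ if $k$ is odd, i.e.\ $\ell = (k-1)/2$; if $k$ is even, I can still use the construction for $k-1$ colors (which is odd) and simply add one more one-element color class placed in ``general enough'' position far away or inside, so that it costs nothing extra — or, more cleanly, just observe that $\operatorname{rb-index}$ is nondecreasing in $k$ in the sense needed, so it suffices to handle $k$ with $\lfloor (k-1)/2\rfloor = \ell$. In either case one gets $\ell = \lfloor (k-1)/2\rfloor$, and the lower bound becomes $(40\lfloor (k-1)/2\rfloor - 8)/19$, exactly as claimed. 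I should double-check the boundary case $k=5$: then $\ell = 2$, $n = 4$, and Lemma~\ref{lem:lb} requires $n = 2k \geq 4$ (in its own notation), which holds with equality, so the hypothesis is met.

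The key steps in order: (1) invoke the construction of~\cite{DumitrescuGKT14} with parameter $\ell = \lfloor (k-1)/2\rfloor$ to get a set $S$ of $n = 2\ell$ points in strong general position; (2) invoke Lemma~\ref{lem:lb} to conclude that every noncrossing covering tree $T'$ of $S$ has $2s'+t' \geq (20n-8)/19 = (40\ell - 8)/19$, hence the minimizer $m'$ over all such trees satisfies $m' \geq (40\ell-8)/19$; (3) invoke Theorem~\ref{the:lowerrainbow} to obtain a $(2\ell+1)$-colored point set $\widehat{S}$ in strong general position every perfect rainbow polygon of which has at least $m' \geq (40\ell-8)/19$ vertices; (4) reconcile the color count with $k$ — for odd $k$, $2\ell+1 = k$ directly; for even $k$, add a single extra one-element color class (e.g.\ a point in strong general position that does not decrease the rainbow index, for instance placed so that any rainbow polygon must already enclose it or is forced to have at least as many vertices) to go from $k-1$ to $k$ colors without lowering the bound.

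The main obstacle I anticipate is step (4): making the parity adjustment rigorous. Adding a color class cannot decrease $\operatorname{rb-index}(k)$ in the worst-case sense — intuitively more colors means more constraints — but one has to phrase this carefully at the level of a single constructed instance rather than the max over instances. The cleanest route is probably to note that if $\widehat{S}$ is a $(k-1)$-colored set forcing rainbow polygons of size $\geq N$, then taking $\widehat{S}' = \widehat{S} \cup \{p\}$ with $p$ a new color placed very far away (or in a position forcing any enclosing polygon to spend at least one extra vertex) yields a $k$-colored set still forcing size $\geq N$; since $\lfloor (k-1)/2\rfloor = \lfloor (k-2)/2\rfloor$ when $k$ is even, the bound $(40\lfloor (k-1)/2\rfloor - 8)/19$ is the same for $k$ and $k-1$, so no strengthening is even needed — the odd case does all the work. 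I would therefore present the proof essentially as: fix $\ell = \lfloor(k-1)/2\rfloor \geq 2$, apply (1)--(3) to get a $(2\ell+1)$-colored instance with the desired bound, and if $2\ell + 1 < k$ (which happens exactly when $k$ is even, with $2\ell+1 = k-1$) append one isolated new-colored point far from the rest; a perfect rainbow polygon for the enlarged set restricted appropriately still must contain the original $\widehat{S}$-part, so it has at least $(40\ell-8)/19$ vertices.
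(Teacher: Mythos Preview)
Your proposal is correct and follows essentially the same approach as the paper: apply Lemma~\ref{lem:lb} to the Dumitrescu et al.\ set of $2\lfloor(k-1)/2\rfloor$ points, feed this into Theorem~\ref{the:lowerrainbow} to produce the colored instance for odd $k$, and for even $k$ append one extra single-point color class. Your worry about step~(4) is overcautious --- the paper handles it in one line, since any perfect rainbow polygon for the enlarged set $\widehat{S}'$ contains exactly one point of each of the original colors and is therefore already a perfect rainbow polygon for $\widehat{S}$, so no special placement of the new point is needed.
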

\begin{proof}
	Assume first that $k$ is odd, and let $S$ be the set of $k-1=2j\geq 4$ points from~\cite{DumitrescuGKT14}. If $T$ is a noncrossing covering tree for $S$ minimizing $2s+t = m$, then by Theorem~\ref{the:lowerrainbow}, there exists a $k$-colored point set $\widehat{S}$ built from $S$ such that every perfect rainbow polygon for $\widehat{S}$ has at least $m$ vertices. By Lemma~\ref{lem:lb}, every noncrossing covering tree of $S$ satisfies $2s+t \ge \frac{20(k-1)-8}{19}$, hence every perfect rainbow polygon for $\widehat{S}$ has at least $\frac{20(k-1)-8}{19} = \frac{40\lfloor (k-1)/2 \rfloor -8}{19}$ vertices.

Assume now that $k$ is even. From the $(k-1)$-colored point set $\widehat{S}$ built previously, we can obtain a $k$-colored point set $\widehat{S}'$ by adding a new point with a different color. Since every perfect rainbow polygon for $\widehat{S}$ has at least $\frac{20(k-2)-8}{19} = \frac{40\lfloor (k-1)/2 \rfloor -8}{19}$ vertices, then every perfect rainbow polygon for $\widehat{S}'$ also has at least $\frac{40\lfloor (k-1)/2 \rfloor -8}{19}$ vertices.
\end{proof}

\section{Conclusions}

In this paper, we studied the perfect rainbow polygon problem and we proved that the rainbow index of $k$ satisfies $\frac{40\lfloor (k-1)/2 \rfloor -8}{19}\leq\operatorname{rb-index}(k)\leq 10 \lfloor\frac{k}{7}\rfloor + 11$, for $k\ge 5$. We also showed that $k=7$ is the first value such that  $\operatorname{rb-index}(k) \ne k$. Our bounds are based on the equivalence between perfect rainbow polygons and noncrossing covering trees.

Several open questions arise in relation to this problem. For instance, we conjecture that given a colored point set $S$, finding a minimum perfect rainbow polygon for $S$ is NP-hard. Another interesting question is to close the gap between the lower and upper bounds on the rainbow index.

\section*{Acknowledgments}

David Flores, David Orden, Javier Tejel, Jorge Urrutia, and Birgit Vogtenhuber are supported by the H2020-MSCA-RISE project 734922-CONNECT.
Research by David Flores-Pe\~naloza was supported by the grant UNAM PAPIIT IN117317.
Research by Mikio Kano was supported by JSPS KAKENHI Grant Number 16K05248.
Research by Leonardo Mart\'\i nez-Sandoval was supported by the grant ANR-17-CE40-0018 of the French National Research Agency ANR (project CAPPS).
Research by David Orden was supported by project MTM2017-83750-P of the Spanish Ministry of Science (AEI/FEDER, UE) and by Project PID2019-104129GB-I00 / AEI / 10.13039/501100011033 of the Spanish Ministry of Science and Innovation. Research by Javier Tejel was supported by MINECO project MTM2015-63791-R, Gobierno de Arag\'on under Grant E41-17R (FEDER), and Project PID2019-104129GB-I00 / AEI / 10.13039/501100011033 of the Spanish Ministry of Science and Innovation.
Research by Csaba D. T\'oth was supported by NSF awards CCF-1422311, CCF-1423615, and DMS-1800734.
Research by Jorge Urrutia was supported by UNAM project PAPIIT IN102117.
Research by Birgit Vogtenhuber was supported by the Austrian Science Fund within the collaborative DACH project \emph{Arrangements and Drawings} as FWF project I 3340-N35.

\bibliographystyle{plainurl}
\bibliography{rainbow}

\end{document}